
\documentclass[english,11pt]{article}
\usepackage[T1]{fontenc}
\usepackage[latin9]{inputenc}
\usepackage{geometry}
\geometry{verbose,tmargin= 1in,lmargin= 1in,rmargin= 1in}
\usepackage{babel}
\usepackage{amsthm}
\usepackage{amsmath}
\usepackage{amssymb}
\usepackage[unicode= true,pdfusetitle,
 bookmarks= true,bookmarksnumbered= false,bookmarksopen= false,
 breaklinks= false,pdfborder= {0 0 1},backref= false,colorlinks= false]
 {hyperref}
\usepackage{breakurl, mathrsfs, appendix}
\usepackage{fancyhdr}
\usepackage{fullpage}
\usepackage{graphicx, subcaption}
\usepackage{color}

\usepackage[firstinits=false, backend=bibtex, maxbibnames=99]{biblatex}
\addbibresource{rpca.bib}

\makeatletter

\newlength{\widebarargwidth}
\newlength{\widebarargheight}
\newlength{\widebarargdepth}
\DeclareRobustCommand{\widebar}[1]{%
	\settowidth{\widebarargwidth}{\ensuremath{#1}}%
	\settoheight{\widebarargheight}{\ensuremath{#1}}%
	\settodepth{\widebarargdepth}{\ensuremath{#1}}%
	\addtolength{\widebarargwidth}{-0.3\widebarargheight}%
	\addtolength{\widebarargwidth}{-0.3\widebarargdepth}%
	\makebox[0pt][l]{\hspace{0.3\widebarargheight}%
		\hspace{0.3\widebarargdepth}%
		\addtolength{\widebarargheight}{0.3ex}%
		\rule[\widebarargheight]{0.95\widebarargwidth}{0.1ex}}%
	{#1}}


\newcommand{\order}{\ensuremath{\mathcal{O}}}
\newcommand{\real}{\ensuremath{\mathbb{R}}}

\newcommand{\argmin}{\arg\!\min}

\newcommand{\inprod}[2]{\ensuremath{\langle #1 , \, #2 \rangle}}
\newcommand{\trinprod}[2]{\ensuremath{\langle \!\langle {#1}, \; {#2}
		\rangle \!\rangle}}

\newcommand{\row}[2]{\ensuremath{#1_{(#2,\cdot)}} }
\newcommand{\column}[2]{\ensuremath{#1_{(\cdot,#2)}} }
\newcommand{\entry}[3]{\ensuremath{#1_{(#2,#3)}}}

\newcommand{\matsnorm}[2]{|\!|\!| #1 | \! | \!|_{{#2}}}
\newcommand{\vecnorm}[2]{\| #1\|_{#2}}
\newcommand{\abs}[1]{| #1 |}
\newcommand{\twonorm}[1]{\vecnorm{#1}{2}}
\newcommand{\zeronorm}[1]{\vecnorm{#1}{0}}
\newcommand{\onenorm}[1]{\vecnorm{#1}{1}}
\newcommand{\opnorm}[1]{\ensuremath{\matsnorm{#1}{\mbox{\tiny{op}}}}}
\newcommand{\nucnorm}[1]{\ensuremath{\matsnorm{#1}{\mbox{\tiny{nuc}}}}}
\newcommand{\fronorm}[1]{\ensuremath{\matsnorm{#1}{\mbox{\tiny{F}}}}}
\newcommand{\infnorm}[1]{\ensuremath{\vecnorm{#1}{\infty}}}
\newcommand{\twoinfnorm}[1]{\ensuremath{\matsnorm{#1}{2,\infty}}}

\newcommand{\Loss}{\ensuremath{\mathcal{L}}}


\newcommand{\PlM}{\ensuremath{M}}
\newcommand{\PIU}{\ensuremath{U}}
\newcommand{\PIV}{\ensuremath{V}}
\newcommand{\PIS}{\ensuremath{S}}

\newcommand{\PIL}{\ensuremath{L}}
\newcommand{\PIR}{\ensuremath{R}}

\newcommand{\USet}{\ensuremath{\mathcal{U}}}
\newcommand{\VSet}{\ensuremath{\mathcal{V}}}

\newcommand{\Proj}{\ensuremath{\Pi}}

\newcommand{\SStar}{\ensuremath{ S^{*} }}
\newcommand{\MStar}{\ensuremath{ M^{*} }}
\newcommand{\InitS}{\ensuremath{S_{\text{init}}}}
\newcommand{\OptU}{\ensuremath{U^*}}
\newcommand{\OptV}{\ensuremath{V^*}}

\newcommand{\OptVT}{\ensuremath{V^{*\top}}}

\newcommand{\UniqueU}{\ensuremath{U_{\pi^*}}}
\newcommand{\UniqueV}{\ensuremath{V_{\pi^*}}}
\newcommand{\DeltaU}{\ensuremath{\Delta_{U}}}
\newcommand{\DeltaV}{\ensuremath{\Delta_{V}}}
\newcommand{\RotateM}{\ensuremath{Q}}
\newcommand{\RotateSet}[1]{\ensuremath{\mathbb{Q}_{#1}}}
\newcommand{\OptSet}{\ensuremath{\mathcal{E}}}
\newcommand{\ObserveSet}{\Phi}
\newcommand{\ObserveSparse}{\Omega^*_o}

\newcommand{\Trace}{\text{Tr}}

\newcommand{\ObserveMat}{\ensuremath{Y}}

\newcommand{\SVMat}{\ensuremath{\Sigma}}

\newcommand{\Sparset}[1]{\ensuremath{\mathcal{S}}_{#1}}
\newcommand{\Neighbor}[1]{\ensuremath{\mathbb{B}_{2}\left(#1\right)}}
\newcommand{\frobnorm}[1]{\ensuremath{\fronorm{#1}}}

\newcommand{\UnifSparse}[2]{\ensuremath{\mathcal{T}_{#2}\left[ #1\right]}}


\usepackage{algorithmicx}
\usepackage{algorithm}
\usepackage{algpseudocode}
\algnewcommand\algorithmicinput{\textbf{INPUT:}}
\algnewcommand\INPUT{\item[\algorithmicinput]}
\algnewcommand\algorithmicoutput{\textbf{OUTPUT:}}
\algnewcommand\OUTPUT{\item[\algorithmicoutput]}
\algnewcommand\algorithmicremark{\textsl{//}}
\algnewcommand\REMARK{\item[\algorithmicremark]}

\theoremstyle{plain}
\newtheorem{theorem}{\protect\theoremname}
\theoremstyle{remark}
\newtheorem{remark}{\protect\remarkname}
\theoremstyle{plain}
\newtheorem{lemma}{\protect\lemmaname}
\theoremstyle{plain}

\theoremstyle{plain}
\newtheorem{corollary}{\protect\corolaryname}

\providecommand{\lemmaname}{Lemma}
\providecommand{\remarkname}{Remark}
\providecommand{\theoremname}{Theorem}
\providecommand{\propositionname}{Proposition}
\providecommand{\corolaryname}{Corollary}

\makeatletter
\long\def\@makecaption#1#2{
        \vskip 0.8ex
        \setbox\@tempboxa\hbox{\small {\bf #1:} #2}
        \parindent 1.5em  
        \dimen0=\hsize
	       \advance\dimen0 by -3em
        \ifdim \wd\@tempboxa >\dimen0
                \hbox to \hsize{
                        \parindent 0em
                        \hfil 
                        \parbox{\dimen0}{\def\baselinestretch{0.96}\small
                                {\bf #1.} #2
                                } 
                        \hfil}
        \else \hbox to \hsize{\hfil \box\@tempboxa \hfil}
        \fi
        }
\makeatother


\begin{document}

\begin{center} {\LARGE{\bf{Fast Algorithms for Robust PCA via Gradient Descent
}}} \\

\vspace*{.3in}

{\large{
\begin{tabular}{ccccccc}
Xinyang Yi$^\ast$ && Dohyung Park$^\ast$ && Yudong Chen$^{\dagger}$ && Constantine Caramanis$^\ast$ 
\end{tabular}

\vspace*{.1in}

\begin{tabular}{ccc}
The University of Texas at Austin$^{\ast}$
& &  Cornell University$^{\dagger}$
\end{tabular}

\vspace*{.2in}

\begin{tabular}{c}
{\texttt{$\{$yixy,dhpark,constantine$\}$@utexas.edu$^{\ast}$ $\quad$ yudong.chen@cornell.edu$^{\dagger}$}}
\end{tabular} 
}}

\vspace*{.2in}


\vspace*{.2in}
\begin{abstract}
We consider the problem of Robust PCA  in the fully and partially observed settings. Without corruptions, this is the well-known matrix completion problem. From a statistical standpoint this problem has been recently well-studied, and conditions on when recovery is possible (how many observations do we need, how many corruptions can we tolerate) via polynomial-time algorithms is by now understood. This paper presents and analyzes a non-convex optimization approach that greatly reduces the computational complexity of the above problems, compared to the best available algorithms. In particular, in the fully observed case, with $r$ denoting rank and $d$ dimension, we reduce the complexity from $\order(r^2d^2\log(1/\varepsilon))$ to $\order(rd^2\log(1/\varepsilon))$ -- a big savings when the rank is big. For the partially observed case, we show the complexity of our algorithm is no more than $\order(r^4d \log d \log(1/\varepsilon))$. Not only is this the best-known run-time for a provable algorithm under partial observation, but in the setting where $r$ is small compared to $d$, it also allows for near-linear-in-$d$ run-time that can be exploited in the fully-observed case as well, by simply running our algorithm on a subset of the observations. 
\end{abstract}
\end{center}


\section{Introduction}
{\em Principal component analysis} (PCA) aims to find a low rank subspace that best-approximates a data matrix $Y \in \real^{d_1 \times d_2}$. The simple and standard method of PCA by {\em singular value decomposition} (SVD) fails in many modern data problems due to missing and corrupted entries, as well as sheer scale of the problem. Indeed, SVD is highly sensitive to outliers by virtue of the squared-error criterion it minimizes. Moreover, its running time scales as $\order(r d^2)$ to recover a rank $r$ approximation of a $d$-by-$d$ matrix. 

While there have been recent results developing provably robust algorithms for PCA (e.g., \cite{candes2011robust,Xu2012RPCA}), the running times range from $\order(r^2 d^2)$ to $\order(d^3)$\footnote{For precise dependence on error and other factors, please see details below.} and hence are significantly worse than SVD. Meanwhile, the literature developing sub-quadratic algorithms for PCA (e.g., \cite{frieze2004fast, clarkson2013low,bhojanapalli2015tighter}) seems unable to guarantee robustness to outliers or missing data.

Our contribution lies precisely in this area: provably robust algorithms for PCA with improved run-time. Specifically, we provide an efficient algorithm with running time that matches SVD while nearly matching the best-known robustness guarantees. In the case where rank is small compared to dimension, we develop an algorithm with running time that is nearly linear in the dimension. This last algorithm works by subsampling the data, and therefore we also show that our algorithm solves the Robust PCA problem with partial observations (a generalization of matrix completion and Robust PCA). 

\subsection{The Model and Related Work} 
We consider the following setting for robust PCA. Suppose we are given a matrix $Y \in \real^{d_1 \times d_2}$ that has decomposition $Y = \MStar + \SStar$, where $\MStar$ is  a rank $r$ matrix and $\SStar$ is a sparse corruption matrix containing entries with arbitrary magnitude. The goal is to recover $\MStar$ and $\SStar$ from $Y$. To ease notation, we let $d_1 = d_2 = d$ in the remainder of this section. 

Provable solutions for this model are first provided in the works of \cite{chandrasekaran2011rank} and \cite{candes2011robust}. They propose to solve this problem by {\em convex relaxation}:
\begin{equation} \label{eq:convex_rpca}
\min_{M, S} \nucnorm{M} + \lambda \|S\|_1, ~\text{s.t.}~ Y = M + S,
\end{equation}
where $\nucnorm{M}$ denotes the nuclear norm of $M$. Despite analyzing the same method, the corruption models in \cite{candes2011robust} and \cite{chandrasekaran2011rank} differ. In \cite{candes2011robust}, the authors consider the setting where the entries of $\MStar$ are corrupted at random with probability $\alpha$. They show their method succeeds in exact recovery with $\alpha$ as large as $0.1$, which indicates they can tolerate a constant fraction of corruptions. Work in \cite{chandrasekaran2011rank} considers a {\em deterministic corruption model}, where nonzero entries of $\SStar$ can have arbitrary position, but the sparsity of each row and column does not exceed $\alpha d$. They prove that for exact recovery, it can allow $\alpha = \order(1/(\mu r\sqrt{d}))$. This was subsequently further improved to $\alpha = \order(1/(\mu r))$, which is in fact optimal  \cite{chen2011LSarxiv, hsu2011robust}. Here, $\mu$ represents the incoherence of $\MStar$ (see Section \ref{sec:problem_setup} for details). In this paper, we follow this latter line and focus on the deterministic corruption model.

The state-of-the-art solver \cite{LinChenMa10} for \eqref{eq:convex_rpca} has time complexity $\order(d^3/\varepsilon)$ to achieve error $\varepsilon$, and is thus much slower than SVD, and prohibitive for even modest values of $d$. Work in \cite{netrapalli2014non} considers the deterministic corruption model, and improves this running time without sacrificing the robustness guarantee on $\alpha$. They propose an {\em alternating projection} (AltProj) method to estimate the low rank and sparse structures iteratively and simultaneously, and show their algorithm  has complexity $\order(r^2d^2\log(1/\varepsilon))$, which is faster than the convex approach but still slower than SVD.

Non-convex approaches have recently seen numerous developments for applications in low-rank estimation, including alternating minimization (see e.g.\ \cite{jain2013low,hardt2014understanding,gu2016low}) and gradient descent  (see e.g.\ \cite{bhojanapalli2015dropping,chen2015fast, sun2015guaranteed,tu2015low,zhao2015nonconvex,zheng2015convergent}). These works have fast running times, yet do not provide robustness guarantees. One exception is \cite{chen2015fast}, where the authors analyze a row-wise $\ell_1$ projection method for recovering $\SStar$. Their analysis hinges on positive semidefinite $\MStar$, and the algorithm requires prior knowledge of the $\ell_1$ norm of every row of $\SStar$ and is thus prohibitive in practice. Another exception is work \cite{gu2016low}, which analyzes alternating minimization plus an overall sparse projection. Their algorithm is shown to tolerate at most a fraction of $\alpha = \order{(1/(\mu^{2/3} r^{2/3} d))}$ corruptions. As we discuss in Section \ref{sec:contribution}, we can allow $\SStar$ to have much higher sparsity $\alpha = \order{(1/(\mu r^{1.5}))}$, which is close to optimal. 

After the initial post of our paper on arxiv, Cherapanamjeri et~al.\ \cite{cherapanamjeri2016nearly} posted their paper on solving robust PCA with partial observations by a method modified from AltProj. Their approach is shown to have optimal robustness. The sample complexity they established is $\order(\mu^2r^2d\log^2 d\log^2(1/\varepsilon))$, which depends on the estimation error $\varepsilon$, since the analysis of AltProj under partial observations requires sampling splitting. In contrast, the sample complexity $\order(\mu^2r^2d\log d)$ of our approach (see Corollary \ref{cor:rpca_partial}) does not have such dependence. Moreover, their method requires computing rank-$r$ SVD of the observed matrices in every round of iterations. Our algorithm only needs to compute rank-$r$ SVD (approximately) once in the initialization step.

It is worth mentioning other works that obtain provable guarantees of non-convex algorithms or problems including phase retrieval \cite{candes2015phase,chen2015solving,zhang2016provable}, EM algorithms \cite{balakrishnan2014statistical, wang2015high, yi2015regularized}, tensor decompositions \cite{anandkumar2014tensor} and second order method \cite{sun2015nonconvex}. It might be interesting to bring robust considerations to these works.

\subsection{Our Contributions} \label{sec:contribution}
In this paper, we develop efficient non-convex algorithms for robust PCA. We propose a novel algorithm based on the projected gradient method on the factorized space. We also extend it to solve robust PCA in the setting with partial observations, i.e., in addition to gross corruptions, the data matrix has a large number of missing values. Our main contributions are summarized as follows.\footnote{To ease presentation, the discussion here assumes $\MStar$ has constant condition number, whereas our results below show the dependence on condition number explicitly.}
\begin{enumerate}
\item We propose a novel sparse estimator for the setting of deterministic corruptions. For the low-rank structure to be identifiable, it is natural to assume that deterministic corruptions are ``spread out'' (no more than some number in each row/column). We leverage this information in a simple but critical algorithmic idea, that is tied to the ultimate complexity advantages our algorithm delivers.

\item Based on the proposed sparse estimator, we propose a projected gradient method on the matrix factorized space. While non-convex, the algorithm is shown to enjoy linear convergence under proper initialization. Along with a new initialization method, we show that robust PCA can be solved within complexity $\order(rd^2\log(1/\varepsilon))$ while ensuring robustness $\alpha = \order(1/(\mu r^{1.5}))$. Our algorithm is thus faster than the best previous known algorithm by a factor of $r$, and enjoys superior empirical performance as well.

\item Algorithms for Robust PCA with partial observations still rely on a computationally expensive convex approach, as apparently this problem has evaded treatment by non-convex methods. We consider precisely this problem. In a nutshell, we show that our gradient method succeeds (it is guaranteed to produce the subspace of $\MStar$) even when run on no more than $\order(\mu^2r^2d\log d)$ random entries of $Y$. The computational cost is $\order(\mu^3r^4d\log d \log(1/\varepsilon))$. When rank $r$ is small compared to the dimension $d$, in fact this dramatically improves on our bound above, as our cost becomes nearly linear in $d$. We show, moreover, that this savings and robustness to erasures comes at {\em no cost in the robustness guarantee} for the deterministic (gross) corruptions. While this demonstrates our algorithm is robust to both outliers and erasures, it also provides a way to reduce computational costs even in the fully observed setting, when $r$ is small.

\item An immediate corollary of the above result provides a guarantee for exact matrix completion, with general rectangular matrices, using $\order(\mu^2r^2d\log d)$ observed entries and $\order(\mu^3r^4d\log d \log(1/\varepsilon))$ time, thereby improving on existing results in~\cite{chen2015fast,sun2015guaranteed}.
\end{enumerate}

\subsection{Organization and Notation}
The remainder of this paper is organized as follows. In Section \ref{sec:problem_setup}, we formally describe our problem and assumptions. In Section \ref{sec:algorithm}, we present and describe our algorithms for fully (Algorithm \ref{alg:rpca}) and partially (Algorithm \ref{alg:rpca_partial}) observed settings. In Section \ref{sec:analyze_rpca}, we establish theoretical guarantees of Algorithm \ref{alg:rpca}. The theory for partially observed setting are presented in Section \ref{sec:rpca_partial}. The numerical results are collected in Section \ref{sec:num_results}. Sections \ref{sec:proofs}, \ref{sec:proof_lemmas} and Appendix \ref{sec:supporting_lemma} contain all the proofs and technical lemmas.

For any index set $\Omega \subseteq [d_1]\times [d_2]$, we let $\row{\Omega}{i} := \left\{(i, j) \in \Omega ~\big|~ j \in [d_2]\right\}$, $\column{\Omega}{j} := \left\{(i, j) \in \Omega ~\big|~ i \in [d_1]\right\}$. For any matrix $A \in \real^{d_1 \times d_2}$, we denote its projector onto support $\Omega$ by $\Proj_{\Omega}\left(A\right)$, i.e., the $(i,j)$-th entry of $\Proj_{\Omega}\left(A\right)$ is equal to $A$ if $(i,j) \in \Omega$ and zero otherwise. The $i$-th row and $j$-th column of $A$ are denoted by $\row{A}{i}$ and $\column{A}{j}$. The $(i,j)$-th entry is denoted as $\entry{A}{i}{j}$. Operator norm of $A$ is $\opnorm{A}$. Frobenius norm of $A$ is $\frobnorm{A}$. The $\ell_a/\ell_b$ norm of $A$ is denoted by $\matsnorm{A}{b,a}$, i.e., the $\ell_a$ norm of the vector formed by the $\ell_b$ norm of every row. For instance, $\twoinfnorm{A}$ stands for $\max_{i \in [d_1]}\twonorm{\row{A}{i}}$.

\section{Problem Setup} \label{sec:problem_setup}
We consider the problem where we observe a matrix $\ObserveMat \in \real^{d_1\times d_2}$ that satisfies $\ObserveMat = \MStar + \SStar$, where $\MStar$ has rank $r$, and $\SStar$ is corruption matrix with sparse support. Our goal is to recover $\MStar$ and $\SStar$. In the partially observed setting, in addition to sparse corruptions, we have erasures. We assume that each entry of $\MStar + \SStar$ is revealed independently with probability $p \in (0,1)$. In particular, for any $(i,j) \in [d_1]\times [d_2]$, we consider the Bernoulli model where
\begin{equation} \label{eq:Y}
Y_{(i,j)} = \begin{cases}
(\MStar + \SStar)_{(i,j)}, & ~ \text{with probability}~ p; \\
*,&  ~\text{otherwise}.
\end{cases}
\end{equation}
We denote the support of $\ObserveMat$ by $\ObserveSet = \{(i,j) ~|~ Y_{(i,j)} \neq *\}$. Note that we assume $\SStar$ is not adaptive to $\ObserveSet$. As is well-understood thanks to work in matrix completion, this task is impossible in general -- we need to guarantee that $\MStar$ is not both low-rank and sparse. To avoid such identifiability issues, we make the following standard assumptions on $\MStar$ and $\SStar$: (i) $\MStar$ is not near-sparse or ``spiky.'' We impose this by requiring $\MStar$ to be $\mu$-incoherent -- given a singular value decomposition (SVD)\footnote{Throughout this paper, we refer to SVD of rank $r$ matrix by form $\PIL \Sigma \PIR^{\top}$ where $\Sigma \in \real^{r \times r}$.}  $\MStar = \PIL^* \SVMat^* \PIR^{*\top}$, we assume that
\[
\twoinfnorm{\PIL^*} \leq \sqrt{\frac{\mu r}{d_1}}, ~~ \twoinfnorm{\PIR^*} \leq \sqrt{\frac{\mu r}{d_2}}.
\] 
(ii) The entries of $\SStar$ are ``spread out'' -- for $\alpha \in [0,1)$, we assume $\SStar \in \Sparset{\alpha}$, where
\begin{equation} \label{eq:S_alpha}
\Sparset{\alpha} := \left\{ A \in \real^{d_1 \times d_2} ~\big|~ \zeronorm{\row{A}{i}} \leq \alpha d_2~\text{for all}~ i \in [d_1]~; \zeronorm{\column{A}{j}} \leq \alpha d_1~\text{for all}~ j \in [d_2]  \right\}.
\end{equation}
In other words, $\SStar$ contains at most $\alpha$-fraction nonzero entries per row and column.


\section{Algorithms} \label{sec:algorithm}
For both the full and partial observation settings, our method proceeds in two phases. In the first phase, we use a new sorting-based sparse estimator to produce a rough estimate $\InitS$ for $S^*$ based on the observed matrix $\ObserveMat$, and then find a rank $r$ matrix factorized as $\PIU_{0}\PIV_{0}^{\top}$ that is a rough estimate of $\MStar$ by performing SVD on ($\ObserveMat - \InitS$). In the second phase, given $(\PIU_{0}, \PIV_{0})$, we perform an iterative method to produce series $\{(\PIU_t, \PIV_t)\}_{t=0}^{\infty}$. In each step $t$, we first apply our sparse estimator to produce a sparse matrix $\PIS_{t}$ based on $(\PIU_{t}, \PIV_{t})$, and then perform a projected gradient descent step on the low-rank factorized space to produce $(\PIU_{t+1}, \PIV_{t+1})$. This flow is the same for full and partial observations, though a few details differ. Algorithm \ref{alg:rpca} gives the full observation algorithm, and Algorithm \ref{alg:rpca_partial} gives the partial observation algorithm. We now describe the key details of each algorithm.

\paragraph{Sparse Estimation.} A natural idea is to keep those entries of residual matrix $\ObserveMat - \PlM$ that have large magnitude. At the same time, we need to make use of the dispersed property of $\Sparset{\alpha}$ that every column and row contain at most $\alpha$-fraction of nonzero entries. Motivated by these two principles, we introduce the following sparsification operator: For any matrix $A \in \real^{d_1 \times d_2}$: for all $(i,j) \in [d_1]\times [d_2]$, we let
\begin{equation} \label{eq:sparse_estimator}
\UnifSparse{A}{\alpha} := \begin{cases} \entry{A}{i}{j}, & \text{if}~ \abs{\entry{A}{i}{j}} \geq \abs{\row{A}{i}^{(\alpha d_2)}} ~\text{and}~ \abs{\entry{A}{i}{j}} \geq \abs{\column{A}{j}^{(\alpha d_1)}} \\
0, & \text{otherwise}\end{cases},
\end{equation}
where $\row{A}{i}^{(k)}$ and  $\column{A}{j}^{(k)}$ denote the elements of $\row{A}{i}$ and $\column{A}{j}$ that have the $k$-th largest magnitude respectively. In other words, we choose to keep those elements that are simultaneously among the largest $\alpha$-fraction entries in the corresponding row and column. In the case of entries having identical magnitude, we break ties arbitrarily. It is thus guaranteed that $\UnifSparse{A}{\alpha} \in \Sparset{\alpha}$.

\begin{algorithm}[H]
	\caption{Fast RPCA}
	\label{alg:rpca}
	\begin{algorithmic}[1]
		\INPUT Observed matrix $\ObserveMat$ with rank $r$ and corruption fraction $\alpha$; parameters $\gamma, \eta$; number of iterations $T$.
		\vskip .05in
		\REMARK  {\em Phase I: Initialization.}
		\State $\InitS \leftarrow \UnifSparse{Y}{\alpha}$ \hskip .3in  {\em// see \eqref{eq:sparse_estimator} for the definition of $\UnifSparse{\cdot}{\alpha}$.}
		\State $[\PIL, \SVMat, \PIR] \leftarrow \text{SVD}_{r}[Y - \InitS]$ \footnotemark
		\State $\PIU_0 \leftarrow \PIL\SVMat^{1/2}$, $\PIV_0 \leftarrow \PIR\SVMat^{1/2}$. Let $\USet, \VSet$ be defined according to \eqref{eq:set}.
		\vskip .05in
		\REMARK {\em Phase II: Gradient based iterations.}
		\State $\PIU_0 \leftarrow \Proj_{\USet}\left(\PIU_0\right)$, $\PIV_0 \leftarrow \Proj_{\VSet}\left(\PIV_0\right)$
		\For{$t = 0,1,\ldots, T-1$}
		\State $\PIS_{t} \leftarrow \UnifSparse{Y - \PIU_{t}\PIV_{t}^{\top}}{\gamma \alpha}$  
		\State $\PIU_{t+1} \leftarrow \Proj_{\USet}\left(\PIU_{t} - \eta \nabla_{\PIU}\Loss(\PIU_{t}, \PIV_{t}; \PIS_{t}) - \frac{1}{2}\eta\PIU_{t}(\PIU_{t}^{\top}\PIU_{t} - \PIV_{t}^{\top}\PIV_{t})\right)$
		\State $\PIV_{t+1} \leftarrow \Proj_{\VSet}\left(\PIV_{t} - \eta \nabla_{\PIV}\Loss(\PIU_{t}, \PIV_{t}; \PIS_{t}) - \frac{1}{2}\eta\PIV_{t}(\PIV_{t}^{\top}\PIV_{t} - \PIU_{t}^{\top}\PIU_{t})\right)$
		\EndFor
		\OUTPUT $(\PIU_T, \PIV_T)$
	\end{algorithmic}
\end{algorithm} 
\footnotetext{$\text{SVD}_r[A]$ stands for computing a rank-$r$ SVD of matrix $A$, i.e., finding the top $r$ singular values and vectors of $A$. Note that we only need to compute rank-$r$ SVD approximately (see the initialization error requirement in Theorem \ref{thm:initialization}) so that we can leverage fast iterative approaches such as block power method and Krylov subspace methods \cite{saad2011numerical}.}

\paragraph{Initialization.} In the {\em fully observed} setting, we compute $\InitS$ based on $\ObserveMat$ as $\InitS = \UnifSparse{Y}{\alpha}$. In the {\em partially observed} setting with sampling rate $p$, we let $\InitS = \UnifSparse{Y}{2p\alpha}$. In both cases, we then set $\PIU_{0} = \PIL \SVMat^{1/2}$ and $\PIV_{0} = \PIR \SVMat^{1/2}$, where $\PIL \SVMat \PIR^{\top}$ is an SVD of the best rank $r$ approximation of $Y - \InitS$. 

\paragraph{Gradient Method on Factorized Space.} After initialization, we proceed by projected gradient descent. To do this, we define loss functions explicitly in the factored space, i.e., in terms of $\PIU, \PIV$ and $\PIS$:
\begin{eqnarray}
\label{eq:loss} 
\Loss(\PIU, \PIV; \PIS) &:=& \frac{1}{2}\frobnorm{\PIU\PIV^{\top} + \PIS - \ObserveMat}^2, \qquad \mbox{(fully observed)} \\
\label{eq:loss_partial}
\widetilde{\Loss}(\PIU, \PIV; \PIS) &:=& \frac{1}{2p}\frobnorm{\Proj_{\ObserveSet}\left(\PIU\PIV^{\top} + \PIS - \ObserveMat\right)}^2. \qquad \mbox{(partially observed)}
\end{eqnarray}
Recall that our goal is to recover $\MStar$ that satisfies the $\mu$-incoherent condition. Given an SVD $\MStar = \PIL^* \SVMat \PIR^{*\top}$, we expect that the solution $(\PIU, \PIV)$ is close to $(\PIL^*\SVMat^{1/2}, \PIR^*\SVMat^{1/2})$ up to some rotation. In order to serve such $\mu$-incoherent structure, it is natural to put constraints on the row norms of $\PIU, \PIV$ based on $\opnorm{\MStar}$. As $\opnorm{\MStar}$ is unavailable, given $\PIU_0, \PIV_0$ computed in the first phase, we rely on the sets $\USet$, $\VSet$ defined as
\begin{equation} \label{eq:set}
\USet := \left\{A \in \real^{d_1 \times r}~\big|~ \twoinfnorm{A} \leq \sqrt{\frac{2\mu r}{d_1}}\opnorm{\PIU_0}\right\}, ~
\VSet := \left\{A \in \real^{d_2 \times r}~\big|~ \twoinfnorm{A} \leq \sqrt{\frac{2\mu r}{d_2}}\opnorm{\PIV_0}\right\}.
\end{equation}
Now we consider the following optimization problems with constraints:
\begin{eqnarray} \label{eq:opt}
\min_{\PIU \in \USet, \PIV \in \VSet, \PIS \in \Sparset{\alpha}} && \Loss(\PIU, \PIV; \PIS) + \frac{1}{8}\frobnorm{\PIU^{\top}\PIU - \PIV^{\top}\PIV}^2, \qquad \mbox{(fully observed)} \\
\label{eq:opt_partial}
\min_{\PIU \in \USet, \PIV \in \VSet, \PIS \in \Sparset{p\alpha}} && \widetilde{\Loss}(\PIU, \PIV; \PIS) + \frac{1}{64}\frobnorm{\PIU^{\top}\PIU - \PIV^{\top}\PIV}^2. \qquad \mbox{(partially observed)}
\end{eqnarray}
The regularization term in the objectives above is used to encourage that $\PIU$ and $\PIV$ have the same scale. Given $(\PIU_0,\PIV_0)$, we propose the following iterative method to produce series $\{(\PIU_t,\PIV_t)\}_{t=0}^{\infty}$ and $\{\PIS_t\}_{t=0}^{\infty}$. We give the details for the fully observed case -- the partially observed case is similar. For $t = 0,1,\ldots$, we update
$ \PIS_{t} $ using the sparse estimator $\PIS_{t} = \UnifSparse{Y - \PIU_{t}\PIV_{t}^{\top}}{\gamma \alpha}$, followed by a projected gradient update on $ \PIU_{t} $ and $ \PIV_t $
\begin{align*}
\PIU_{t+1} & = \Proj_{\USet}\left(\PIU_{t} - \eta \nabla_{\PIU}\Loss(\PIU_{t}, \PIV_{t}; \PIS_{t}) - \frac{1}{2}\eta\PIU_{t}(\PIU_{t}^{\top}\PIU_{t} - \PIV_{t}^{\top}\PIV_{t})\right), \\
\PIV_{t+1} & = \Proj_{\VSet}\left(\PIV_{t} - \eta \nabla_{\PIV}\Loss(\PIU_{t}, \PIV_{t}; \PIS_{t}) - \frac{1}{2}\eta\PIV_{t}(\PIV_{t}^{\top}\PIV_{t} - \PIU_{t}^{\top}\PIU_{t})\right).
\end{align*}
Here $\alpha$ is the model parameter that characterizes the corruption fraction, $\gamma$ and $\eta$ are algorithmic tunning parameters, which we specify in our analysis.  Essentially, the above algorithm corresponds to applying projected gradient method to optimize \eqref{eq:opt}, where $S$ is replaced by the aforementioned sparse estimator in each step.

\begin{algorithm}[H]
	\caption{Fast RPCA with partial observations}
	\label{alg:rpca_partial}
	\begin{algorithmic}[1]
		\INPUT Observed matrix $\ObserveMat$ with support $\ObserveSet$; parameters $\tau, \gamma, \eta$; number of iterations $T$.
		\vskip .05in
		\REMARK  {\em Phase I: Initialization.}
		\State $\InitS \leftarrow \UnifSparse{\Proj_{\Phi}(Y)}{2p\alpha}$
		\State $[\PIL, \SVMat, \PIR] \leftarrow \text{SVD}_{r}[\frac{1}{p}(Y - \InitS)]$
		\State $\PIU_0 \leftarrow \PIL\SVMat^{1/2}$, $\PIV_0 \leftarrow \PIR\SVMat^{1/2}$. Let $\USet, \VSet$ be defined according to \eqref{eq:set}.
		\vskip .05in
		\REMARK {\em Phase II: Gradient based iterations.}
		\State $\PIU_0 \leftarrow \Proj_{\USet}\left(\PIU_0\right)$, $\PIV_0 \leftarrow \Proj_{\VSet}\left(\PIV_0\right)$
		\For{$t = 0,1,\ldots, T-1$}
		\State $\PIS_{t} \leftarrow \UnifSparse{\Proj_{\ObserveSet}\left(Y - \PIU_{t}\PIV_{t}^{\top}\right)}{\gamma p \alpha}$
		\State $\PIU_{t+1} \leftarrow \Proj_{\USet}\left(\PIU_{t} - \eta \nabla_{\PIU}\widetilde{\Loss}(\PIU_{t}, \PIV_{t}; \PIS_{t}) - \frac{1}{16}\eta\PIU_{t}(\PIU_{t}^{\top}\PIU_{t} - \PIV_{t}^{\top}\PIV_{t})\right)$
		\State $\PIV_{t+1} \leftarrow \Proj_{\VSet}\left(\PIV_{t} - \eta \nabla_{\PIV}\widetilde{\Loss}(\PIU_{t}, \PIV_{t}; \PIS_{t}) - \frac{1}{16}\eta\PIV_{t}(\PIV_{t}^{\top}\PIV_{t} - \PIU_{t}^{\top}\PIU_{t})\right)$
		\EndFor
		\OUTPUT $(\PIU_T, \PIV_T)$
	\end{algorithmic}
\end{algorithm} 


\section{Main Results}

In this section, we establish theoretical guarantees for Algorithm~\ref{alg:rpca} in the fully observed setting and for Algorithm~\ref{alg:rpca_partial} in the partially observed setting.

\subsection{Analysis of Algorithm \ref{alg:rpca}} \label{sec:analyze_rpca}
We begin with some definitions and notation. It is important to define a proper error metric because the optimal solution corresponds to a manifold and there are many distinguished pairs $(\PIU, \PIV)$ that minimize \eqref{eq:opt}. Given the SVD  of the true low-rank matrix $\MStar = \PIL^*\SVMat^* \PIR^{*\top}$, we let $\OptU := \PIL^*\SVMat^{*1/2}$ and $\OptV := \PIR^*\SVMat^{*1/2}$.  We also let $\sigma_1^* \geq \sigma_2^* \geq \ldots \geq \sigma_r^*$ be sorted nonzero singular values of $\MStar$, and denote the condition number of $\MStar$ by $\kappa$, i.e., $\kappa := \sigma_1^*/\sigma_r^*$. We define estimation error $d(\PIU, \PIV; \OptU, \OptV)$ as the minimal Frobenius norm between $(\PIU, \PIV)$ and $(\OptU, \OptV)$ with respect to the optimal rotation, namely
\begin{equation} \label{eq:error_metric}
d(\PIU, \PIV; \OptU, \OptV) := \min_{\RotateM \in \RotateSet{r}} \sqrt{ \frobnorm{\PIU - \OptU \RotateM}^2 + \frobnorm{\PIV - \OptV \RotateM}^2},
\end{equation}
for $\RotateSet{r}$ the set of $r$-by-$r$ orthonormal matrices. This metric controls reconstruction error, as
\begin{equation} \label{eq:error_relation}
\frobnorm{\PIU\PIV^{\top} - \MStar} \lesssim \sqrt{\sigma_1^*}d(\PIU, \PIV; \OptU, \OptV),
\end{equation}
when $d(\PIU, \PIV; \OptU, \OptV) \leq \sqrt{\sigma_1^*}$. We denote the local region around the optimum $(\OptU, \OptV)$ with radius $\omega$ as
\[
\Neighbor{\omega} := \left\{(\PIU,\PIV) \in \real^{d_1\times r} \times \real^{d_2 \times r} ~\big|~ d(\PIU,\PIV;\OptU, \OptV) \leq \omega\right\}.
\]

The next two theorems provide guarantees for the initialization phase and gradient iterations, respectively, of Algorithm \ref{alg:rpca}. The proofs are given in Sections \ref{proof:thm:initialization} and \ref{proof:thm:convergence}.

\begin{theorem}[Initialization]\label{thm:initialization} Consider the paired $(\PIU_0,\PIV_0)$ produced in the first phase of Algorithm \ref{alg:rpca}. If $\alpha \leq 1/(16\kappa \mu r)$,
	we have 
	\[
	d(\PIU_0, \PIV_0; \OptU, \OptV) \leq 28 \sqrt{\kappa} \alpha \mu r\sqrt{r} \sqrt{\sigma_1^*}.
	\]
\end{theorem}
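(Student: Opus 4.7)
The proof will decompose into four steps: (i) bounding the entrywise error of the sparse estimator, (ii) converting this to an operator-norm bound on the residual, (iii) applying Weyl/Eckart--Young to control how much the best rank-$r$ approximation can differ from $\MStar$, and finally (iv) translating a matrix-level Frobenius error into the factor-level distance $d(\cdot;\cdot)$.

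\textbf{Step 1 (entrywise bound on $E := \SStar - \InitS$).} I would first invoke $\mu$-incoherence to observe $\infnorm{\MStar} \le \opnorm{\MStar}\twoinfnorm{\PIL^*}\twoinfnorm{\PIR^*} \le \mu r \sigma_1^*/\sqrt{d_1 d_2}$. Since $Y - \InitS = \MStar + E$, the task is to bound $|E_{(i,j)}|$. There are three cases. On $\mathrm{supp}(\SStar) \cap \mathrm{supp}(\InitS)$ one has $E_{(i,j)} = -\MStar_{(i,j)}$; on $\mathrm{supp}(\InitS)\setminus\mathrm{supp}(\SStar)$ one has $|E_{(i,j)}| = |\MStar_{(i,j)}|$; both bounded by $\mu r \sigma_1^*/\sqrt{d_1 d_2}$. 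The interesting case is $(i,j)\in \mathrm{supp}(\SStar)\setminus\mathrm{supp}(\InitS)$. Here, $(i,j)$ failed the row or column top-$\alpha$ test; WLOG the row test, so there exist $\alpha d_2$ entries $(i,k)$ with $|Y_{(i,k)}| \ge |Y_{(i,j)}|$. Since row $i$ of $\SStar$ has at most $\alpha d_2$ nonzero entries (one of which is $(i,j)$), at least one of these $\alpha d_2$ larger entries sits in a clean column, where $|Y_{(i,k)}| = |\MStar_{(i,k)}| \le \mu r \sigma_1^*/\sqrt{d_1 d_2}$. Hence $|\SStar_{(i,j)}| \le |\MStar_{(i,j)}| + |Y_{(i,j)}| \le 2\mu r\sigma_1^*/\sqrt{d_1 d_2}$. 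Overall, $\infnorm{E} \le 2\mu r \sigma_1^*/\sqrt{d_1 d_2}$.

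\textbf{Step 2 (operator-norm bound).} Since both $\SStar$ and $\InitS$ lie in $\Sparset{\alpha}$, their difference $E$ lies in $\Sparset{2\alpha}$. Combining the entrywise bound with the row/column sparsity gives a row $\ell_1$ norm bound of $4\alpha\mu r\sigma_1^*\sqrt{d_2/d_1}$ and a column $\ell_1$ norm bound of $4\alpha\mu r\sigma_1^*\sqrt{d_1/d_2}$. Using the elementary inequality $\opnorm{E} \le \sqrt{\|E\|_{1\to 1}\,\|E\|_{\infty\to\infty}}$ yields $\opnorm{E} \le 4\alpha\mu r\sigma_1^*$.

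\textbf{Step 3 (rank-$r$ truncation).} Write $M_0 := \PIU_0\PIV_0^\top$ for the best rank-$r$ approximation of $Y - \InitS = \MStar + E$. Weyl's inequality gives $\sigma_{r+1}(\MStar+E) \le \opnorm{E}$, so $\opnorm{M_0 - (\MStar+E)} \le \opnorm{E}$. By the triangle inequality, $\opnorm{M_0 - \MStar} \le 2\opnorm{E} \le 8\alpha\mu r \sigma_1^*$. Since $M_0 - \MStar$ has rank at most $2r$, we also get $\fronorm{M_0 - \MStar} \le \sqrt{2r}\,\opnorm{M_0 - \MStar} \le 8\sqrt{2r}\,\alpha\mu r \sigma_1^*$.

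\textbf{Step 4 (factor-space distance).} The hypothesis $\alpha \le 1/(16\kappa\mu r)$ forces $\opnorm{M_0 - \MStar} \le \sigma_1^*/(2\kappa) = \sigma_r^*/2$, which places $M_0$ in the basin where the standard factorization-distance lemma applies. That lemma (invoked later in Section~\ref{sec:proofs}, and essentially Lemma~5.14 of Tu et~al.) gives
\[
d^2(\PIU_0,\PIV_0;\OptU,\OptV) \;\le\; \frac{2}{\sqrt{2}-1}\cdot\frac{\fronorm{M_0 - \MStar}^2}{\sigma_r^*}.
\]
Substituting the Frobenius bound from Step 3 and simplifying $\sigma_1^*/\sigma_r^* = \kappa$ produces a bound of the form $C\sqrt{\kappa}\,\alpha\mu r\sqrt{r}\sqrt{\sigma_1^*}$, with $C = 8\sqrt{2}\cdot\sqrt{2/(\sqrt{2}-1)} \le 28$.

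The main obstacle is Step~1: the third case, where a corrupted entry is omitted by $\mathcal{T}_\alpha$, requires a tight pigeonhole that only narrowly succeeds (the $\alpha d_2$ row-dominators just barely cover the at-most-$\alpha d_2$ corruptions, leaving at least one clean witness). Care is needed regarding strict versus non-strict inequalities and the arbitrary tie-breaking rule in the definition of $\mathcal{T}_\alpha$. Once the entrywise bound is in place, Steps~2--4 are standard: a sparse-matrix operator-norm estimate, Weyl plus Eckart--Young, and a well-known lemma comparing $\fronorm{\PIU\PIV^\top-\MStar}$ to $d(\PIU,\PIV;\OptU,\OptV)$ in the local basin.
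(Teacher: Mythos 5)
Your proposal is correct and follows essentially the same route as the paper's proof: the same three-case pigeonhole argument yielding $\infnorm{\SStar - \InitS} \le 2\infnorm{\MStar}$, the same operator-norm bound for row/column-sparse matrices (your Schur-test inequality $\opnorm{E}\le\sqrt{\|E\|_{1\to1}\|E\|_{\infty\to\infty}}$ is exactly the content of the paper's Lemma \ref{lem:op_to_inf}), Weyl's inequality plus the triangle inequality for the rank-$r$ truncation, and Lemma 5.14 of Tu et al.\ combined with the rank-$2r$ Frobenius-to-operator conversion. The constants work out identically, so there is nothing to add.
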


\begin{theorem}[Convergence] \label{thm:convergence}
	Consider the second phase of Algorithm \ref{alg:rpca}. Suppose we choose $\gamma = 2$ and $\eta = c/\sigma_1^*$ for any $c \leq 1/36$. There exist constants $c_1, c_2$ such that when $\alpha \leq c_1/(\kappa^2 \mu r)$,
	given any $(\PIU_0, \PIV_0) \in \Neighbor{c_2\sqrt{\sigma^*_r/\kappa}}$, the iterates $\{(\PIU_t, \PIV_t)\}_{t=0}^{\infty}$ satisfy
	\[
	d^2(U_t, V_t; \OptU, \OptV) \leq \left(1 - \frac{c}{8\kappa}\right)^td^2(U_0,V_0;\OptU, \OptV).
	\]
\end{theorem}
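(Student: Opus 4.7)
The plan is to prove a one-step geometric contraction $d_{t+1}^2 \leq (1 - c/(8\kappa)) d_t^2$ under the induction hypothesis $(\PIU_t, \PIV_t) \in \Neighbor{c_2 \sqrt{\sigma_r^*/\kappa}}$, writing $d_t := d(\PIU_t, \PIV_t; \OptU, \OptV)$. Let $\RotateM_t \in \RotateSet{r}$ attain the minimum in \eqref{eq:error_metric}, and set $\DeltaU := \PIU_t - \OptU \RotateM_t$, $\DeltaV := \PIV_t - \OptV \RotateM_t$, so that $d_t^2 = \frobnorm{\DeltaU}^2 + \frobnorm{\DeltaV}^2$. By $\mu$-incoherence of $\MStar$, the anchors $\OptU \RotateM_t$ and $\OptV \RotateM_t$ lie in $\USet$ and $\VSet$ respectively, hence projection is non-expansive toward them. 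It therefore suffices to bound the pre-projection gradient step
\[
d_{t+1}^2 \leq d_t^2 - 2\eta\bigl(\inprod{\nabla_\PIU F_t}{\DeltaU} + \inprod{\nabla_\PIV F_t}{\DeltaV}\bigr) + \eta^2\bigl(\frobnorm{\nabla_\PIU F_t}^2 + \frobnorm{\nabla_\PIV F_t}^2\bigr),
\]
where $F(\PIU, \PIV; \PIS) := \Loss(\PIU, \PIV; \PIS) + \tfrac{1}{8}\frobnorm{\PIU^\top \PIU - \PIV^\top \PIV}^2$ is the objective in \eqref{eq:opt} and $\nabla_\PIU F_t, \nabla_\PIV F_t$ denote its partial gradients at $(\PIU_t, \PIV_t; \PIS_t)$.

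The first ingredient I would prove is a sparse-estimator perturbation lemma: with $\gamma = 2$, the error $E_t := \PIS_t - \SStar$ lies in $\Sparset{3\alpha}$ and $\infnorm{E_t} \leq 2 \infnorm{\PIU_t \PIV_t^\top - \MStar}$. The argument is that any ``false positive'' of $\UnifSparse{\cdot}{2\alpha}$ outside $\text{supp}(\SStar)$ equals the residual $\MStar - \PIU_t \PIV_t^\top$ at that entry, while any genuine entry of $\SStar$ dropped by the estimator must be smaller in magnitude than some residual entry in its own row or column (since $\SStar$ occupies at most $\alpha d$ of the $2\alpha d$ surviving entries per row/column). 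Combining this with the $\twoinfnorm$-control from projection onto $\USet, \VSet$ and the row/column-sparsity estimate $\opnorm{E_t} \leq \alpha d \cdot \infnorm{E_t}$ delivers the operator-norm bound needed downstream.

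Next I split $\nabla F(\PIU_t, \PIV_t; \PIS_t) = \nabla F(\PIU_t, \PIV_t; \SStar) + G_t$, where $G_t$ collects the $E_t$-cross terms (for example its $\PIU$-block is $E_t \PIV_t$). For the clean term I invoke the standard factored-gradient-descent restricted strong convexity/smoothness estimate (in the spirit of Tu et al.\ and Zheng--Lafferty):
\[
\inprod{\nabla_\PIU F(\PIU_t, \PIV_t; \SStar)}{\DeltaU} + \inprod{\nabla_\PIV F(\PIU_t, \PIV_t; \SStar)}{\DeltaV} \gtrsim \sigma_r^* d_t^2 + \tfrac{1}{4}\frobnorm{\PIU_t \PIV_t^\top - \MStar}^2,
\]
together with $\frobnorm{\nabla F(\PIU_t, \PIV_t; \SStar)}^2 \lesssim \sigma_1^* d_t^2$, where the balancing regularizer $\tfrac{1}{8}\frobnorm{\PIU^\top \PIU - \PIV^\top \PIV}^2$ is essential for handling the rotational ambiguity. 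The perturbation contributes pieces such as $\inprod{E_t}{\DeltaU \PIV_t^\top}$, which I bound by operator/nuclear duality, $|\inprod{E_t}{\DeltaU \PIV_t^\top}| \leq \opnorm{E_t}\nucnorm{\DeltaU \PIV_t^\top} \leq \sqrt{r}\,\opnorm{E_t}\,\sqrt{\sigma_1^*}\,d_t$, exploiting the rank-$r$ structure of $\DeltaU \PIV_t^\top$ to avoid a $\sqrt{d}$ loss. Under $\alpha \leq c_1/(\kappa^2 \mu r)$ the perturbation is dominated by the curvature term, and parallel estimates handle the perturbation contribution to $\frobnorm{\nabla F}^2$. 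Picking $\eta = c/\sigma_1^*$ with $c \leq 1/36$ then yields the contraction factor $1 - c/(8\kappa)$, and $d_{t+1} \leq d_t$ preserves the neighborhood hypothesis inductively.

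The main obstacle is the sparse-estimator perturbation: $E_t$ is coupled to the current iterate through $\PIU_t \PIV_t^\top$, and controlling its inner-product contribution to the gradient requires the simultaneous use of (i) $\infnorm$-control of $E_t$ via the incoherence of $\MStar$ and the iterate (the latter enforced by $\USet, \VSet$), (ii) the row/column-sparsity estimate $\opnorm{E_t} \leq \alpha d \,\infnorm{E_t}$, and (iii) the low-rank structure of $\DeltaU \PIV_t^\top$ needed to dualize through nuclear norm without incurring a $\sqrt{d}$ factor. The scaling $\alpha \lesssim 1/(\kappa^2 \mu r)$ is precisely the threshold at which these three ingredients combine to make the perturbation absorbable into the leading curvature term.
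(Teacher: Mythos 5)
Your overall architecture (one-step contraction via non-expansive projection, induction on the neighborhood, standard factored RSC/smoothness for the clean gradient plus the balancing regularizer) matches the paper, and your sparse-estimator perturbation lemma ($E_t := \PIS_t - \SStar \in \Sparset{3\alpha}$ with $\infnorm{E_t} \leq 2\infnorm{\PIU_t\PIV_t^{\top} - \MStar}$) is itself correct. The gap is in how you propose to absorb the perturbation. Bounding $\abs{\trinprod{E_t}{\DeltaU\PIV_t^{\top}}}$ by $\opnorm{E_t}\nucnorm{\DeltaU\PIV_t^{\top}}$, with $\opnorm{E_t} \lesssim \alpha\sqrt{d_1d_2}\,\infnorm{E_t}$, cannot deliver the theorem. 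You have two choices for $\infnorm{M_t - \MStar}$, and both fail: (i) using the constraint-set radii $\twoinfnorm{\DeltaU} \lesssim \sqrt{\mu r\sigma_1^*/d_1}$ gives $\infnorm{M_t-\MStar} \lesssim \mu r\sigma_1^*/\sqrt{d_1d_2}$, hence $\opnorm{E_t} \lesssim \alpha\mu r\sigma_1^*$ — a quantity that does \emph{not} vanish as $d_t \to 0$, so the perturbation term is $O(\alpha\mu r^{1.5}\sigma_1^{*3/2}\,d_t)$, linear rather than quadratic in $d_t$; this yields convergence only to a ball, not the exact geometric decay $d_t^2 \le (1-c/(8\kappa))^t d_0^2$. (ii) Using $\twonorm{\Delta_{V(j,\cdot)}} \le \frobnorm{\DeltaV}$ instead gives $\infnorm{E_t} \lesssim \sqrt{\mu r\sigma_1^*/(d_1\wedge d_2)}\,d_t$, which makes the perturbation quadratic in $d_t$ but introduces a factor $\alpha\sqrt{\mu r d}\,$, forcing $\alpha \lesssim 1/\sqrt{d}$ — far below the claimed $\alpha \lesssim 1/(\kappa^2\mu r)$. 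The $\sqrt{d}$ you worry about does not come from the nuclear norm of $\DeltaU\PIV_t^{\top}$; it comes from converting entrywise control of $E_t$ into an operator norm over a support of size $\alpha d$ per row/column.

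The paper's resolution is to never pass through $\opnorm{E_t}$ inside the iterations. Instead it works entrywise and exploits the \emph{double} sparsity of the support together with incoherence: each entry satisfies $\abs{\entry{(M_t-\MStar)}{i}{j}} \le 3\sqrt{\mu r\sigma_1^*/d_1}\twonorm{\Delta_{V(j,\cdot)}} + 3\sqrt{\mu r\sigma_1^*/d_2}\twonorm{\Delta_{U(i,\cdot)}}$, and summing the squares over a set with at most $\alpha d_2$ entries per row and $\alpha d_1$ per column counts each row norm at most $\alpha d$ times, giving $\frobnorm{\Proj_{\Omega}(M_t-\MStar)}^2 \le 18\alpha\mu r\sigma_1^*\delta_t$ — quadratic in $d_t$, no $\sqrt{d}$, and linear in $\alpha$ (Lemma~\ref{lem:projection}). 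For the corruptions missed by the estimator ($\Omega^*\setminus\Omega$), the paper replaces your $\ell_\infty$ bound with the order-statistic bound $\abs{u_i^{((\gamma-1)\alpha d_2)}}^2 \le \twonorm{u_i}^2/((\gamma-1)\alpha d_2)$ followed by the same double-counting, and uses a Young-inequality split with a $\kappa$-dependent weight $\beta$ to trade the resulting $\beta^{-1}\sigma_1^*\delta_t$ term against the curvature $\sigma_r^*\delta_t$. You would need to replace your duality step with this entrywise argument for the proof to close at the stated robustness level.
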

Therefore, using proper initialization and step size, the gradient iteration converges at a linear rate with a constant contraction factor $1 - \order(1/\kappa)$. To obtain relative precision $\varepsilon$ compared to the initial error, it suffices to perform $O(\kappa \log(1/\varepsilon))$ iterations. Note that the step size is chosen according to $1/\sigma_1^*$. When $\alpha \lesssim 1/(\mu \sqrt{\kappa r^3})$,  Theorem \ref{thm:initialization} and the inequality \eqref{eq:error_relation} together imply that $\opnorm{\PIU_0\PIV_0^{\top} - \MStar} \leq \frac{1}{2}\sigma_1^*$. Hence we can set the step size as $\eta = \order(1/\sigma_1(\PIU_0\PIV_0^{\top}))$ using being the top singular value $ \sigma_1(\PIU_0\PIV_0^{\top}) $ of the matrix $ \PIU_0 \PIV_0^\top $ 

Combining Theorems \ref{thm:initialization} and \ref{thm:convergence} implies the following result, proved in Section \ref{proof:cor:rpca}, that provides an overall guarantee for Algorithm \ref{alg:rpca}.

\begin{corollary} \label{cor:rpca}
	Suppose that
	\[
	\alpha \leq c\min\left\{\frac{1}{\mu\sqrt{\kappa r}^3}, \frac{1}{\mu \kappa^2r}\right\}
	\] 
	for some constant $c$. Then for any $\varepsilon \in (0,1)$, Algorithm \ref{alg:rpca} with $T = \order(\kappa \log (1/\varepsilon))$ outputs a pair $(\PIU_{T}, \PIV_{T})$ that satisfies
	\begin{equation} \label{eq:final_error}
	\frobnorm{\PIU_{T}\PIV_{T}^{\top} - \MStar} \leq \varepsilon\cdot\sigma_r^*.
	\end{equation}
\end{corollary}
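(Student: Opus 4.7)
}
The plan is simply to chain Theorem~\ref{thm:initialization} into Theorem~\ref{thm:convergence} and then translate the factor-space error back to Frobenius error on $UV^\top$ via \eqref{eq:error_relation}. The only real work is book-keeping constants so that the single hypothesis
\[
\alpha \;\le\; c\min\!\left\{\frac{1}{\mu(\kappa r)^{3/2}},\; \frac{1}{\mu\kappa^2 r}\right\}
\]
simultaneously implies (i) the initialization hypothesis $\alpha\le 1/(16\kappa\mu r)$ of Theorem~\ref{thm:initialization}, (ii) the sparsity hypothesis $\alpha\le c_1/(\kappa^2\mu r)$ of Theorem~\ref{thm:convergence}, and (iii) that the initial error produced by Theorem~\ref{thm:initialization} lies inside the basin $\Neighbor{c_2\sqrt{\sigma_r^*/\kappa}}$ required by Theorem~\ref{thm:convergence}.

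For (i) and (ii) I just shrink $c$: both are of the form $\alpha\lesssim 1/(\mu\kappa^2 r)$, which is one of the two terms in the min, so choosing $c$ smaller than $\min\{1/16,c_1\}$ does it. The substantive check is (iii). Theorem~\ref{thm:initialization} gives
\[
d(\PIU_0,\PIV_0;\OptU,\OptV) \;\le\; 28\sqrt{\kappa}\,\alpha\,\mu r^{3/2}\sqrt{\sigma_1^*}.
\]
Using the first term in the min, $\alpha\le c/(\mu\kappa^{3/2}r^{3/2})$, the right-hand side telescopes to $28c\,\sqrt{\sigma_1^*}/\kappa$. Since $\sqrt{\sigma_r^*/\kappa}=\sqrt{\sigma_1^*}/\kappa$, taking $c$ small enough that $28c\le c_2$ yields $d(\PIU_0,\PIV_0;\OptU,\OptV)\le c_2\sqrt{\sigma_r^*/\kappa}$, which is exactly the hypothesis of Theorem~\ref{thm:convergence}. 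All three preconditions are then met.

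Now I invoke Theorem~\ref{thm:convergence} with $\gamma=2$ and $\eta=c/\sigma_1^*$: for every $t\ge 0$,
\[
d^2(\PIU_t,\PIV_t;\OptU,\OptV) \;\le\; \left(1-\tfrac{c}{8\kappa}\right)^{t} d^2(\PIU_0,\PIV_0;\OptU,\OptV) \;\le\; \left(1-\tfrac{c}{8\kappa}\right)^{t}\!\cdot c_2^2\,\tfrac{\sigma_r^*}{\kappa}.
\]
Choosing
\[
T \;=\; \left\lceil \tfrac{8\kappa}{c}\log\!\tfrac{c_2^2\kappa}{\varepsilon^2}\right\rceil \;=\; \order(\kappa\log(1/\varepsilon))
\]
(absorbing $\log\kappa$ into the log, or equivalently just requiring $\varepsilon\in(0,1)$ and noting $\kappa\ge 1$) drives $d^2(\PIU_T,\PIV_T;\OptU,\OptV)\le \varepsilon^2\sigma_r^*/\kappa$, i.e.\ $d(\PIU_T,\PIV_T;\OptU,\OptV)\le \varepsilon\sqrt{\sigma_r^*}/\sqrt{\kappa}=\varepsilon\sigma_r^*/\sqrt{\sigma_1^*}$.

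Finally I translate back to matrix error. Because all iterates remain inside $\Neighbor{c_2\sqrt{\sigma_r^*/\kappa}}\subseteq\Neighbor{\sqrt{\sigma_1^*}}$, the regime \eqref{eq:error_relation} applies and
\[
\frobnorm{\PIU_T\PIV_T^\top-\MStar} \;\lesssim\; \sqrt{\sigma_1^*}\;d(\PIU_T,\PIV_T;\OptU,\OptV) \;\lesssim\; \varepsilon\,\sigma_r^*,
\]
which is \eqref{eq:final_error} after absorbing the hidden absolute constant into $c$. The only subtle point is bookkeeping the constants through the three inequalities above; the rest is a direct combination of the two theorems, so I do not expect any genuine obstacle beyond verifying that a single constant $c$ can be chosen to validate the initialization hypothesis, the sparsity hypothesis, and the basin-of-attraction hypothesis at once.
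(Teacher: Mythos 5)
Your proposal is correct and follows essentially the same route as the paper's own proof: verify the two sparsity conditions, require the initialization error $28\sqrt{\kappa}\,\alpha\mu r\sqrt{r}\sqrt{\sigma_1^*}$ to fall inside the basin $\Neighbor{c_2\sqrt{\sigma_r^*/\kappa}}$ (which yields the $\alpha \lesssim 1/(\mu\sqrt{\kappa r}^3)$ term), then apply the linear convergence of Theorem~\ref{thm:convergence} and translate back via \eqref{eq:error_relation}. Your constant bookkeeping is, if anything, slightly more explicit than the paper's.
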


\vskip .05in
\begin{remark}[Time Complexity] \label{remark:complexity} For simplicity we assume $d_1 = d_2 = d$. Our sparse estimator \eqref{eq:sparse_estimator} can be implemented by finding the top $\alpha d$ elements of each row and column via partial quick sort, which has running time $\order({d^2\log(\alpha d)})$. Performing rank-$r$ SVD in the first phase and computing the gradient in each iteration both have complexity $\order(rd^2)$.\footnote{In fact, it suffices to compute the best rank-$ r $ approximation with running time independent of the eigen gap.} Algorithm \ref{alg:rpca} thus has total running time $\order(\kappa rd^2\log(1/\varepsilon))$ for achieving an $ \epsilon $ accuracy as in \eqref{eq:final_error}. We note that when $\kappa = \order(1)$, our algorithm is orderwise faster than the AltProj algorithm in \cite{netrapalli2014non}, which has running time $\mathcal{O}(r^2d^2\log(1/\varepsilon))$. Moreover, our algorithm only requires computing one singular value decomposition.
\end{remark}
\vskip .05in
\begin{remark}[Robustness]
	Assuming $\kappa = \order(1)$, our algorithm can tolerate corruption at a sparsity level up to $\alpha = \order(1/(\mu r \sqrt{r}))$. This is worse by a factor $\sqrt{r}$ compared to the optimal statistical guarantee $1/(\mu r)$ obtained in \cite{chen2011LSarxiv,hsu2011robust,netrapalli2014non}. This looseness is a consequence of the condition for $ (\PIU_0, \PIV_0) $ in Theorem~\ref{thm:convergence}. Nevertheless, when $ \mu r = \order(1) $, our algorithm can tolerate a constant $ \alpha $ fraction of corruptions.
	
	Notably, we show that gradient descent works in the case of $\alpha = \order(1/(\mu r))$ if initialization is sufficiently close. Accordingly, to provide an algorithm with optimal robustness, it is straightforward to use a more complicated initial method such as AltProj that can tolerate $1/(\mu r)$ fraction of corruptions while satisfying our initial condition. As gradient descent provides a simple and efficient way to successively refine the estimation, such combination still gives a better running time than prior arts.
\end{remark}

\subsection{Analysis of Algorithm \ref{alg:rpca_partial}}\label{sec:rpca_partial}
We now move to the guarantees of Algorithm \ref{alg:rpca_partial}. We show here that not only can we handle partial observations, but in fact subsampling the data in the fully observed case can significantly reduce the time complexity from the guarantees given in the previous section without sacrificing robustness. In particular, for smaller values of $r$, the complexity of Algorithm~\ref{alg:rpca_partial} has {\em near linear dependence on the dimension} $d$, instead of quadratic.

In the following discussion, we let $d := \max\{d_1, d_2\}$. The next two results, proved in Sections \ref{proof:thm:initial_partial} and \ref{proof:thm:convergence_partial}, control the quality of the initialization step, and then the gradient iterations. 

\begin{theorem}[Initialization, partial observations] \label{thm:initial_partial}
	Suppose the observed indices $\ObserveSet$ follow the Bernoulli model given in \eqref{eq:Y}. Consider the pair $(\PIU_0,\PIV_0)$ produced in the first phase of Algorithm \ref{alg:rpca_partial}. There exist constants $\{c_i\}_{i=1}^3$ such that for any $\epsilon \in (0, \sqrt{r}/(8c_1\kappa))$, if 
	\begin{equation} \label{eq:condition_thm_initial_partial}
	\alpha \leq \frac{1}{64\kappa \mu r}, ~~ p \geq c_2\left(\frac{\mu r^2}{\epsilon^2} + \frac{1}{\alpha}\right)\frac{\log d}{d_1 \wedge d_2}, 
	\end{equation}
	then we have 
	\[
	d(\PIU_0, \PIV_0; \OptU, \OptV) \leq 51\sqrt{\kappa} \alpha \mu r\sqrt{r} \sqrt{\sigma_1^*} + 7c_1\epsilon \sqrt{\kappa \sigma_1^*},
	\]
	with probability at least $1 - c_3d^{-1}$.
\end{theorem}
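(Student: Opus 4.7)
The plan is to follow the overall structure of the proof of Theorem~\ref{thm:initialization} while carefully handling the two new sources of error introduced by subsampling: (i) the noise from approximating $\MStar$ by the rescaled projection $\tfrac{1}{p}\Proj_{\ObserveSet}(\MStar)$, and (ii) concentration of the Bernoulli-subsampled sparsity pattern of $\SStar$. I would write the input to the rank-$r$ SVD as $\tfrac{1}{p}(Y-\InitS) - \MStar = E_1 + E_2$, where $E_1 := \tfrac{1}{p}\Proj_{\ObserveSet}(\MStar) - \MStar$ and $E_2 := \tfrac{1}{p}\bigl(\Proj_{\ObserveSet}(\SStar) - \InitS\bigr)$. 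Since $\MStar$ has rank $r$, the best rank-$r$ approximation $\PIL\SVMat\PIR^\top$ satisfies the standard Weyl-type bound $\frobnorm{\PIL\SVMat\PIR^\top - \MStar} \leq 2\sqrt{2r}\,\opnorm{E_1+E_2}$.

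For $E_1$, I would invoke a standard matrix Bernstein / matrix completion concentration inequality for $\mu$-incoherent rank-$r$ matrices to get $\opnorm{E_1} \leq \epsilon\,\sigma_1^\ast$ with probability at least $1-c_3 d^{-1}$, provided $p \gtrsim \mu r^2\log d/(\epsilon^2 (d_1\wedge d_2))$; this is precisely the first term in the condition \eqref{eq:condition_thm_initial_partial} on $p$, and it contributes the $7c_1\epsilon\sqrt{\kappa\sigma_1^\ast}$ term in the conclusion. The analysis here is essentially deterministic given the concentration bound.

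The main obstacle is controlling $E_2$, mirroring the sparse-estimator analysis in Theorem~\ref{thm:initialization} but now through the random mask $\ObserveSet$. I would first show via a Chernoff/Bernoulli concentration argument that with probability $1-c_3 d^{-1}$ the subsampled matrix $\Proj_{\ObserveSet}(\SStar)$ lies in $\Sparset{2p\alpha}$: each row/column of $\SStar$ has at most $\alpha d$ nonzeros, and after Bernoulli-$p$ sampling each row/column retains at most $2p\alpha d$ nonzeros provided $p\alpha d \gtrsim \log d$, which is exactly the second term in \eqref{eq:condition_thm_initial_partial}. Given this, I would mimic the fully-observed argument: the sparsification operator $\UnifSparse{\cdot}{2p\alpha}$ keeps any entry that is simultaneously among the top $2p\alpha$ fraction in its row and column; on the symmetric difference of the supports of $\InitS$ and $\Proj_{\ObserveSet}(\SStar)$ the surviving magnitudes are bounded by the largest entries of $\Proj_{\ObserveSet}(\MStar)$, which by $\mu$-incoherence satisfy $\infnorm{\MStar}\leq \mu r\sigma_1^\ast/\sqrt{d_1d_2}$. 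Combining the $\Sparset{4p\alpha}$ membership of $E_2 \cdot p$ with this entrywise bound and the elementary estimate $\opnorm{A}\leq \sqrt{\zeroinfnorm{A}\cdot\zeroinfnorm{A^\top}}\infnorm{A}$ yields $\opnorm{E_2} \lesssim \alpha\mu r\sigma_1^\ast$, reproducing the $51\sqrt{\kappa}\alpha\mu r\sqrt{r}\sqrt{\sigma_1^\ast}$ term after the conversion below.

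Finally, I would combine the two operator-norm bounds, multiply by $2\sqrt{2r}$ to bound $\frobnorm{\PIU_0\PIV_0^\top - \MStar}$, and convert to the factor metric $d(\PIU_0,\PIV_0;\OptU,\OptV)$ via the standard quadratic-to-linear inequality $d^2(\PIU,\PIV;\OptU,\OptV) \leq \tfrac{1}{2(\sqrt 2 -1)\sigma_r^\ast}\frobnorm{\PIU\PIV^\top - \MStar}^2$, which applies once the reconstruction error is small enough; the hypothesis $\alpha\leq 1/(64\kappa\mu r)$ together with $\epsilon < \sqrt{r}/(8c_1\kappa)$ is exactly what makes this regime available. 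The two pieces then give the two summands in the stated bound. I expect the subtlest technical step to be verifying that $\InitS$ and $\Proj_{\ObserveSet}(\SStar)$ can be compared despite the sorting-based definition of $\UnifSparse{\cdot}{2p\alpha}$ depending on the random set $\ObserveSet$; a clean way to handle this is to condition on the two high-probability events (concentration for $E_1$ and for row/column counts of $\Proj_{\ObserveSet}(\SStar)$) and argue deterministically on that event.
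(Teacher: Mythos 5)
Your proposal matches the paper's proof essentially step for step: the same decomposition of $\tfrac{1}{p}(Y-\InitS)-\MStar$ into the matrix-completion concentration term and the subsampled-sparse residual, the same Bernstein bound on row/column counts of $\Proj_{\ObserveSet}(\SStar)$ followed by the three-case entrywise argument and the $\opnorm{A}\leq\alpha\sqrt{d_1d_2}\infnorm{A}$ estimate, and the same Weyl-plus-Lemma-5.14 conversion to the factor metric. The only bookkeeping slip is that under $p\gtrsim \mu r^2\log d/(\epsilon^2(d_1\wedge d_2))$ the concentration lemma actually yields $\opnorm{E_1}\lesssim\epsilon\sigma_1^*/\sqrt{r}$ (not $\epsilon\sigma_1^*$), and this extra $1/\sqrt{r}$ is what cancels the $\sqrt{r}$ from the Frobenius conversion to give the stated $\epsilon\sqrt{\kappa\sigma_1^*}$ term without an additional $\sqrt{r}$ factor.
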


\begin{theorem}[Convergence, partial observations] \label{thm:convergence_partial}
	Suppose the observed indices $\ObserveSet$ follow the Bernoulli model given in \eqref{eq:Y}. Consider the second phase of Algorithm \ref{alg:rpca_partial}. Suppose we choose $\gamma = 3$, and $\eta = c/(\mu r\sigma_1^*)$ for a sufficiently small constant $c$. There exist constants $\{c_i\}_{i=1}^4$ such that if 
	\begin{equation} \label{eq:condition_thm_convergence_partial}
	\alpha \leq \frac{c_1}{\kappa^2 \mu r} ~~\text{and}~~ p \geq c_2 \frac{\kappa^4 \mu ^2 r^2\log d}{d_1 \wedge d_2},
	\end{equation}
	then with probability at least $1 - c_3d^{-1}$, the iterates $\{(\PIU_t,\PIV_t)\}_{t = 0}^{\infty}$ satisfy
	\[
	d^2(U_t, V_t; \OptU, \OptV) \leq \left(1 - \frac{c}{64\mu r\kappa}\right)^td^2(U_0,V_0;\OptU, \OptV)
	\]
	for all $(\PIU_0, \PIV_0) \in \Neighbor{c_4\sqrt{\sigma^*_r/\kappa}}$.
\end{theorem}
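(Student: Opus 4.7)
The plan is to mirror the proof of Theorem \ref{thm:convergence} in the fully observed case, treating the random sampling operator $p^{-1}\Proj_{\ObserveSet}$ as a perturbation of the identity that we control via concentration. Because $\Exs[p^{-1}\Proj_{\ObserveSet}(A)] = A$, the gradient of $\widetilde{\Loss}$ is an unbiased version of the gradient of $\Loss$ used in the full-observation analysis; the additional factor $\mu r$ in the step size and in the contraction rate accounts for the spectral-norm overhead that $p^{-1}\Proj_{\ObserveSet}$ introduces when applied to incoherent low-rank matrices, and the condition $(\PIU, \PIV) \in \USet \times \VSet$ is precisely what makes such concentration available.

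First I would verify that the sparse estimator $\UnifSparse{\Proj_{\ObserveSet}(\cdot)}{\gamma p\alpha}$ behaves correctly. Since $\SStar \in \Sparset{\alpha}$, a Bernoulli-sampling plus Chernoff argument shows that, with probability at least $1 - d^{-2}$, every row and column of $\Proj_{\ObserveSet}(\SStar)$ contains at most $2p\alpha(d_1 \vee d_2)$ nonzero entries, provided $p\alpha(d_1 \wedge d_2) \gtrsim \log d$. On this event, choosing $\gamma = 3$ and repeating the deterministic support-and-magnitude argument used for Theorem \ref{thm:convergence}, the iterate $\PIS_t$ and $\Proj_{\ObserveSet}(\SStar)$ share similar support, and their difference has $O(p\alpha)$ row/column sparsity; moreover, $\fronorm{\PIS_t - \Proj_{\ObserveSet}(\SStar)}$ is controlled by a constant multiple of $\fronorm{\Proj_{\ObserveSet}(\PIU_t\PIV_t^\top - \MStar)}$, which is the same structural reduction as in the fully observed analysis.

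Next I would establish the one-step descent inequality. The two key concentration ingredients I need are: (i) for any $(\PIU, \PIV) \in \USet \times \VSet$, the low-rank incoherent matrix $\PIU\PIV^\top - \OptU\OptVT$ satisfies
\[
\opnorm{p^{-1}\Proj_{\ObserveSet}(\PIU\PIV^\top - \OptU\OptVT) - (\PIU\PIV^\top - \OptU\OptVT)} \lesssim \sqrt{\tfrac{\mu r \log d}{p(d_1 \wedge d_2)}}\,\fronorm{\PIU\PIV^\top - \OptU\OptVT};
\]
and (ii) for any $\Delta_S$ with $O(p\alpha)$ row/column sparsity, $\opnorm{p^{-1}\Proj_{\ObserveSet}\Delta_S}$ is of the same order as $\opnorm{\Delta_S}$ up to logarithmic factors. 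These two bounds, proved via a matrix-Bernstein plus $\varepsilon$-net argument, let me replace $\nabla \widetilde{\Loss}$ with the population gradient of $\Loss$ and incur error of order $\sqrt{\kappa \mu^2 r^2 \log d/(p(d_1\wedge d_2))}\cdot \sigma_r^*$, which is absorbed by the sampling hypothesis \eqref{eq:condition_thm_convergence_partial}. Plugging this into the argument of Theorem \ref{thm:convergence}, together with the regularization $\tfrac{1}{64}\fronorm{\PIU^\top \PIU - \PIV^\top \PIV}^2$ and the step size $\eta = c/(\mu r \sigma_1^*)$, which is the largest admissible step given the operator-norm inflation, yields the one-step contraction
\[
d^2(\PIU_{t+1}, \PIV_{t+1}; \OptU, \OptV) \leq \left(1 - \tfrac{c}{64\mu r\kappa}\right) d^2(\PIU_t, \PIV_t; \OptU, \OptV)
\]
whenever $(\PIU_t, \PIV_t) \in \Neighbor{c_4\sqrt{\sigma_r^*/\kappa}}$.

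The induction then closes: the above contraction keeps $(\PIU_{t+1}, \PIV_{t+1})$ in the same neighborhood, and the projection onto $\USet \times \VSet$ preserves the incoherence that the concentration bounds demand. The main obstacle is precisely this concentration step, because I need $\opnorm{p^{-1}\Proj_{\ObserveSet}(\PIU\PIV^\top - \OptU\OptVT)}$ to be controlled \emph{uniformly} over all iterates rather than for a fixed deterministic matrix. Since $(\PIU, \PIV)\in \USet \times \VSet$ forces $\PIU\PIV^\top$ to be $O(\mu)$-incoherent of rank at most $2r$, a net of size $(d/\varepsilon)^{O(rd)}$ on the constraint set combined with a matrix-Bernstein bound on each net point delivers the uniform estimate at the cost of an extra $\sqrt{rd\log d}$ factor, which is still absorbed by the sample-complexity hypothesis. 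This uniform concentration is what drives the $\mu r$ factor in the step size, the corresponding $\mu r$-fold slowdown in the contraction rate, and the $\kappa^4 \mu^2 r^2$ scaling in the required sampling rate $p$.
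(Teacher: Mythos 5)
Your overall skeleton (Chernoff control of the sampled support of $\SStar$, a one-step descent inequality obtained by comparing $\nabla\widetilde{\Loss}$ to its population counterpart, smoothness with a $\mu r$ overhead justifying the step size $\eta = c/(\mu r\sigma_1^*)$, then induction) matches the structure of the paper's argument, and your first step is essentially Lemma \ref{lem:size_concentration}. However, there is a genuine gap in the step you yourself identify as the main obstacle: the uniform concentration over all iterates.

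You propose to get uniformity by taking an $\varepsilon$-net of size $(d/\varepsilon)^{O(rd)}$ over the constraint set and union-bounding a matrix-Bernstein estimate, claiming the resulting extra $\sqrt{rd\log d}$ factor ``is still absorbed by the sample-complexity hypothesis.'' It is not. The union bound over a net with $\log|\mathcal{N}| \asymp rd\log(d/\varepsilon)$ forces the per-point deviation to be inflated by roughly $\sqrt{rd}$, so the uniform bound becomes of order $\sqrt{\mu r\log d/(p(d_1\wedge d_2))}\cdot\sqrt{rd}\,\fronorm{\cdot}$. Under the theorem's hypothesis $p(d_1\wedge d_2)\gtrsim \kappa^4\mu^2r^2\log d$, this is of order $\sqrt{d/\mu}/\kappa^2\cdot\fronorm{\cdot}$, which diverges with $d$; making it $O(1)$ would require $p\gtrsim r\log d$, i.e.\ $p>1$. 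So the net argument cannot close the proof at the stated sample complexity. The paper circumvents this entirely by never taking a net over the iterates: it decomposes $\PIU\PIV^{\top}-\UniqueU\UniqueV^{\top} = (\UniqueU\DeltaV^{\top}+\DeltaU\UniqueV^{\top}) + \DeltaU\DeltaV^{\top}$, controls the first (tangent-space) part uniformly via a restricted-isometry bound on the single fixed operator $\Proj_{\mathcal{K}}\Proj_{\ObserveSet}\Proj_{\mathcal{K}} - p\Proj_{\mathcal{K}}$ (Lemma \ref{lem:norm_sampling}, from Cand\`es--Recht), and controls the second (possibly spiky but incoherent) part via a deterministic bilinear inequality $\sum_{(i,j)\in\ObserveSet}u_iv_j \le p\onenorm{u}\onenorm{v} + c\sqrt{pd\log d}\twonorm{u}\twonorm{v}$ obtained by applying fixed-matrix concentration once to the all-ones matrix (Lemma \ref{lem:frobnorm_sampling}). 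Both are single high-probability events whose conclusions then hold for \emph{all} admissible $(\PIU,\PIV)$ with no union-bound cost; replacing your net argument with this two-part mechanism is what is needed to make the proof work.
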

The above result ensures linear convergence to $(\OptU,\OptV)$ (up to rotation) even when the gradient iterations are computed using partial observations. Note that setting $p = 1$ recovers Theorem \ref{thm:convergence} up to an additional factor $\mu r$ in the contraction factor. For achieving $\varepsilon$ relative accuracy, now we need $\order(\mu r \kappa\log(1/\varepsilon))$ iterations.

Putting Theorems \ref{thm:initial_partial} and \ref{thm:convergence_partial} together, we have the following overall guarantee, proved in Section \ref{proof:cor:rpca_partial}, for Algorithm \ref{alg:rpca_partial}.
\begin{corollary} \label{cor:rpca_partial}
	Suppose that
	\[
	\alpha \leq c\min\left\{\frac{1}{\mu\sqrt{\kappa r}^3}, \frac{1}{\mu \kappa^2r}\right\}, ~~ p \geq c'\frac{\kappa^4\mu^2 r^2\log d}{d_1 \wedge d_2},
	\] 
	for some constants $c, c'$. With probability at least $1 - \order(d^{-1})$, for any $\varepsilon \in (0,1)$, Algorithm \ref{alg:rpca_partial} with $T = \order(\mu r\kappa \log (1/\varepsilon))$ outputs a pair $(\PIU_{T}, \PIV_{T})$ that satisfies
	\begin{equation} \label{eq:final_error_partial}
	\frobnorm{\PIU_{T}\PIV_{T}^{\top} - \MStar} \leq \varepsilon\cdot\sigma_r^*.
	\end{equation}
\end{corollary}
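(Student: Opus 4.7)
}
The plan is to compose the two partial-observation theorems: use Theorem~\ref{thm:initial_partial} to place $(\PIU_0,\PIV_0)$ inside the basin of attraction $\Neighbor{c_4\sqrt{\sigma_r^*/\kappa}}$ required by Theorem~\ref{thm:convergence_partial}, then invoke the linear convergence and finally pass from the factor metric $d(\cdot)$ to the Frobenius error via \eqref{eq:error_relation}. Throughout I use $\sigma_1^* = \kappa\sigma_r^*$.

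First I would fix the free parameter $\epsilon$ of Theorem~\ref{thm:initial_partial} to be a sufficiently small multiple of $1/\kappa^{3/2}$, specifically $\epsilon = c_4/(14c_1\kappa^{3/2})$, so that the second term in the initialization bound satisfies $7c_1\epsilon\sqrt{\kappa\sigma_1^*}\le \tfrac{c_4}{2}\sqrt{\sigma_r^*/\kappa}$. For the first term, the hypothesis $\alpha\lesssim 1/(\mu\sqrt{\kappa r}^3) = 1/(\mu\kappa^{3/2}r^{3/2})$ immediately yields $51\sqrt{\kappa}\alpha\mu r\sqrt{r}\sqrt{\sigma_1^*}\le \tfrac{c_4}{2}\sqrt{\sigma_r^*/\kappa}$. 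Adding the two halves gives
\[
d(\PIU_0,\PIV_0;\OptU,\OptV)\;\le\; c_4\sqrt{\sigma_r^*/\kappa},
\]
which is exactly the initialization condition of Theorem~\ref{thm:convergence_partial}. I would also verify that the sample-complexity hypothesis $p\gtrsim\kappa^4\mu^2r^2\log d/(d_1\wedge d_2)$ assumed in the corollary implies both requirements \eqref{eq:condition_thm_initial_partial} and \eqref{eq:condition_thm_convergence_partial}. For \eqref{eq:condition_thm_initial_partial}, the term $\mu r^2/\epsilon^2$ becomes $\order(\kappa^3\mu r^2)$, which is dominated by $\kappa^4\mu^2 r^2$, while $1/\alpha\lesssim \mu\kappa^2 r$ (from the robustness hypothesis) is also dominated; \eqref{eq:condition_thm_convergence_partial} is immediate. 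The corruption condition $\alpha\le c_1/(\kappa^2\mu r)$ of Theorem~\ref{thm:convergence_partial} is likewise covered by the corollary's $\alpha\lesssim 1/(\mu\kappa^2 r)$ hypothesis.

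Next, Theorem~\ref{thm:convergence_partial} yields, on the same probability $1-c_3 d^{-1}$ event (union-bound against the initialization event),
\[
d^2(\PIU_T,\PIV_T;\OptU,\OptV)\;\le\;\left(1-\frac{c}{64\mu r\kappa}\right)^T d^2(\PIU_0,\PIV_0;\OptU,\OptV)\;\le\;\left(1-\frac{c}{64\mu r\kappa}\right)^T\!\cdot\, c_4^2\,\frac{\sigma_r^*}{\kappa}.
\]
Since the right-hand side is non-increasing in $T$ and $c_4\sqrt{\sigma_r^*/\kappa}\le \sqrt{\sigma_1^*}$, the hypothesis of \eqref{eq:error_relation} is satisfied for every $t\ge 0$, so I can write
\[
\frobnorm{\PIU_T\PIV_T^{\top}-\MStar}\;\lesssim\;\sqrt{\sigma_1^*}\,d(\PIU_T,\PIV_T;\OptU,\OptV)\;\lesssim\;\sqrt{\sigma_1^*}\cdot\sqrt{\sigma_r^*/\kappa}\,\left(1-\tfrac{c}{64\mu r\kappa}\right)^{T/2}\;=\;\sigma_r^*\left(1-\tfrac{c}{64\mu r\kappa}\right)^{T/2}.
\]
Choosing $T$ to be a large enough multiple of $\mu r\kappa\log(1/\varepsilon)$ and using $\log(1/(1-x))\ge x$ for $x\in(0,1)$ drives the contraction factor below $\varepsilon$, yielding \eqref{eq:final_error_partial}. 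The probability bound $1-\order(d^{-1})$ follows by a single union bound over the two theorems' failure events.

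The main obstacle is purely bookkeeping: verifying that a single choice of $\epsilon$ in Theorem~\ref{thm:initial_partial} simultaneously (i) shrinks the $\epsilon$-dependent part of the initialization bound below the basin-of-attraction radius $c_4\sqrt{\sigma_r^*/\kappa}$ dictated by Theorem~\ref{thm:convergence_partial}, and (ii) makes the $\mu r^2/\epsilon^2$ sampling requirement no worse than the $\kappa^4\mu^2 r^2$ requirement already imposed by the convergence theorem. The chosen $\epsilon\asymp 1/\kappa^{3/2}$ works because it produces a $\kappa^3\mu r^2$ sampling requirement, which is absorbed into the stronger $\kappa^4\mu^2 r^2$ needed for convergence; everything else is bounding constants and a geometric-series estimate.
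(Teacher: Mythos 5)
Your proposal is correct and follows essentially the same route as the paper's proof: set $\epsilon\asymp\kappa^{-3/2}$ in Theorem~\ref{thm:initial_partial} so that the initialization error falls inside the basin $\Neighbor{c_4\sqrt{\sigma_r^*/\kappa}}$ of Theorem~\ref{thm:convergence_partial}, check that the resulting sampling requirement $\kappa^3\mu r^2\log d/(d_1\wedge d_2)$ (and the $1/\alpha$ term, read at the threshold value of $\alpha$ as the paper also does) is absorbed by $p\gtrsim\kappa^4\mu^2 r^2\log d/(d_1\wedge d_2)$, and then convert the geometric decay of $d(\cdot)$ into \eqref{eq:final_error_partial} via \eqref{eq:error_relation}. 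Your write-up is in fact slightly more explicit than the paper's on the final $T=\order(\mu r\kappa\log(1/\varepsilon))$ step and the union bound over failure events.
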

This result shows that partial observations do not compromise robustness to sparse corruptions: as long as the observation probability $ p $ satisfies the condition in Corollary~\ref{cor:rpca_partial}, Algorithm \ref{alg:rpca_partial} enjoys the same robustness guarantees as the method using all entries. Below we provide two remarks on the sample and time complexity. For simplicity, we assume $d_1 = d_2 = d$, $\kappa = \order(1)$.
\vskip .05in
\begin{remark}[Sample complexity and matrix completion] Using the lower bound on $p$, it is sufficient to have $\order(\mu^2 r^2 d\log d)$  observed entries. In the special case $S^* = 0$, our partial observation model is equivalent to the model of exact matrix completion (see, e.g., \cite{candes2010power}). We note that our sample complexity (i.e., observations needed) matches that of completing a positive semidefinite (PSD) matrix by gradient descent as shown in \cite{chen2015fast}, and is better than the non-convex matrix completion algorithms in \cite{jain2013low} and \cite{sun2015guaranteed}. Accordingly, our result reveals the important fact that we can obtain robustness in matrix completion without deterioration of our statistical guarantees. It is known that that any algorithm for solving exact matrix completion must have sample size $\Omega(\mu rd\log d)$~\cite{candes2010power}, and a nearly tight upper bound $O(\mu rd\log^2d)$ is obtained in \cite{DBLP:journals/tit/Chen15} by convex relaxation. While sub-optimal by a factor $\mu r$, our algorithm is much faster than convex relaxation as shown below.
\end{remark}
\vskip .05in
\begin{remark}[Time complexity]
	Our sparse estimator on the sparse matrix with support $\Phi$ can be implemented  via partial quick sort with running time $\order({pd^2\log(\alpha p d)})$. Computing the gradient in each step involves the two terms in the objective function \eqref{eq:opt_partial}. Computing the gradient of the first term $\widetilde{\mathcal{L}}$ takes time $\order(r|\Phi|)$, whereas the second term takes time $\order(r^2d)$. In the initialization phase, performing rank-$r$ SVD on a sparse matrix with support $\Phi$ can be done in time $\order(r|\Phi|)$. We conclude that when $|\Phi| = \order(\mu^2 r^2 d\log d)$, Algorithm \ref{alg:rpca_partial} achieves the error bound \eqref{eq:final_error_partial} with running time $\order(\mu^3r^4d\log d \log(1/\varepsilon))$. Therefore, in the small rank setting with $r \ll d^{1/3}$, even when full observations are given, it is better to use Algorithm \ref{alg:rpca_partial} by subsampling the entries of $ Y $.
\end{remark}


\section{Numerical Results} \label{sec:num_results}

In this section, we provide numerical results and compare the proposed algorithms with existing methods, including the inexact augmented lagrange multiplier (IALM) approach \cite{LinChenMa10} for solving the convex relaxation \eqref{eq:convex_rpca}  and the alternating projection (AltProj) algorithm proposed in \cite{netrapalli2014non}. All algorithms are implemented in MATLAB \footnote{Our code is available at \url{https://www.yixinyang.org/code/RPCA_GD.zip}.}, and the codes for existing algorithms are obtained from their authors. SVD computation in all algorithms uses the PROPACK library.\footnote{\url{http://sun.stanford.edu/~rmunk/PROPACK/}}  We ran all simulations on a machine with Intel 32-core Xeon (E5-2699) 2.3GHz with 240GB RAM.


\subsection{Synthetic Datasets}

 We generate a squared data matrix $Y = \MStar + \SStar \in \mathbb{R}^{d \times d}$ as follows. The low-rank part $\MStar$ is given by $M^* = A B^\top$, where $A,B \in \mathbb{R}^{d \times r}$ have entries drawn independently from a zero mean Gaussian distribution with variance $1/d$. For a given sparsity parameter $\alpha$, each entry of $\SStar$ is set to be nonzero with probability $\alpha$, and the values of the nonzero entries are sampled uniformly from $[-5r/d, 5r/d]$. 
\begin{figure}[t] 
	\centering
	\begin{subfigure}[t]{0.32\textwidth}
		\centering
		\includegraphics[width=\textwidth]{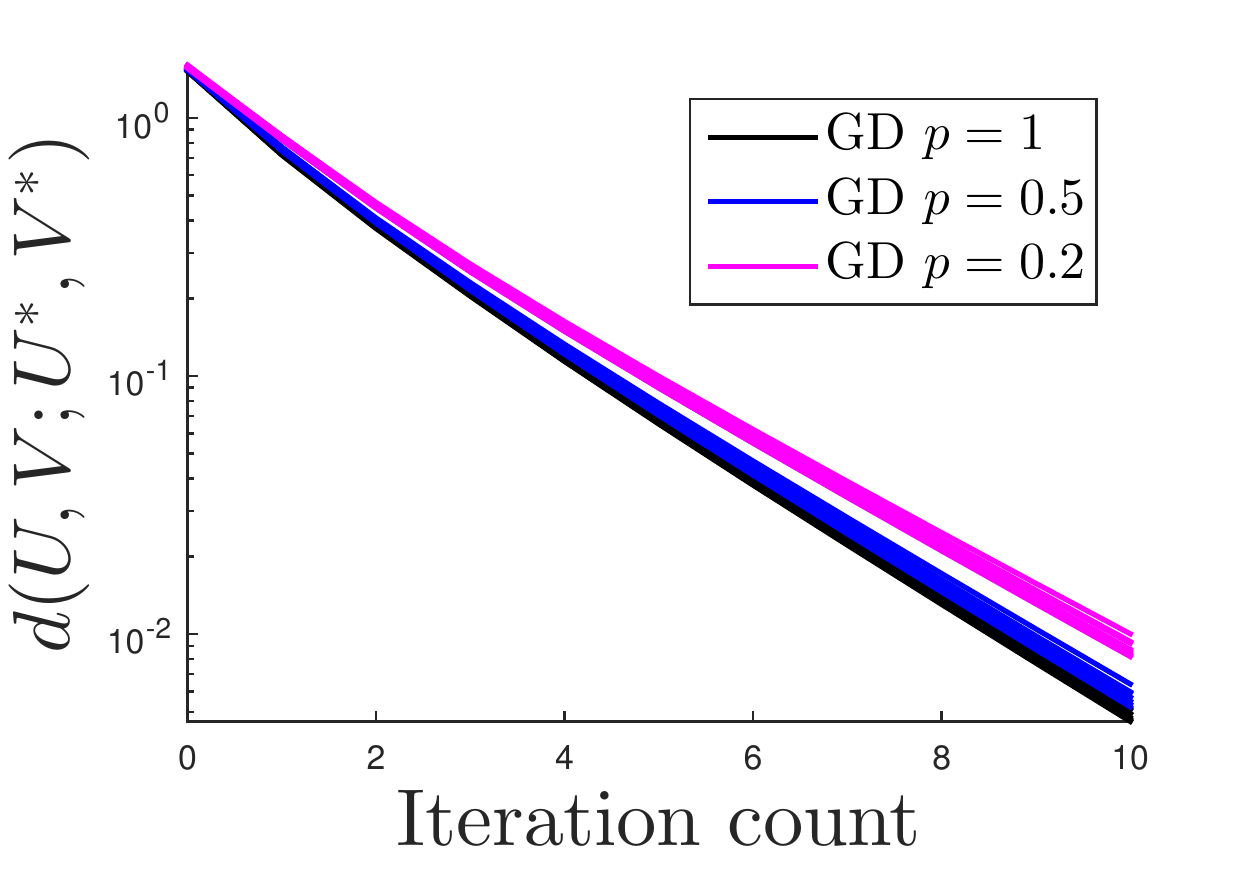}
		\caption{}
		\label{fig:conv_rate}
	\end{subfigure}%
	~ 
	\begin{subfigure}[t]{0.32\textwidth}
		\centering
		\includegraphics[width=\textwidth]{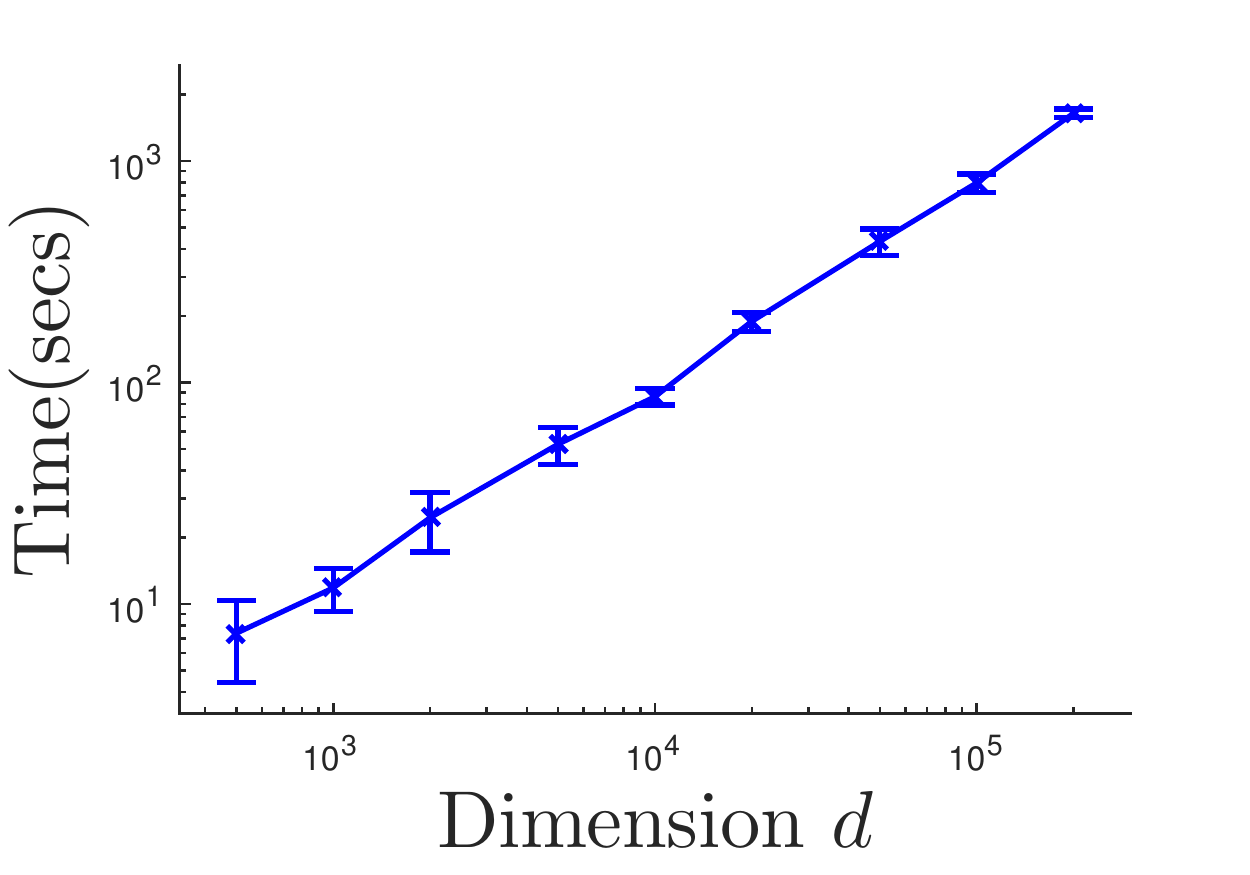}
		\caption{}
		\label{fig:time_scale}
	\end{subfigure}
	~
	\begin{subfigure}[t]{0.32\textwidth}
		\centering
		\includegraphics[width=\textwidth]{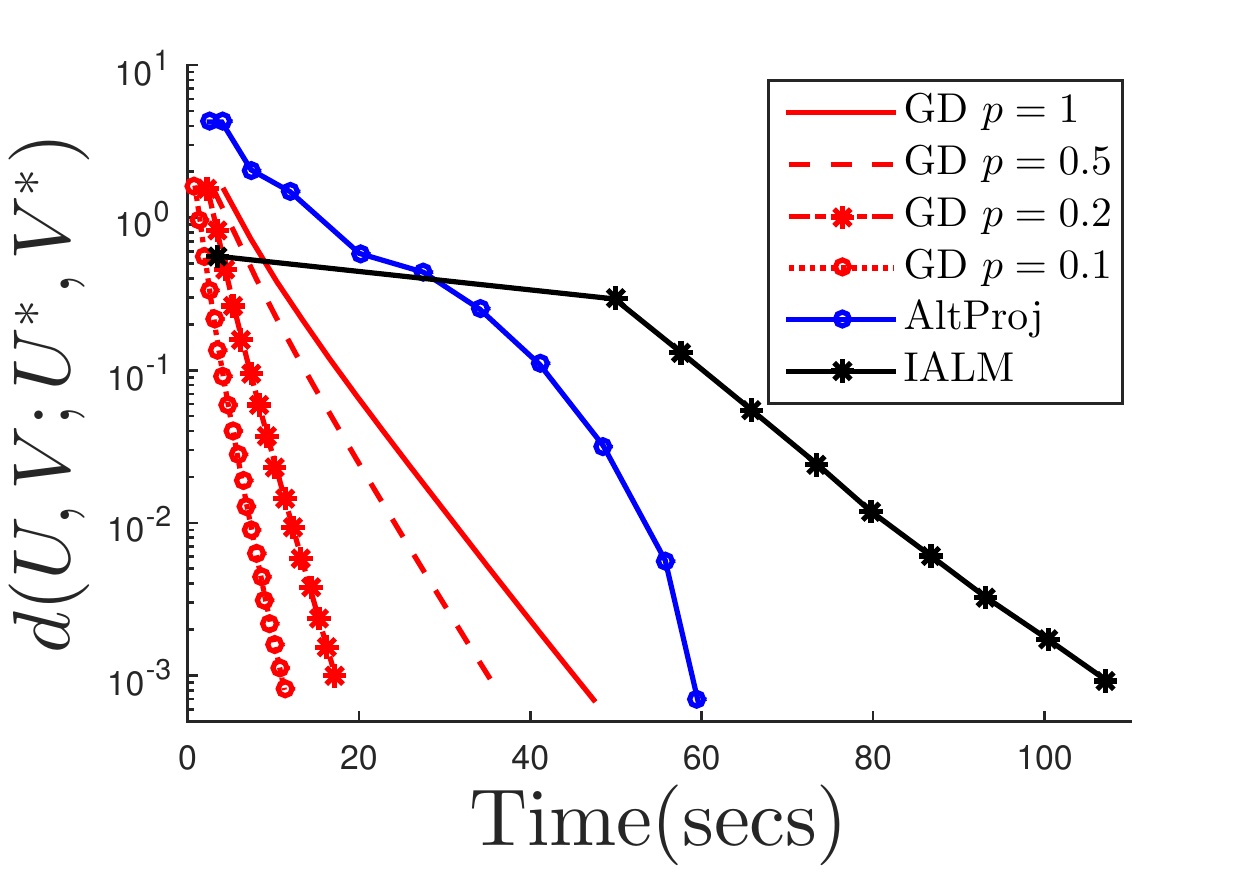}
		\caption{}
		\label{fig:real_time}
	\end{subfigure}
	\caption{Results on synthetic data. (a) Plot of log estimation error versus number of iterations when using gradient descent (GD) with varying sub-sampling rate $p$. It is conducted using $d = 5000, r = 10, \alpha = 0.1$. (b) Plot of running time of GD versus dimension $d$ with $r = 10, \alpha = 0.1, p = 0.15r^2\log d/d$. The low-rank matrix is recovered in all instances, and the line has slope approximately one. (c) Plot of log estimation error versus running time for different algorithms in problem with $d = 5000, r = 10, \alpha = 0.1$.}
	\label{fig:exp1}
\end{figure}

The results are summarized in Figure \ref{fig:exp1}. Figure \ref{fig:conv_rate} shows the convergence of our algorithms for different random instances with different sub-sampling rate $p$ (note that $p = 1$ corresponds to the fully observed setting). As predicted by Theorems \ref{thm:convergence} and \ref{thm:convergence_partial},  our gradient method converges geometrically with a contraction factor nearly independent of $p$. Figure \ref{fig:time_scale} shows the running time of our algorithm with partially observed data. We see that the running time scales linearly with $d$, again consistent with the theory. We note that our algorithm is memory-efficient: in the large scale setting with $d = 2 \times 10^5$, using approximately $0.1\%$ entries is sufficient for the successful recovery. In contrast, AltProj and IALM are designed to manipulate the entire matrix with $d^2 = 4 \times 10^{10}$ entries, which is prohibitive on a single machine. Figure \ref{fig:real_time} compares our algorithms with AltProj and IALM by showing reconstruction error versus real running time. Our algorithm requires significantly less computation to achieve the same accuracy level, and using only a subset of the entries provides additional speed-up.

\begin{figure}
	\centering
	\begin{tabular}{c}		
		\includegraphics[scale=0.75, clip, trim = 0 0 0 0]{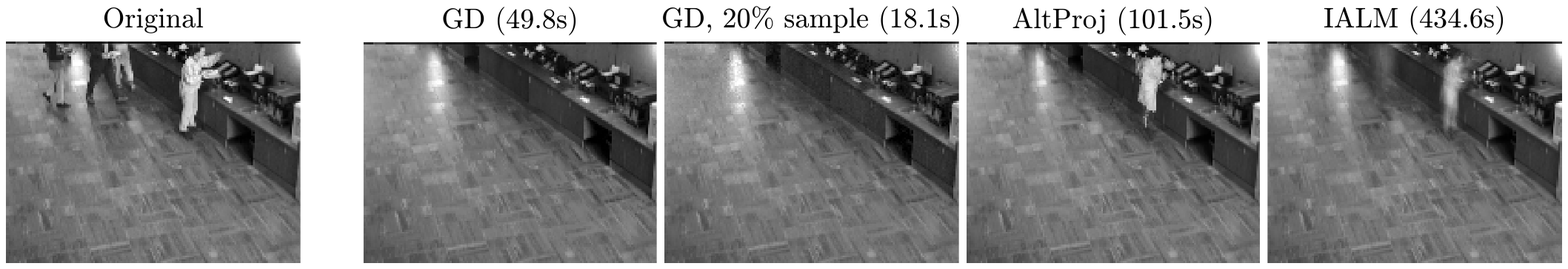} \\
		\includegraphics[scale=0.75, clip, trim = 0 0 0 0]{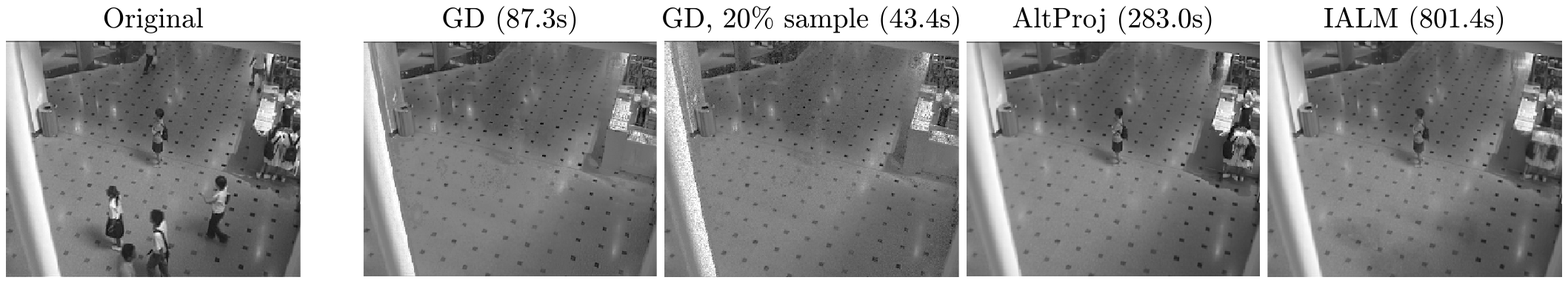}
	\end{tabular} 
	\caption{Foreground-background separation in \emph{Restaurant} and \emph{ShoppingMall} videos. In each line, the leftmost image is an original frame, and the other four are the separated background obtained from our algorithms with $p = 1$, $p = 0.2$, AltProj, and IALM. The running time required by each algorithm is shown in the title.}
	\label{fig:exp2}
\end{figure}

\subsection{Foreground-background Separation}

We apply our method to the task of foreground-background (FB) separation in a video. We use two public benchmarks, the \emph{Restaurant} and \emph{ShoppingMall} datasets.\footnote{\url{http://perception.i2r.a-star.edu.sg/bk_model/bk_index.html}} Each dataset contains a video with static background. By vectorizing and stacking the frames as columns of a matrix $Y$, the FB separation problem can be cast as RPCA, where the static background corresponds to a low rank matrix $M^*$ with identical columns, and the moving objects in the video can be modeled as sparse corruptions $S^*$. Figure \ref{fig:exp2} shows the output of different algorithms on two frames from the dataset. Our algorithms require significantly less running time than both AltProj and IALM. Moreover, even with 20\% sub-sampling, our methods still appear to achieve better separation quality (note that in each of the frames our algorithms remove a person that is not identified by the other algorithms). 

Figure \ref{fig:exp3} shows recovery results for several more frames. Again, our algorithms enjoy better running time and outperform AltProj and IALM in separating persons from the background images. In Appendix \ref{sec:fb_more}, we describe the detailed parameter settings for our algorithm.

\begin{figure}[!htp]
	\begin{center}
		\includegraphics[width=\textwidth]{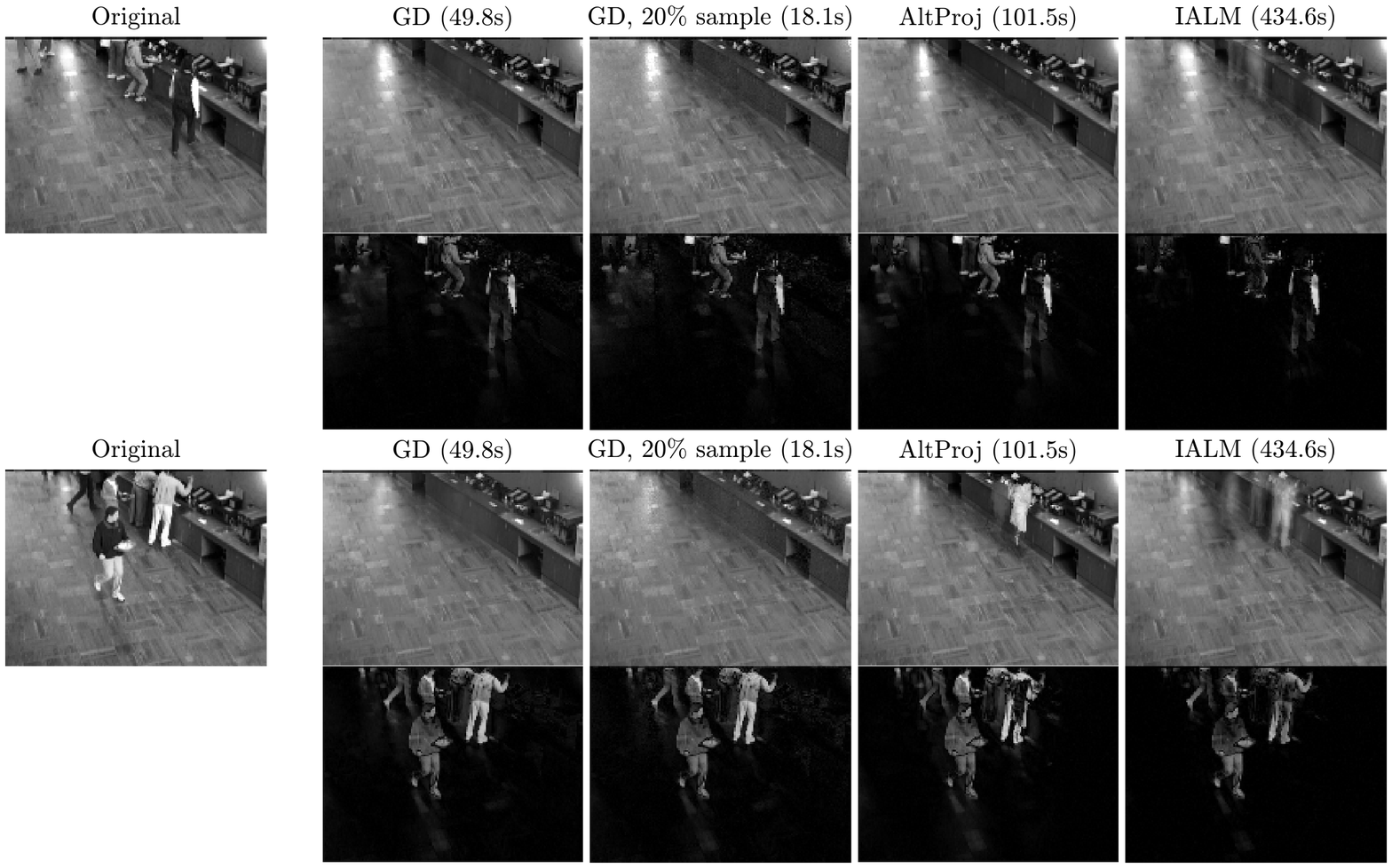}\\
		\includegraphics[width=\textwidth]{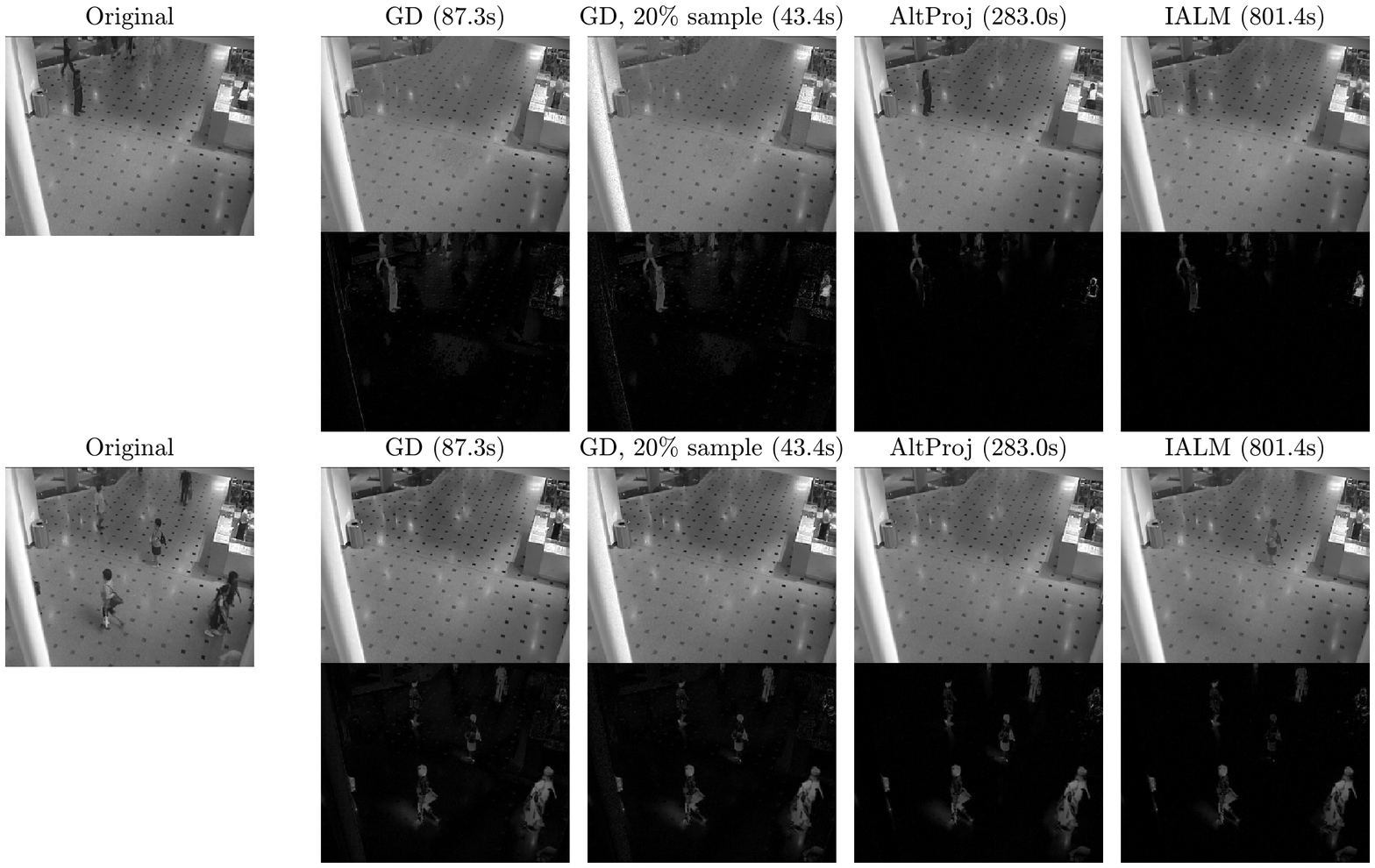}\\
		\caption{More results of FB separation in \emph{Restaurant} and \emph{ShoppingMall} videos. The leftmost images are original frames. The right four columns show the results from our algorithms with $p = 1, p = 0.2$, AltProj \cite{netrapalli2014non}, and IALM \cite{LinChenMa10}. The runtime of each algorithm is written in the title.}
		\label{fig:exp3}
	\end{center}
\end{figure}

\section{Proofs} \label{sec:proofs}

In this section we provide the proofs for our main theoretical results in Theorems~\ref{thm:initialization}--\ref{thm:convergence_partial} and Corollaries~\ref{cor:rpca}--\ref{cor:rpca_partial}.
\subsection{Proof of Theorem \ref{thm:initialization}} \label{proof:thm:initialization}

Let $\widebar{Y} := Y - \InitS$. As $Y = \MStar + \SStar$, we have $\widebar{Y} - \MStar = \SStar - \InitS$. We obtain $\widebar{Y} - \MStar \in \mathcal{S}_{2\alpha}$ because $\SStar, \InitS \in \mathcal{S}_{\alpha}$. 

We claim that $\infnorm{\widebar{Y} - \MStar} \leq 2\infnorm{\MStar}$. Denote the support of $\SStar, \InitS$ by $\Omega^*$ and $\Omega$ respectively. Since $\widebar{Y} - \MStar$ is supported on $\Omega \cup \Omega^*$, to prove the claim it suffices to consider the following three cases. 
\begin{itemize}
	\item For $(i,j) \in \Omega^*\cap \Omega$, due to rule of sparse estimation, we have $\entry{(\SStar - \InitS)}{i}{j} = 0$.
	\item For $(i,j) \in \Omega^* \setminus \Omega$, we must have $|\entry{\SStar}{i}{j}| \leq 2\infnorm{\MStar}$. Otherwise, we have $|\entry{Y}{i}{j}| = |\entry{(\SStar + \MStar)}{i}{j}| > \infnorm{\MStar}$. So $|\entry{Y}{i}{j}|$ is larger than any uncorrupted entries in its row and column. Since there are at most $\alpha$ fraction corruptions per row and column, we have $\entry{Y}{i}{j} \in \Omega$, which violates the prior condition $(i,j) \in \Omega^* \setminus \Omega$.
	\item For the last case $(i,j) \in \Omega \setminus \Omega^*$, since $\entry{(\InitS)}{i}{j} = \entry{\MStar}{i}{j}$, trivially we have $|\entry{(\InitS)}{i}{j}| \leq \infnorm{\MStar}$.
\end{itemize}

The following result, proved in  Section \ref{proof:lem:op_to_inf}, relates the operator norm of $\widebar{Y} - \MStar$ to its infinite norm.
\begin{lemma} \label{lem:op_to_inf}
	For any matrix $A \in \real^{d_1\times d_2}$ that belongs to $\Sparset{\alpha}$ given in \eqref{eq:S_alpha}, we have
	\[
	\opnorm{A} \leq   \alpha\sqrt{d_1 d_2}\infnorm{A}.
	\]
\end{lemma}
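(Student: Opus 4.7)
}

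The plan is to bound $\opnorm{A}$ directly via its variational characterization $\opnorm{A} = \sup_{\twonorm{x} = \twonorm{y} = 1} x^\top A y$, and to exploit the row/column sparsity of $A \in \Sparset{\alpha}$ together with a Cauchy--Schwarz step. Let $\Omega \subseteq [d_1]\times[d_2]$ denote the support of $A$; by assumption $|\row{\Omega}{i}| \leq \alpha d_2$ for every $i$ and $|\column{\Omega}{j}| \leq \alpha d_1$ for every $j$.

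For arbitrary unit vectors $x \in \real^{d_1}$ and $y \in \real^{d_2}$, I would first write
\[
x^\top A y \;=\; \sum_{(i,j) \in \Omega} x_i\, A_{(i,j)}\, y_j \;\leq\; \infnorm{A} \sum_{(i,j) \in \Omega} |x_i|\,|y_j|.
\]
Then I would apply Cauchy--Schwarz to the double sum over the support:
\[
\sum_{(i,j) \in \Omega} |x_i|\,|y_j| \;\leq\; \Bigl(\sum_{(i,j)\in\Omega} x_i^2\Bigr)^{1/2} \Bigl(\sum_{(i,j)\in\Omega} y_j^2\Bigr)^{1/2}.
\]

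The key step is to evaluate these two sums using the sparsity structure. For the first factor, grouping by row gives $\sum_{(i,j)\in\Omega} x_i^2 = \sum_{i} x_i^2 \,|\row{\Omega}{i}| \leq \alpha d_2 \sum_i x_i^2 = \alpha d_2$. Analogously, grouping by column yields $\sum_{(i,j)\in\Omega} y_j^2 \leq \alpha d_1$. Combining these with the preceding displays gives $x^\top A y \leq \infnorm{A} \sqrt{\alpha d_2 \cdot \alpha d_1} = \alpha \sqrt{d_1 d_2}\, \infnorm{A}$, and taking the supremum over unit $x,y$ yields the claim.

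I do not expect a real obstacle here: the only mildly nontrivial move is recognizing that the Cauchy--Schwarz split above keeps the row and column sparsity cleanly separated, so that each of the two factors costs one dimension-times-$\alpha$ term rather than both costing $\alpha d_1 d_2$. (An alternative derivation would go through the inequality $\opnorm{A} \leq \sqrt{\matsnorm{A}{1\to 1}\matsnorm{A}{\infty\to\infty}}$ and bound each induced norm by $\alpha d_i \infnorm{A}$, but the direct Cauchy--Schwarz argument above is shorter and self-contained.)
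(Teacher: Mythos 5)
Your proof is correct and follows essentially the same route as the paper: both start from the variational characterization $\opnorm{A}=\sup_{x,y}x^{\top}Ay$ and exploit the bounds $\sum_{(i,j)\in\Omega}x_i^2\le\alpha d_2$ and $\sum_{(i,j)\in\Omega}y_j^2\le\alpha d_1$ obtained by counting support entries per row and per column. The only difference is cosmetic: you apply Cauchy--Schwarz directly, whereas the paper uses the weighted AM--GM bound $|x_iy_j|\le\frac{1}{2}(\beta^{-1}x_i^2+\beta y_j^2)$ and then optimizes $\beta=\sqrt{d_2/d_1}$, which yields the identical constant.
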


We thus obtain
\begin{equation} \label{eq:tmp_opnorm_bound}
\opnorm{\widebar{Y} - \MStar} \leq 2\alpha \sqrt{d_1d_2} \infnorm{\widebar{Y} - \MStar} \leq  4\alpha \sqrt{d_1d_2} \infnorm{\MStar} = 4\alpha \mu r \sigma_1^*.
\end{equation}
In the last step, we use the fact that $\MStar$ satisfies the $\mu$-incoherent condition, which leads to
\begin{equation} \label{eq:top_element}
\infnorm{\MStar} \leq \opnorm{\MStar}\twoinfnorm{\PIL^*}\twoinfnorm{\PIR^*} \leq \frac{\mu r}{\sqrt{d_1d_2}}\opnorm{\MStar}.
\end{equation}

We denote the $i$-th largest singular value of $\widebar{Y}$ by $\sigma_i$. By Weyl's theorem, we have $\abs{\sigma_i^* - \sigma_i} \leq \opnorm{\widebar{Y}-\MStar}$ for all $i \in [d_1\wedge d_2]$. Since $\sigma_{r+1}^* = 0$, we have $\sigma_{r+1} \leq \opnorm{\widebar{Y}-\MStar}$. Recall that $\PIU_0\PIV_0^{\top}$ is the best rank $r$ approximation of $\widebar{Y}$. Accordingly, we have
\begin{align*}
\opnorm{\PIU_0\PIV_0^{\top} - \MStar} & \leq \opnorm{\PIU_0\PIV_0^{\top} - \widebar{Y}} + \opnorm{ \widebar{Y} - \MStar} \\
& = \sigma_{r+1} + \opnorm{ \widebar{Y} - \MStar} \leq 2\opnorm{\widebar{Y} - \MStar} \leq 8\alpha\mu r \sigma_1^*.
\end{align*}

Under condition $\alpha \mu r \leq \frac{1}{16\kappa}$, we obtain $\opnorm{\PIU_0\PIV_0^{\top} - \MStar} \leq \frac{1}{2}\sigma_r^*$. Applying Lemma 5.14 in \cite{tu2015low} (we provide it as Lemma \ref{lem:tu_lemma} for the sake of completeness), we obtain
\[
d^2(\PIU_0, \PIV_0; \OptU, \OptV) \leq \frac{2}{\sqrt{2} - 1}\frac{\frobnorm{\PIU_0\PIV_0^{\top} - \MStar}^2}{\sigma_r^*} \leq \frac{10r\opnorm{\PIU_0\PIV_0^{\top} - \MStar}^2}{\sigma_r^*}.
\]
Plugging the upper bound of $\opnorm{\PIU_0\PIV_0^{\top} - \MStar}$ into the above inequality completes the proof.

\subsection{Proof of Theorem \ref{thm:convergence}} \label{proof:thm:convergence}
We essentially follow the general framework developed in \cite{chen2015fast} for analyzing the behaviors of gradient descent in factorized low-rank optimization. But it is worth to note that \cite{chen2015fast} only studies the symmetric and positive semidefinite setting, while we avoid such constraint on $\MStar$. The techniques for analyzing general asymmetric matrix in factorized space is inspired by the recent work \cite{tu2015low} on solving low-rank matrix equations. In our setting, the technical challenge is to verify the local descent condition of the loss function \eqref{eq:opt}, which not only has a bilinear dependence on $\PIU$ and $\PIV$, but also involves our sparse estimator \eqref{eq:sparse_estimator}. 

We begin with some notations. Define the equivalent set of optimal solution as
\begin{equation} \label{eq:Eq_set}
\OptSet(\MStar) := \left\{(A,B) \in \real^{d_1 \times r} \times \real^{d_2 \times r} ~\big|~ A = \PIL^*\SVMat^{*1/2}\RotateM,  B = \PIR^*\SVMat^{*1/2}\RotateM, ~\text{where}~\RotateM \in \RotateSet{r} \right\}.
\end{equation}

Given $(\PIU_0, \PIV_0) \in \Neighbor{c_2\sqrt{\sigma^*_r/\kappa}}$, by \eqref{eq:error_relation}, we have $\opnorm{\PIU_0\PIV_0^{\top} - \MStar} \leq \frac{1}{2}\sigma_r^*$ when $c_2$ is sufficiently small. By Weyl's theorem We thus have \[ \sqrt{\sigma_1^*/2}  \leq \opnorm{\PIU_0} \leq \sqrt{3\sigma_1^*/2}, ~\text{and}~  \sqrt{\sigma_1^*/2} \leq \opnorm{\PIV_0} \leq \sqrt{3\sigma_1^*/2}.
\] 
As a result, for $\USet, \VSet$ constructed according to \eqref{eq:set}, we have 
\begin{equation} \label{eq:uvcondition}
\OptSet(\MStar) \subseteq \USet \times \VSet, ~\text{and}~  \USet \subseteq \widebar{\USet},  \VSet \subseteq \widebar{\VSet},
\end{equation}
where 
\begin{equation*}
\widebar{\USet} := \left\{A \in \real^{d_1 \times r} ~\big|~ \twoinfnorm{A} \leq \sqrt{\frac{3\mu r\sigma_1^*}{d_1}}\right\}, ~
\widebar{\VSet} := \left\{A \in \real^{d_2 \times r}~\big|~ \twoinfnorm{A} \leq \sqrt{\frac{3\mu r\sigma_1^*}{d_2}}\right\}.
\end{equation*}

We let 
\begin{equation} \label{eq:G}
\mathcal{G}(\PIU, \PIV) : = \frac{1}{8}\frobnorm{\PIU^{\top}\PIU - \PIV^{\top}\PIV}^2.
\end{equation}
For $\Loss(\PIU, \PIV; \PIS)$, we denote the gradient with respect to $M$ by $\nabla_{M}\Loss(\PIU,\PIV;\PIS)$, i.e. $\nabla_{M}\Loss(\PIU,\PIV;\PIS) = \PIU\PIV^{\top} + S - Y$.

The  local descent property is implied by combining the following two results, which are proved in Section \ref{proof:lem:local_descent_L} and \ref{proof:lem:local_descent_g} respectively.
\begin{lemma}[Local descent property of $\Loss$] \label{lem:local_descent_L}
	Suppose $\USet, \VSet$ satisfy \eqref{eq:uvcondition}. For any $(U,V) \in (\USet \times \VSet) \cap \mathbb{B}_2(\sqrt{\sigma_1^*})$, we let $S = \UnifSparse{Y - \PIU\PIV^{\top}}{\gamma \alpha}$, where we choose $\gamma = 2$. Then we have that for $(\UniqueU, \UniqueV) \in \argmin_{(A, B) \in \OptSet(\MStar)} \frobnorm{\PIU - A}^2 + \frobnorm{\PIV - B}^2$ and  $\beta > 0$,
	\begin{align}\label{eq:local_descent}
	& \trinprod{\nabla_{M} \Loss(\PIU, \PIV; S)}{\PIU \PIV^{\top} - \UniqueU \UniqueV^{\top} + \DeltaU \DeltaV^{\top}}  \geq \frobnorm{\PIU \PIV^{\top} - \UniqueU \UniqueV^{\top}}^2 -  \nu\sigma_1^*\delta - 3\sqrt{\sigma_1^*\delta^3}.
	\end{align}
	Here $\DeltaU := \PIU - \UniqueU$, $\DeltaV := \PIV - \UniqueV$, $\delta := \frobnorm{\DeltaU}^2 + \frobnorm{\DeltaV}^2$, and $\nu := 9(\beta+6)\alpha \mu r + 5\beta^{-1}$.
\end{lemma}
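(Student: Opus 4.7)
Write $\Delta M := \PIU\PIV^{\top} - \MStar$ and $X := \PIU\DeltaV^{\top} + \DeltaU\PIV^{\top}$. A short algebraic expansion verifies both $\PIU\PIV^{\top} - \UniqueU\UniqueV^{\top} + \DeltaU\DeltaV^{\top} = X$ (the second factor in the inner product) and $X = \Delta M + \DeltaU\DeltaV^{\top}$. Using $\nabla_{M}\Loss(\PIU, \PIV; S) = \Delta M + (S - \SStar)$, the left-hand side of \eqref{eq:local_descent} decomposes as $\frobnorm{\Delta M}^2 + \trinprod{\Delta M}{\DeltaU\DeltaV^{\top}} + \trinprod{S - \SStar}{X}$, in which the first term is the desired leading quantity and the remaining two must be controlled. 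For the ``easy'' middle term, $\frobnorm{\DeltaU\DeltaV^{\top}} \leq \frobnorm{\DeltaU}\frobnorm{\DeltaV} \leq \delta/2$, while the decomposition $\Delta M = \PIU\DeltaV^{\top} + \DeltaU\UniqueV^{\top}$ together with $\opnorm{\PIU}, \opnorm{\UniqueV} = O(\sqrt{\sigma_1^*})$ on $\Neighbor{\sqrt{\sigma_1^*}}$ gives $\frobnorm{\Delta M} = O(\sqrt{\sigma_1^*\delta})$; Cauchy--Schwarz then produces the $3\sqrt{\sigma_1^*\delta^3}$ remainder directly.

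\textbf{Structure of the sparse residual.} Let $\Omega^* := \mathrm{supp}(\SStar)$ and $\Omega := \mathrm{supp}(S)$. From $Y - \PIU\PIV^{\top} = \SStar - \Delta M$ and the definition of the sparse estimator, $(S - \SStar)|_{\Omega} = -\Delta M|_{\Omega}$ (since on $\Omega$ we have $S = (Y - \PIU\PIV^{\top})|_{\Omega}$), $(S - \SStar)|_{\Omega^* \setminus \Omega} = -\SStar|_{\Omega^* \setminus \Omega}$ (since $S|_{\Omega^c} = 0$), and everything else vanishes. The combined support $\Omega \cup \Omega^*$ is therefore $(1 + \gamma)\alpha = 3\alpha$-sparse per row and per column. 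The key quantitative estimate is $\frobnorm{\Proj_{\Omega^* \setminus \Omega}(\SStar)}^2 = O(\frobnorm{\Delta M}^2)$, proved by the following pigeonhole argument: for $(i,j) \in \Omega^* \setminus \Omega$, $|(Y - \PIU\PIV^{\top})_{(i,j)}|$ lies below either $m_i^{\mathrm{row}}$, the $(2\alpha d_2)$-th largest magnitude in its row, or $m_j^{\mathrm{col}}$, the $(2\alpha d_1)$-th largest in its column; since $\SStar \in \Sparset{\alpha}$, at least $\alpha d_2$ of the top $2\alpha d_2$ entries of row $i$ of $Y - \PIU\PIV^{\top}$ correspond to uncorrupted positions where the residual equals $\Delta M$, yielding $\alpha d_2\,(m_i^{\mathrm{row}})^2 \leq \twonorm{\row{\Delta M}{i}}^2$. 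Summing the analogous row and column bounds together with $|\SStar_{(i,j)}| \leq |(Y - \PIU\PIV^{\top})_{(i,j)}| + |\Delta M_{(i,j)}|$ then gives $\frobnorm{S - \SStar}^2 = O(\frobnorm{\Delta M}^2)$.

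\textbf{Bounding the cross term and finishing.} For any sparsity pattern $\Omega'$ with at most $3\alpha d_1$ nonzeros per column and $3\alpha d_2$ per row, the $\mu$-incoherence enforced by $\PIU \in \widebar{\USet}$ (ensured by \eqref{eq:uvcondition}) gives $\twoinfnorm{\PIU} \leq \sqrt{3\mu r\sigma_1^*/d_1}$, so an elementary row/column counting argument yields $\frobnorm{\Proj_{\Omega'}(\PIU\DeltaV^{\top})} \leq \twoinfnorm{\PIU}\sqrt{3\alpha d_1}\frobnorm{\DeltaV}$, and therefore $\frobnorm{\Proj_{\Omega \cup \Omega^*}(X)} = O(\sqrt{\alpha\mu r\sigma_1^*\delta})$. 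Since $S - \SStar$ is supported in $\Omega \cup \Omega^*$, Cauchy--Schwarz combined with the previous paragraph gives $|\trinprod{S - \SStar}{X}| = O(\frobnorm{\Delta M}\sqrt{\alpha\mu r\sigma_1^*\delta})$. Young's inequality with free parameter $\beta$ then splits this into an $O(\beta^{-1})\frobnorm{\Delta M}^2$ piece plus an $O(\beta\alpha\mu r\sigma_1^*\delta)$ piece; bounding the former via $\frobnorm{\Delta M}^2 \leq 3\sigma_1^*\delta$ produces the $5\beta^{-1}\sigma_1^*\delta$ contribution in $\nu\sigma_1^*\delta$, and careful accounting of the constants (distinguishing the contributions from $\Omega$ and $\Omega^* \setminus \Omega$ separately and decomposing $\trinprod{S-\SStar}{X}$ as $-\trinprod{\Delta M}{\Proj_\Omega(X)} - \trinprod{\SStar}{\Proj_{\Omega^* \setminus \Omega}(X)}$) yields the $9(\beta + 6)\alpha\mu r$ coefficient.

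\textbf{Main obstacle.} The nontrivial step is the pigeonhole/threshold argument that controls the ``missed'' corruptions $\Proj_{\Omega^* \setminus \Omega}(\SStar)$: one must argue that each such entry has magnitude bounded by a row- or column-wise norm of $\Delta M$, and then sum using the row/column sparsity of $\SStar$. Everything else reduces to Cauchy--Schwarz, Young's inequality, and a standard incoherence-based projection bound.
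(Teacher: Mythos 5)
Your proposal is correct and follows essentially the same route as the paper's proof: the same expansion of the gradient inner product around $\frobnorm{\PIU\PIV^{\top}-\MStar}^2$, the same characterization of $S-\SStar$ on $\Omega$ and $\Omega^*\setminus\Omega$, the same pigeonhole bound on the missed corruptions via order statistics of the rows/columns of the residual (the paper's $b_{ij}$), the same incoherence-based projection bound (Lemma \ref{lem:projection}), and the same Young's-inequality split with the free parameter $\beta$. The only caution is that your regrouping --- bounding $\trinprod{\PIU\PIV^{\top}-\MStar}{\DeltaU\DeltaV^{\top}}$ and $\trinprod{S-\SStar}{\DeltaU\DeltaV^{\top}}$ separately --- forfeits the exact cancellation $\Proj_{\Omega}\left(\PIU\PIV^{\top}+S-\MStar-\SStar\right)=0$ that the paper exploits (its term $T_2$ is supported on $\Omega^c$ only), so to land on the stated constant $3\sqrt{\sigma_1^*\delta^3}$ you should bound the combined term $\trinprod{\PIU\PIV^{\top}+S-\MStar-\SStar}{\DeltaU\DeltaV^{\top}}$ as one quantity; taking absolute values of the two pieces separately yields a slightly larger (though still absolute, hence harmless downstream) constant.
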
  

\vskip 0.1in
\begin{lemma}[Local descent property of $\mathcal{G}$] \label{lem:local_descent_g}
	For any $(\PIU,\PIV) \in \mathbb{B}_2(\sqrt{\sigma_r^*})$ and 
	\[
	(\UniqueU, \UniqueV) \in \argmin_{(A, B) \in \OptSet(\MStar)} \frobnorm{\PIU - A}^2 + \frobnorm{\PIV - B}^2,
	\] 
	we have
	\begin{align*}
	& \trinprod{\nabla_{\PIU} \mathcal{G}(\PIU, \PIV)}{U -  \UniqueU} + \trinprod{\nabla_{\PIV} \mathcal{G}(\PIU, \PIV)}{V -  \UniqueV} \\
	& \geq \frac{1}{8}\frobnorm{\PIU^{\top}\PIU - \PIV^{\top}\PIV}^2 + \frac{1}{8}\sigma_r^*\delta -  \sqrt{\frac{\sigma_1^*\delta^3}{2}} -\frac{1}{2}\frobnorm{\PIU \PIV^{\top} - \UniqueU\UniqueV^{\top}}^2,
	\end{align*}
	where $\delta$ is defined according to Lemma \ref{lem:local_descent_L}.
\end{lemma}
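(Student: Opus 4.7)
The plan is to compute the gradient inner product directly, simplify it using the balance condition that characterizes $\OptSet$, and then invoke a balanced-factor curvature bound to expose the $\sigma_r^*\delta$ term.

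First, the chain rule gives $\nabla_\PIU \mathcal{G} = \tfrac{1}{2}\PIU W$ and $\nabla_\PIV \mathcal{G} = -\tfrac{1}{2}\PIV W$ with $W := \PIU^\top\PIU - \PIV^\top\PIV$, so a routine trace manipulation yields
\[
\text{LHS} \;=\; \tfrac{1}{2}\|W\|_F^2 \;-\; \tfrac{1}{2}\langle W,\; \PIU^\top \UniqueU - \PIV^\top \UniqueV\rangle.
\]
Since $(\UniqueU,\UniqueV)\in\OptSet(\MStar)$ satisfies $\UniqueU^\top\UniqueU = \UniqueV^\top\UniqueV$, writing $\PIU = \UniqueU + \DeltaU$ and $\PIV = \UniqueV + \DeltaV$ produces the splitting $W = T + C$, where $T := \UniqueU^\top\DeltaU + \DeltaU^\top\UniqueU - \UniqueV^\top\DeltaV - \DeltaV^\top\UniqueV$ is symmetric and first order while $C := \DeltaU^\top\DeltaU - \DeltaV^\top\DeltaV$ is symmetric and second order. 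Because $W$ is symmetric and $\PIU^\top\UniqueU - \PIV^\top\UniqueV = \DeltaU^\top\UniqueU - \DeltaV^\top\UniqueV$ has symmetric part $T/2$, only the symmetric part contributes and $\langle W, \PIU^\top\UniqueU - \PIV^\top\UniqueV\rangle = \tfrac{1}{2}(\|T\|_F^2 + \langle T, C\rangle)$. Substituting this together with $\|W\|_F^2 = \|T\|_F^2 + 2\langle T, C\rangle + \|C\|_F^2$ yields the identity
\[
\text{LHS} \;=\; \tfrac{1}{4}\|T\|_F^2 + \tfrac{3}{4}\langle T, C\rangle + \tfrac{1}{2}\|C\|_F^2 \;=\; \tfrac{1}{8}\|W\|_F^2 + \tfrac{1}{8}\|T+2C\|_F^2 - \tfrac{1}{8}\|C\|_F^2,
\]
which already isolates the desired $\tfrac{1}{8}\|W\|_F^2$ summand.

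It remains to produce the $\tfrac{1}{8}\sigma_r^*\delta$ term while canceling the $-\tfrac{1}{2}\|\PIU\PIV^\top - \UniqueU\UniqueV^\top\|_F^2$ debt. For this I would prove a balanced-factor curvature lemma in the spirit of Tu et al.~(adapted to the asymmetric factorization $[\PIU;\PIV]$ vs.~$[\UniqueU;\UniqueV]$): expand $\PIU\PIV^\top - \UniqueU\UniqueV^\top = \DeltaU\UniqueV^\top + \UniqueU\DeltaV^\top + \DeltaU\DeltaV^\top$, lower bound $\|\DeltaU\UniqueV^\top\|_F^2 + \|\UniqueU\DeltaV^\top\|_F^2 \geq \sigma_r^*\delta$ using $\sigma_{\min}(\UniqueU^\top\UniqueU) = \sigma_r^*$, and use $\|\UniqueU\|_\mathrm{op}, \|\UniqueV\|_\mathrm{op} \leq \sqrt{\sigma_1^*}$ along with $\|\DeltaU\DeltaV^\top\|_F \leq \delta/2$ to control the residual cross-terms. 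The Procrustes-optimality of $(\UniqueU,\UniqueV)$ in the distance metric is essential here: it forces $\UniqueU^\top\DeltaU + \UniqueV^\top\DeltaV$ to be symmetric, which is what ties the non-trivial tangent component of $P := \DeltaU^\top\UniqueU - \DeltaV^\top\UniqueV$ back to $\|T\|_F$. The upshot is $\|\PIU\PIV^\top - \UniqueU\UniqueV^\top\|_F^2 + \tfrac{1}{4}\|T\|_F^2 \geq \tfrac{1}{2}\sigma_r^*\delta - O(\sqrt{\sigma_1^*\delta^3})$. Combined with the identity above and $\tfrac{1}{8}\|C\|_F^2 \leq \tfrac{1}{8}\delta^2$ --- which is absorbed in $\sqrt{\sigma_1^*\delta^3/2}$ inside $\Neighbor{\sqrt{\sigma_r^*}}$ since $\delta \leq \sigma_r^* \leq \sigma_1^*$ --- this closes the claim.

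The main obstacle is the constant bookkeeping. A direct Cauchy--Schwarz on $\langle T, C\rangle$ gives $|\langle T, C\rangle| \leq \|T\|_F\|C\|_F \leq 2\sqrt{2\sigma_1^*\delta^3}$, which exceeds the claimed slack $\sqrt{\sigma_1^*\delta^3/2}$ by a constant factor. The completing-the-square rewriting $\tfrac{1}{4}\|T\|_F^2 + \tfrac{3}{4}\langle T, C\rangle + \tfrac{1}{2}\|C\|_F^2 = \tfrac{1}{8}\|W\|_F^2 + \tfrac{1}{8}\|T+2C\|_F^2 - \tfrac{1}{8}\|C\|_F^2$ is essential precisely because it packages the cross term into the nonnegative square $\tfrac{1}{8}\|T+2C\|_F^2$, so only the subquadratic $\|C\|_F^2$ residual must be absorbed into the slack. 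Analogous care is needed when proving the curvature lemma itself, where the quadratic piece $\DeltaU\DeltaV^\top$ of $\PIU\PIV^\top - \UniqueU\UniqueV^\top$ and its cross products with $\DeltaU\UniqueV^\top + \UniqueU\DeltaV^\top$ must be estimated via $\|\DeltaU\DeltaV^\top\|_F \leq \delta/2$ and $\|\DeltaU\UniqueV^\top + \UniqueU\DeltaV^\top\|_F \leq \sqrt{2\sigma_1^*\delta}$, so that all residuals remain $O(\sqrt{\sigma_1^*\delta^3})$.
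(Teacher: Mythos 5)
Your algebraic reduction of the left-hand side is correct and in fact coincides with the paper's: both routes arrive at $\tfrac14\frobnorm{T}^2+\tfrac34\trinprod{T}{C}+\tfrac12\frobnorm{C}^2$ with $T$ the first-order part of $W:=\PIU^\top\PIU-\PIV^\top\PIV$ and $C:=\DeltaU^\top\DeltaU-\DeltaV^\top\DeltaV$, and your completing-the-square identity checks out. The overall architecture --- isolate $\tfrac18\frobnorm{W}^2$, then buy the $\sigma_r^*\delta$ term from a curvature bound traded against $\frobnorm{\PIU\PIV^\top-\UniqueU\UniqueV^\top}^2$ via the Procrustes symmetry --- is also the right one; the paper implements it through the lifted identity $\frobnorm{W}^2\ge\frobnorm{FF^\top-F_{\pi^*}F_{\pi^*}^\top}^2-4\frobnorm{\PIU\PIV^\top-\UniqueU\UniqueV^\top}^2$ with $F=[\PIU;\PIV]$, followed by $\frobnorm{FF^\top-F_{\pi^*}F_{\pi^*}^\top}\ge2\sqrt{\sigma_r^*\delta}-\delta$. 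But there are two concrete gaps in your version.

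First, in your curvature lemma the dangerous term is not a ``residual cross-term.'' Writing $A:=\UniqueU^\top\DeltaU$ and $B:=\UniqueV^\top\DeltaV$, expanding $\frobnorm{\DeltaU\UniqueV^\top+\UniqueU\DeltaV^\top}^2$ produces $2\trace(AB)$, which is \emph{first order} in $\delta$ with coefficient up to $\sigma_1^*$ and can be negative; no operator-norm or Cauchy--Schwarz estimate places it in the $O(\sqrt{\sigma_1^*\delta^3})$ bucket. It must be cancelled exactly by $\tfrac14\frobnorm{T}^2$: using that $A+B$ is symmetric (Lemma~\ref{lem:symm}) one gets $2\trace(AB)+\tfrac14\frobnorm{T}^2=\frobnorm{\mathrm{sym}(A)}^2+\frobnorm{\mathrm{sym}(B)}^2+2\frobnorm{\mathrm{skew}(A)}^2\ge0$. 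You allude to this mechanism but never establish it, so the central step of your curvature lemma is missing (it is true, and the identity above is how to prove it). Second, the constants do not close. To invoke that lemma you must convert $\tfrac18\frobnorm{T+2C}^2$ back into $\tfrac18\frobnorm{T}^2$, which reintroduces $\tfrac12\trinprod{T}{C}$; since $\abs{\trinprod{T}{C}}\le2\sqrt{2\sigma_1^*\delta^3}$, this alone contributes $2\sqrt{\sigma_1^*\delta^3/2}$ --- more than the entire stated slack --- before adding the curvature lemma's own $O(\sqrt{\sigma_1^*\delta^3})$ losses and the $\tfrac18\frobnorm{C}^2$ term. These cannot be hidden in the $\sigma_r^*\delta$ headroom, because $\sqrt{\sigma_1^*\delta^3}$ can exceed $\sigma_r^*\delta$ by a factor $\sqrt{\kappa}$. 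The paper avoids this by bounding the single cross term $\tfrac14\trinprod{W}{C}$ once, which yields exactly $\sqrt{\sigma_1^*\delta^3/2}$. As written, your route would prove the lemma only with a degraded constant on the $\sqrt{\sigma_1^*\delta^3}$ slack (harmless for Theorem~\ref{thm:convergence}, but not the inequality as stated).
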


As another key ingredient, we establish the following smoothness condition, proved in Section \ref{proof:lem:smoothness}, which indicates that the Frobenius norm of gradient decreases as $(\PIU, \PIV)$ approaches the optimal manifold.

\begin{lemma}[Smoothness] \label{lem:smoothness}
	For any $(\PIU, \PIV) \in \mathbb{B}_2(\sqrt{\sigma_1^*})$, we let $S = \UnifSparse{Y - \PIU\PIV^{\top}}{\gamma\alpha}$, where we choose $\gamma = 2$. We have that
	\begin{equation} \label{eq:smooth_L}
	\frobnorm{\nabla_{M}\Loss(\PIU,\PIV;\PIS)}^2 \leq  6\frobnorm{\PIU\PIV^{\top} - \MStar}^2,
	\end{equation}
	and
	\begin{equation} \label{eq:smooth_G}
	\frobnorm{\nabla_{\PIU} \mathcal{G}(\PIU, \PIV)}^2 + \frobnorm{\nabla_{\PIV} \mathcal{G}(\PIU, \PIV)}^2 \leq 2\sigma_1^*\frobnorm{\PIU^{\top}\PIU - \PIV^{\top}\PIV}^2.
	\end{equation}
\end{lemma}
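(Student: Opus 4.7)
}

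The two inequalities are independent and use very different arguments, so I would treat them separately. For \eqref{eq:smooth_G}, the second bound is essentially a direct calculation. Setting $D := \PIU^\top\PIU - \PIV^\top\PIV$, one differentiates $\mathcal{G} = \tfrac{1}{8}\fronorm{D}^2$ to get $\nabla_\PIU\mathcal{G} = \tfrac{1}{2}\PIU D$ and $\nabla_\PIV\mathcal{G} = -\tfrac{1}{2}\PIV D$, so
\[
\fronorm{\nabla_\PIU\mathcal{G}}^2 + \fronorm{\nabla_\PIV\mathcal{G}}^2 \leq \tfrac{1}{4}\bigl(\opnorm{\PIU}^2 + \opnorm{\PIV}^2\bigr)\fronorm{D}^2.
\]
Since $(\PIU,\PIV) \in \Neighbor{\sqrt{\sigma_1^*}}$, there exists $\RotateM \in \RotateSet{r}$ with $\fronorm{\PIU - \OptU\RotateM} \leq \sqrt{\sigma_1^*}$, hence by triangle inequality $\opnorm{\PIU} \leq \opnorm{\OptU} + \fronorm{\PIU - \OptU\RotateM} \leq 2\sqrt{\sigma_1^*}$; symmetrically $\opnorm{\PIV} \leq 2\sqrt{\sigma_1^*}$. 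Plugging in yields $\leq 2\sigma_1^*\fronorm{D}^2$ as required.

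The interesting bound is \eqref{eq:smooth_L}. Write $E := \PIU\PIV^\top - \MStar$ and $A := Y - \PIU\PIV^\top = \SStar - E$, so that $\nabla_M\Loss = -A + \PIS$ with $\PIS = \UnifSparse{A}{2\alpha}$. Let $\Omega$ be the support of $\PIS$ and $\Omega^*$ the support of $\SStar$. By construction $\PIS_{ij} = A_{ij}$ on $\Omega$ and $\PIS_{ij} = 0$ off $\Omega$, so the gradient vanishes on $\Omega$ and equals $-A_{ij}$ on $\Omega^c$; thus
\[
\fronorm{\nabla_M\Loss}^2 \;=\; \sum_{(i,j)\notin\Omega} A_{ij}^2 \;=\; \sum_{(i,j)\notin\Omega\cup\Omega^*} E_{ij}^2 \;+\; \sum_{(i,j)\in\Omega^*\setminus\Omega} A_{ij}^2,
\]
where on the first piece I used that $\SStar_{ij} = 0$ implies $A_{ij} = -E_{ij}$. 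The first sum is trivially at most $\fronorm{E}^2$, so the heart of the argument is to bound the second sum by a constant multiple of $\fronorm{E}^2$.

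For each $(i,j) \in \Omega^*\setminus\Omega$, since the entry was killed by the sparsifier with threshold $2\alpha$, either the row condition or the column condition must have failed. By symmetry, assume the row condition fails: then at least $2\alpha d_2$ entries in row $i$ of $A$ have magnitude $\geq |A_{ij}|$. Since $\SStar \in \Sparset{\alpha}$ contributes at most $\alpha d_2$ of these, the remaining $\geq \alpha d_2$ entries lie in $(\Omega^*)^c$, where $A_{ik} = -E_{ik}$. This yields the pointwise bound $\alpha d_2 \, A_{ij}^2 \leq \sum_k E_{ik}^2 = \fronorm{\row{E}{i}}^2$. Summing this bound over the at most $\alpha d_2$ entries $j$ in row $i$ which failed the row test gives $\fronorm{\row{E}{i}}^2$, and summing over $i$ yields $\fronorm{E}^2$; the column-fail case contributes another $\fronorm{E}^2$. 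Combining, $\sum_{(i,j)\in\Omega^*\setminus\Omega} A_{ij}^2 \leq 2\fronorm{E}^2$, and so $\fronorm{\nabla_M\Loss}^2 \leq 3\fronorm{E}^2 \leq 6\fronorm{E}^2$.

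\paragraph{Main obstacle.} The bookkeeping in the last step is the delicate part: one must simultaneously exploit the row/column sparsity of $\SStar$ (at most $\alpha d$ per row/column) and the ``top-$2\alpha$'' rule of $\UnifSparse{\cdot}{2\alpha}$ to pigeonhole the surviving large entries into $(\Omega^*)^c$, and then avoid double-counting when summing $A_{ij}^2$ over all the discarded entries in a row (resp.\ column). Once this is done cleanly, the rest reduces to a trivial support decomposition; the choice $\gamma = 2$ is precisely what guarantees the factor-of-two gap between $\Omega^*$-density and the thresholding cutoff needed for the pigeonhole step.
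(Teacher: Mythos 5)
Your proposal is correct and follows essentially the same route as the paper: a direct gradient computation plus the operator-norm bound $\opnorm{\PIU},\opnorm{\PIV}\le 2\sqrt{\sigma_1^*}$ for \eqref{eq:smooth_G}, and for \eqref{eq:smooth_L} the same support decomposition over $\Omega^c\setminus\Omega^*$ and $\Omega^*\setminus\Omega$ combined with the same pigeonhole bound on discarded entries (the paper's $b_{ij}$ argument in \eqref{eq:def_b_ij}--\eqref{eq:tmp3}). The only cosmetic difference is that you sum squares over the disjoint supports directly (yielding constant $3$), whereas the paper applies the triangle inequality to the norms and gets $(1+\sqrt{2})^2\le 6$.
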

With the above results in hand, we are ready to prove Theorem \ref{thm:convergence}.
\begin{proof}[Proof of Theorem \ref{thm:convergence}.] 
	We use shorthands 
	\[
	\delta_t := d^2(U_t, V_t; \OptU, \OptV), ~\Loss_t := \Loss(\PIU_t,\PIV_t; \PIS_t),~ \text{and}~~ \mathcal{G}_t := \mathcal{G}(\PIU_t,\PIV_t).
	\] 
	For $(\PIU_t, \PIV_t)$, let $(\UniqueU^{t}, \UniqueV^{t}) := \argmin_{(A, B) \in \OptSet(\MStar)} \frobnorm{\PIU_t - A}^2 + \frobnorm{\PIV_t - B}^2$. Define $\Delta_{\PIU}^t := \PIU_t - \UniqueU^{t}$, $\Delta_{\PIV}^t := \PIV_t - \UniqueV^{t}$.
	
	We prove Theorem \ref{thm:convergence} by induction. It is sufficient to consider one step of the iteration. For any $t \geq 0$, under the induction hypothesis $(\PIU_t, \PIV_t) \in \Neighbor{c_2\sqrt{\sigma^*_r/\kappa}}$. We find that
	\begin{align}
	\delta_{t+1} & \leq \frobnorm{U_{t+1} - \UniqueU^t}^2 + \frobnorm{V_{t+1} - \UniqueV^t}^2 \notag\\
	& \leq \frobnorm{U_t - \eta \nabla_{\PIU}\Loss_t - \eta\nabla_{\PIU}\mathcal{G}_t - \UniqueU^t}^2  + \frobnorm{V_t - \eta \nabla_{\PIV}\Loss_t - \eta\nabla_{\PIV}\mathcal{G}_t - \UniqueV^t}^2 \notag \\
	& \leq \delta_t - 2\eta\underbrace{\trinprod{\nabla_{\PIU}\Loss_t + \nabla_{\PIU}\mathcal{G}_t}{\PIU_t - \UniqueU^t}}_{W_1} - 2\eta\underbrace{\trinprod{\nabla_{\PIV}\Loss_t + \nabla_{\PIV}\mathcal{G}_t}{\PIV_t - \UniqueV^t}}_{W_2} \notag\\
	& ~~~~~~~ + \eta^2 \underbrace{\frobnorm{\nabla_{\PIU}\Loss_t + \nabla_{\PIU}\mathcal{G}_t}^2}_{W_3} + \eta^2\underbrace{\frobnorm{\nabla_{\PIV}\Loss_t + \nabla_{\PIV}\mathcal{G}_t}^2}_{W_4}, \label{eq:delta_relation}
	\end{align}
	where the second step follows from the non-expansion property of projection onto $\USet, \VSet$, which is implied by $\OptSet(\MStar) \subseteq \USet \times \VSet$ shown in \eqref{eq:uvcondition}. Since $\nabla_{\PIU}\Loss_t = \left[\nabla_{M}\Loss_t\right] \PIV$ and $\nabla_{\PIV}\Loss_t = \left[\nabla_{M}\Loss_t\right]^{\top} \PIU$, we have
	\[
	\trinprod{\nabla_{\PIU}\Loss_t}{\PIU_t - \UniqueU^t} + \trinprod{\nabla_{\PIV}\Loss_t}{\PIV_t - \UniqueV^t} = \trinprod{\nabla_{M} \Loss_t}{\PIU_t \PIV_t^{\top} - \UniqueU^t \UniqueV^{t\top} + \DeltaU^t \DeltaV^{t\top}}.
	\]
	
	Combining Lemma \ref{lem:local_descent_L} and \ref{lem:local_descent_g}, under condition $\delta_t < \sigma_r^*$, we have that
	\[
	W_1 + W_2 \geq \frac{1}{2}\frobnorm{\PIU_t \PIV_t^{\top} - \MStar}^2 +  \frac{1}{8}\frobnorm{\PIU_t^{\top}\PIU_t - \PIV_t^{\top}\PIV_t}^2 + \frac{1}{8}\sigma_r^*\delta_t -  \nu\sigma_1^*\delta_t - 4\sqrt{\sigma_1^*\delta_t^3}.
	\] 
	
	On the other hand, we have
	\begin{align*}
	W_3 + W_4 & \leq 2\frobnorm{\nabla_{\PIU}\Loss_t}^2 + 2\frobnorm{\nabla_{\PIU}\mathcal{G}_t}^2 +  2\frobnorm{\nabla_{\PIV}\Loss_t}^2 + 2\frobnorm{\nabla_{\PIV}\mathcal{G}_t}^2 \\
	& \leq 2(\opnorm{\PIU_t}^2 + \opnorm{\PIV_t}^2)\frobnorm{\nabla_M \Loss_t}^2 + 2\frobnorm{\nabla_{\PIU}\mathcal{G}_t}^2 + 2\frobnorm{\nabla_{\PIV}\mathcal{G}_t}^2 \\
	& \leq 36\sigma_1^*\frobnorm{\PIU_t\PIV_t^{\top} - \MStar}^2 +  4\sigma_1^*\frobnorm{\PIU_t^{\top}\PIU_t - \PIV_t^{\top}\PIV_t}^2,
	\end{align*}
	where the last step is implied by Lemma \ref{lem:smoothness} and the assumption $(\PIU_t, \PIV_t) \in \Neighbor{c_2\sqrt{\sigma^*_r/\kappa}}$ that leads to $\opnorm{\PIU_t} \leq \sqrt{3\sigma_1^*/2}, \opnorm{\PIV_t} \leq \sqrt{3\sigma_1^*/2}$.
	
	By the assumption $\eta = c/\sigma_1^*$ for any constant $c \leq 1/36$, we thus have
	\begin{align*}
	-2\eta(W_1 + W_2) + \eta^2(W_3 + W_4) \leq  - \frac{1}{4}\eta\sigma_r^*\delta_t + 2\eta  \nu\sigma_1^*\delta_t +  8\eta\sqrt{\sigma_1^*\delta_t^3}.
	\end{align*}
	In Lemma \ref{lem:local_descent_L}, choosing $\beta = 320\kappa$ and assuming $\alpha \lesssim 1/(\kappa^2\mu r)$, we can have $\nu \leq 1/(32\kappa)$. Assuming $\delta_t \lesssim \sigma_r^*/\kappa$ leads to $14\sqrt{\sigma_1^*\delta_t^3} \leq \frac{1}{16}\sigma_r^*\delta_t$. We thus obtain
	\begin{equation} \label{eq:one_step}
	\delta_{t+1} \leq \left(1 - \frac{\eta \sigma_r^*}{8}\right)\delta_t.
	\end{equation}
	Under initial condition $\delta_0 \lesssim \sigma_r^*/\kappa$, we obtain that such condition holds for all $t$ since estimation error decays geometrically after each iteration. Then applying \eqref{eq:one_step} for all iterations, we conclude that for all $t = 0,1,\ldots$,
	\[
	\delta_t \leq \left(1 - \frac{\eta \sigma_r^*}{8}\right)^t\delta_0.
	\]
\end{proof}

\subsection{Proof of Corollary \ref{cor:rpca}} \label{proof:cor:rpca}

We need $\alpha \lesssim \frac{1}{\kappa^2 \mu r}$ due to the condition of Theorem \ref{thm:convergence}. In order to ensure the linear convergence happens, it suffices to let the initial error shown in Theorem \ref{thm:initialization} be less than the corresponding condition in Theorem \ref{thm:convergence}. Accordingly, we need
\[
28 \sqrt{\kappa} \alpha \mu r\sqrt{r} \sqrt{\sigma_1^*} \lesssim \sqrt{\sigma^*_r/\kappa},
\]
which leads to $\alpha \lesssim \frac{1}{\mu \sqrt{r\kappa}^3}$.

Using the conclusion that gradient descent has linear convergence, choosing $T = \order(\kappa \log(1/\varepsilon))$, we have
\[
d^2(\PIU_{T}, \PIV_{T}; \OptU, \OptV) \leq \varepsilon^2 d^2(\PIU_{0}, \PIV_{0}; \OptU, \OptV) \lesssim \varepsilon^2 \frac{\sigma_r^*}{\kappa}.
\]
Finally, applying the relationship between $d(\PIU_{T}, \PIV_{T}; \OptU, \OptV)$ and $\frobnorm{\PIU_{T}\PIV_{T}^{\top} - \MStar}$ shown in \eqref{eq:error_relation}, we complete the proof.

\subsection{Proof of Theorem \ref{thm:initial_partial}} \label{proof:thm:initial_partial}

Let $\widebar{Y} := \frac{1}{p}(Y - \InitS)$. Similar to the proof of Theorem \ref{thm:initialization}, we first establish an upper bound on $\opnorm{\widebar{Y} - \MStar}$. We have that
\begin{equation} \label{eq:op_bound}
\opnorm{\widebar{Y} - \MStar} \leq \opnorm{\widebar{Y} - \frac{1}{p}\Proj_{\ObserveSet}\MStar} + \opnorm{\frac{1}{p}\Proj_{\ObserveSet}(\MStar) - \MStar}.
\end{equation}

For the first term, we have $\widebar{Y} - \frac{1}{p}\Proj_{\ObserveSet}\MStar = \frac{1}{p}(\Proj_{\ObserveSet}(\SStar) - \InitS)$ because $Y = \Proj_{\ObserveSet}(\MStar + \SStar)$.  Lemma \ref{lem:size_concentration} shows that under condition $p \gtrsim \frac{\log d}{\alpha (d_1 \wedge d_2)}$, there are at most $\frac{3}{2}p\alpha$-fraction nonzero entries in each row and column of $\Proj_{\ObserveSet}(\SStar)$ with high probability. Since $\InitS \in \Sparset{2p\alpha}$, we have 
\begin{equation} \label{eq:property_1}
\Proj_{\ObserveSet}(\SStar) - \InitS \in \Sparset{4p\alpha}.
\end{equation}
In addition, we prove below that 
\begin{equation} \label{eq:property_2}
\infnorm{\Proj_{\ObserveSet}(\SStar) - \InitS} \leq 2\infnorm{\MStar}.
\end{equation}
Denote the support of $\Proj_{\ObserveSet}(\SStar)$ and $\InitS$ by $\Omega_o^*$ and $\Omega$. For $(i,j) \in \Omega_o^* \cap \Omega$ and $(i,j) \in \Omega \setminus \Omega_o^*$, we have $\entry{(\Proj_{\ObserveSet}(\SStar) - \InitS)}{i}{j} = 0$ and $\entry{(\Proj_{\ObserveSet}(\SStar) - \InitS)}{i}{j} = -\entry{\MStar}{i}{j}$, respectively. To prove the claim, it remains to show that for $(i,j) \in \Omega_o^*\setminus \Omega$, $\abs{\entry{\SStar}{i}{j}} < 2\infnorm{\MStar}$. If this is not true, then we must have $\abs{\entry{Y}{i}{j}} > \infnorm{\MStar}$. Accordingly, $\abs{\entry{Y}{i}{j}}$ is larger than the magnitude of any uncorrupted entries in its row and column. Note that on the support $\Phi$, there are at most $\frac{3}{2}p\alpha$ corruptions per row and column, we have $(i,j) \in \Omega$, which violates our prior condition $(i,j) \in \Omega_o^*\setminus \Omega$. 

Using these two properties \eqref{eq:property_1}, \eqref{eq:property_2} and applying Lemma \ref{lem:op_to_inf}, we have
\begin{equation} \label{eq:opnorm_bound_1}
\opnorm{\widebar{Y}- \frac{1}{p}\Proj_{\ObserveSet}\MStar} \leq 4\alpha\sqrt{d_1d_2}\infnorm{\Proj_{\ObserveSet}(\SStar) - \InitS} \leq 8\alpha\sqrt{d_1d_2}\infnorm{\MStar} \leq 8\alpha \mu r \sigma_1^*,
\end{equation}
where the last step follow from \eqref{eq:top_element}.

For the second term in \eqref{eq:op_bound}, we use the following lemma proved in \cite{DBLP:journals/tit/Chen15}.
\begin{lemma}[Lemma 2 in \cite{DBLP:journals/tit/Chen15}] \label{lem:op_sampling}
	Suppose $A \in \real^{d_1 \times d_2}$ is a fixed matrix. We let $d := \max\{d_1, d_2\}$. There exists a constant $c$ such that with probability at least $1 - \order(d^{-1})$,
	\[
	\opnorm{\frac{1}{p}\Proj_{\ObserveSet}(A) - A} \leq c\left(\frac{\log d}{p}\infnorm{A} + \sqrt{\frac{\log d}{p}} \max\left\{\twoinfnorm{A}, \twoinfnorm{A^{\top}}\right\} \right).
	\]
\end{lemma}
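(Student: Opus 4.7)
The plan is to recognize this as a standard application of the (rectangular) matrix Bernstein inequality to an explicit sum of independent mean-zero random matrices. The entire structure is mechanical once the right decomposition is chosen, so the work amounts to computing the two standard Bernstein ingredients and tuning constants.

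First, I would decompose the error into a sum. Let $\delta_{ij}$ be independent Bernoulli$(p)$ indicators of $(i,j) \in \ObserveSet$. Then the error admits the rank-one decomposition
\[
\frac{1}{p}\Proj_{\ObserveSet}(A) - A \;=\; \sum_{i=1}^{d_1}\sum_{j=1}^{d_2} Z_{ij}, \qquad Z_{ij} \defn \left(\frac{\delta_{ij}}{p} - 1\right)\entry{A}{i}{j}\, e_i e_j^{\top},
\]
where the $Z_{ij}$ are independent with $\Exs[Z_{ij}] = 0$.

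Next, I would compute the two Bernstein parameters. Since $|\delta_{ij}/p - 1| \leq 1/p$, we obtain the uniform bound
\[
\opnorm{Z_{ij}} \;\leq\; \frac{|\entry{A}{i}{j}|}{p} \;\leq\; \frac{\infnorm{A}}{p} \;=:\; L.
\]
For the matrix variance, direct computation gives $\Exs[Z_{ij} Z_{ij}^{\top}] = \frac{1-p}{p}\entry{A}{i}{j}^2 e_i e_i^{\top}$, whence
\[
\opnorm{\sum_{i,j}\Exs[Z_{ij} Z_{ij}^{\top}]} \;=\; \frac{1-p}{p}\max_{i}\sum_{j}\entry{A}{i}{j}^{2} \;\leq\; \frac{1}{p}\twoinfnorm{A}^{2},
\]
and by symmetry $\opnorm{\sum_{i,j}\Exs[Z_{ij}^{\top} Z_{ij}]} \leq \frac{1}{p}\twoinfnorm{A^{\top}}^{2}$. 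Define $\sigma^2 \defn \frac{1}{p}\max\{\twoinfnorm{A}, \twoinfnorm{A^{\top}}\}^2$.

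Then I would apply the rectangular matrix Bernstein inequality (Tropp): for every $t > 0$,
\[
\Prob\left(\opnorm{\sum_{i,j} Z_{ij}} \geq t\right) \;\leq\; (d_1 + d_2)\exp\!\left(\frac{-t^{2}/2}{\sigma^{2} + Lt/3}\right).
\]
Choosing $t = c_0(L\log d + \sigma\sqrt{\log d})$ for a suitably large absolute constant $c_0$ drives the exponent below $-2\log d$, so that the failure probability is at most $(d_1+d_2)d^{-2} = \order(d^{-1})$. Substituting the explicit values of $L$ and $\sigma$ yields exactly the stated bound with a new constant $c$.

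The proof is largely a bookkeeping exercise. The only real subtlety I would pay attention to is the interplay between the two regimes in the Bernstein tail: the sub-Gaussian term $\sigma\sqrt{\log d}$ dominates for moderate $t$, while the sub-exponential term $L\log d$ controls the heavier tail caused by the $1/p$ scaling of each summand. Keeping $d = \max\{d_1,d_2\}$ in the union-bound prefactor is what makes the stated high-probability statement go through uniformly in the aspect ratio of $A$.
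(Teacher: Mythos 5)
Your proof is correct: the rank-one decomposition, the uniform bound $L=\infnorm{A}/p$, the variance bound $\sigma^2 = \frac{1}{p}\max\{\twoinfnorm{A},\twoinfnorm{A^\top}\}^2$, and the choice $t \asymp L\log d + \sigma\sqrt{\log d}$ in the rectangular matrix Bernstein inequality all check out, and the union-bound prefactor $(d_1+d_2)$ is handled properly. Note that the paper itself does not prove this lemma — it imports it verbatim as Lemma 2 of \cite{DBLP:journals/tit/Chen15} — and your argument is precisely the standard matrix-Bernstein derivation used in that reference, so there is nothing to reconcile.
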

Given the SVD $\MStar = \PIL^*\SVMat \PIR^{*\top}$, for any $i \in [d_1]$, we have
\[
\twonorm{\row{\MStar}{i}} = \twonorm{\row{\PIL^*}{i}\SVMat \PIR^{*\top}} \leq \sigma_1^*\twonorm{\row{\PIL^*}{i}} \leq \sigma_1^*\sqrt{\frac{\mu r}{d_1}}.
\]
We can bound $\twoinfnorm{M^{*\top}}$ similarly. Lemma \ref{lem:op_sampling} leads to
\begin{equation} \label{eq:opnorm_bound_2}
\opnorm{\frac{1}{p}\Proj_{\ObserveSet}(\MStar) - \MStar} \leq c'\frac{\log d}{p}\frac{\mu r}{\sqrt{d_1d_2}}\sigma_1^* + c'\sqrt{\frac{\log d}{p}}\sqrt{\frac{\mu r}{d_1 \wedge d_2}}\sigma_1^* \leq c' \epsilon \sigma_1^*/\sqrt{r}
\end{equation}
under condition $p \geq \frac{4\mu r^2\log d}{\epsilon^2(d_1 \wedge d_2)}$. 

Putting \eqref{eq:opnorm_bound_1} and \eqref{eq:opnorm_bound_2} together, we obtain
\[
\opnorm{\widebar{Y} - \MStar} \leq 8\alpha\mu r\sigma_1^* + c'\epsilon\sigma_1^*/\sqrt{r}.
\]
Then using the fact that $\PIU_0\PIV_0^{\top}$ is the best rank $r$ approximation of $\widebar{Y}$ and applying Wely's theorem (see the proof of Theorem \ref{thm:initialization} for a detailed argument), we have
\begin{align*}
\opnorm{\PIU_0\PIV_0^{\top} - \MStar} & \leq \opnorm{\PIU_0\PIV_0^{\top} - \widebar{Y}} + \opnorm{\widebar{Y} - \MStar} \\
& \leq 2\opnorm{\widebar{Y} - \MStar} \leq 16\alpha\mu r\sigma_1^* + 2c'\epsilon\sigma_1^*/\sqrt{r}
\end{align*}

Under our assumptions,  we have $16\alpha\mu r\sigma_1^* + 2c'\epsilon\sigma_1^*/\sqrt{r} \leq \frac{1}{2}\sigma_r^*$. Accordingly, Lemma \ref{lem:tu_lemma} gives
\[
d^2(\PIU_0, \PIV_0; \OptU, \OptV) \leq \frac{2}{\sqrt{2} - 1}\frac{\frobnorm{\PIU_0\PIV_0^{\top} - \MStar}^2}{\sigma_r^*} \leq \frac{10r\opnorm{\PIU_0\PIV_0^{\top} - \MStar}^2}{\sigma_r^*}.
\]
We complete the proof by combining the above two inequalities.

\subsection{Proof of Theorem \ref{thm:convergence_partial}} \label{proof:thm:convergence_partial}
In this section, we turn to prove Theorem \ref{thm:convergence_partial}. Similar to the proof of Theorem \ref{thm:convergence}, we rely on establishing the local descent and smoothness conditions. Compared to the full observation setting, we replace $\Loss$ by $\widetilde{\Loss}$ given in \eqref{eq:loss_partial}, while the regularization term $\widetilde{\mathcal{G}}(\PIU,\PIV) := \frac{1}{64}\frobnorm{\PIU^{\top}\PIU - \PIV^{\top}\PIV}^2$ merely differs from $\mathcal{G}(\PIU,\PIV)$ given in \eqref{eq:G} by a constant factor. It is thus sufficient to analyze the properties of $\widetilde{\Loss}$. 

Define $\OptSet(\MStar)$ according to \eqref{eq:Eq_set}. Under the initial condition, we still have
\begin{equation} \label{eq:uvcondition_partial}
\OptSet(\MStar) \subseteq \USet \times \VSet, ~\text{and}~  \USet \subseteq \widebar{\USet},  \VSet \subseteq \widebar{\VSet}.
\end{equation}

We prove the next two lemmas in Section \ref{proof:lem:local_descent_L_partial} and \ref{proof:lem:smoothness_partial} respectively. In both lemmas, for any $(\PIU, \PIV) \in \USet \times \VSet$, we use shorthands 
\[
(\UniqueU, \UniqueV) = \argmin_{(A, B) \in \OptSet(\MStar)} \frobnorm{\PIU - A}^2 + \frobnorm{\PIV - B}^2,
\]
$\DeltaU := \PIU - \UniqueU$, $\DeltaV := \PIV - \UniqueV$, and $\delta := \frobnorm{\DeltaU}^2 + \frobnorm{\DeltaV}^2$. Recall that $d := \max\{d_1, d_2\}$.
\begin{lemma}[Local descent property of $\widetilde{\Loss}$] \label{lem:local_descent_L_partial}
	Suppose $\USet, \VSet$ satisfy \eqref{eq:uvcondition_partial}. Suppose we let 
	\[
	S = \UnifSparse{\Proj_{\ObserveSet}\left(Y - \PIU\PIV^{\top}\right)}{\gamma p \alpha},
	\] 
	where we choose $\gamma = 3$. For any $\beta > 0$ and $\epsilon \in (0, \frac{1}{4})$, we define $\nu := (14\beta + 81)\alpha \mu r + 26\sqrt{\epsilon} + 18\beta^{-1}$. There exist constants $\{c_i\}_{i=1}^2$ such that if 
	\begin{equation} \label{eq:condition_p}
	p \geq c_1\left(\frac{\mu ^2 r^2}{\epsilon^2} + \frac{1}{\alpha} + 1\right)\frac{\log d}{d_1 \wedge d_2},
	\end{equation}
	then with probability at least $1 - c_2d^{-1}$,
	\begin{equation}\label{eq:local_descent_partial}
	\trinprod{\nabla_{M} \widetilde{\Loss}(\PIU, \PIV; S)}{\PIU \PIV^{\top} - \UniqueU \UniqueV^{\top} + \DeltaU \DeltaV^{\top}} 
	\geq \frac{3}{16}\frobnorm{\PIU \PIV^{\top} - \UniqueU \UniqueV^{\top}}^2 - \nu \sigma_1^*\delta - 10\sqrt{\sigma_1^*\delta^3} - 2\delta^2
	\end{equation}
	for all $(\PIU, \PIV) \in \left(\USet\times\VSet\right) \cap \Neighbor{\sqrt{\sigma_1^*}}$.
\end{lemma}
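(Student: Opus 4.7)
The plan is to mirror the proof of Lemma~\ref{lem:local_descent_L} term by term, replacing each identity ``$= \frobnorm{N}^2$'' or deterministic bound with its Bernoulli-concentration counterpart for the operator $\tfrac{1}{p}\Proj_{\ObserveSet}$. Set $N := \PIU\PIV^{\top} - \UniqueU\UniqueV^{\top} = \PIU\PIV^{\top} - \MStar$ and $E := S - \SStar$. Since $S$ is supported on $\ObserveSet$ we have $\Proj_{\ObserveSet}(N+E) = \Proj_{\ObserveSet} N + (S - \Proj_{\ObserveSet}\SStar)$, so the left-hand side of \eqref{eq:local_descent_partial} decomposes as
\begin{align*}
&\trinprod{\nabla_M\widetilde{\Loss}(\PIU,\PIV;S)}{\PIU\PIV^{\top} - \UniqueU\UniqueV^{\top} + \DeltaU\DeltaV^{\top}} \\
&= \underbrace{\tfrac{1}{p}\trinprod{\Proj_{\ObserveSet} N}{N}}_{T_1}
+ \underbrace{\tfrac{1}{p}\trinprod{S - \Proj_{\ObserveSet}\SStar}{N}}_{T_2}
+ \underbrace{\tfrac{1}{p}\trinprod{\Proj_{\ObserveSet}(N + E)}{\DeltaU\DeltaV^{\top}}}_{T_3}.
\end{align*}
The goal is to show $T_1 \geq \frobnorm{N}^2 - (\text{concentration loss})$, $T_2 \geq -\nu\sigma_1^*\delta - \beta^{-1}\frobnorm{N}^2$, and $T_3 \geq -O(\sqrt{\sigma_1^*\delta^3}) - O(\delta^2) - O(\alpha\mu r \sigma_1^*\delta)$, then combine.

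The first substep is to pin down the support and entrywise structure of $E$. Since $(\PIU,\PIV)\in\USet\times\VSet$ and $\MStar$ is $\mu$-incoherent, both $\infnorm{\PIU\PIV^{\top}}$ and $\infnorm{\MStar}$ are $O(\mu r \sigma_1^*/\sqrt{d_1 d_2})$. The Bernoulli tail bound of Lemma~\ref{lem:size_concentration}, applied under the $p \gtrsim \log d/((d_1\wedge d_2)\alpha)$ part of \eqref{eq:condition_p}, yields $\Proj_{\ObserveSet}\SStar \in \Sparset{\tfrac{3}{2}p\alpha}$ w.h.p., while the choice $\gamma=3$ forces $S \in \Sparset{3p\alpha}$. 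Hence the support $\Omega$ of $S - \Proj_{\ObserveSet}\SStar$ lies in $\Sparset{\tfrac{9}{2}p\alpha}$. By the same case split used in Theorem~\ref{thm:initial_partial}: on $\mathrm{supp}(S)\setminus\mathrm{supp}(\SStar)$ the entry of $S - \SStar$ equals $-N_{ij}$; on $\mathrm{supp}(\Proj_{\ObserveSet}\SStar)\setminus\mathrm{supp}(S)$ the top-$(3p\alpha)$ selection rule forces $\abs{\SStar_{ij}}\le 2\infnorm{N}$. Together these yield $\frobnorm{S-\Proj_{\ObserveSet}\SStar}^2 \lesssim \frobnorm{\Proj_\Omega N}^2$.

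For $T_1$: $N$ has rank at most $2r$ and inherits $O(\mu r)$ row/column incoherence from $(\PIU,\PIV)$ and $\MStar$, so a uniform bilinear Bernoulli concentration bound (in the style of \cite{DBLP:journals/tit/Chen15}, or obtainable from Lemma~\ref{lem:op_sampling} by duality) gives $|T_1 - \frobnorm{N}^2| \le O(\sqrt{\epsilon})(\frobnorm{N}^2 + \sigma_1^*\delta)$ under \eqref{eq:condition_p}. For $T_2$: Cauchy--Schwarz gives $|T_2| \le \tfrac{1}{p}\frobnorm{S - \Proj_{\ObserveSet}\SStar}\cdot\frobnorm{\Proj_\Omega N}$; the entrywise bound above, combined with the standard incoherence estimate $\frobnorm{\Proj_\Omega N}^2 \lesssim p\alpha\mu r(\frobnorm{N}^2 + \sigma_1^*\delta)$ for $\Omega\in\Sparset{\tfrac{9}{2}p\alpha}$ and a split $ab\le \tfrac{1}{2}(\beta a^2+\beta^{-1}b^2)$, produces both the $(14\beta+81)\alpha\mu r$ and $\beta^{-1}$ contributions to $\nu$. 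For $T_3$: peel off $\DeltaU\DeltaV^{\top}$; the low-rank piece contributes $O(\sqrt{\sigma_1^*\delta^3}) + O(\sqrt{\epsilon}\sigma_1^*\delta)$ via Lemma~\ref{lem:op_sampling} together with $\frobnorm{\DeltaU\DeltaV^{\top}}\le\delta/2$, while the sparse piece contributes $O(\alpha\mu r\sigma_1^*\delta) + O(\delta^2)$ using the row-incoherence of $\DeltaU,\DeltaV$ inherited from $\USet,\VSet$. Summing, the coefficient on $\frobnorm{N}^2$ contracts from $1$ to $\tfrac{3}{16}$ after absorbing the $\sqrt{\epsilon}$- and $\beta^{-1}$-losses, and the remainder matches the claimed bound.

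The main obstacle is uniformity: the inequality must hold \emph{simultaneously} for all $(\PIU,\PIV)\in(\USet\times\VSet)\cap\Neighbor{\sqrt{\sigma_1^*}}$, while $S$ depends on this pair through the top-$(3p\alpha)$ selection on the random set $\ObserveSet$. The resolution is to reduce the $(\PIU,\PIV)$-dependence entirely to $N = \PIU\PIV^{\top}-\MStar$ and to the incoherence constants of $\PIU,\PIV$, and then establish each Bernoulli concentration bound uniformly over the manifold of rank-$2r$, $O(\mu r)$-incoherent matrices via an $\epsilon$-net in Frobenius norm plus Lipschitz continuity of each bilinear form. This is the origin both of the sample-complexity exponent in \eqref{eq:condition_p} and of the $26\sqrt{\epsilon}$ term in $\nu$.
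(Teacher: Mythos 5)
Your decomposition of the left-hand side into $T_1=\tfrac{1}{p}\frobnorm{\Proj_{\ObserveSet}N}^2$, the cross term with $S-\Proj_{\ObserveSet}\SStar$, and the term against $\DeltaU\DeltaV^{\top}$ is exactly the paper's, and your treatment of $T_2$ and $T_3$ (support control via Lemma \ref{lem:size_concentration}, the order-statistic bound on unselected entries, the $\beta$-split, and the incoherence projection bound) tracks the paper's argument. The genuine divergence, and the gap, is in $T_1$. You claim a uniform two-sided bound $|T_1-\frobnorm{N}^2|\leq O(\sqrt{\epsilon})(\frobnorm{N}^2+\sigma_1^*\delta)$ over all rank-$2r$ incoherent residuals $N=\PIU\PIV^{\top}-\MStar$, justified either "by duality" from Lemma \ref{lem:op_sampling} or by an $\epsilon$-net over the manifold. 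Neither works as stated: Lemma \ref{lem:op_sampling} applies to a \emph{fixed} matrix, and $N$ depends on $(\PIU,\PIV)$, which are not independent of $\ObserveSet$; and a relative-error net argument fails because the spikiness ratio $\sqrt{d_1d_2}\,\infnorm{N}/\frobnorm{N}$ is unbounded on this cone (take $N$ dominated by $\DeltaU\DeltaV^{\top}$ with $\frobnorm{\DeltaU}\frobnorm{\DeltaV}$ arbitrarily small while the $\twoinfnorm{\cdot}$ constraints stay fixed), so the Bernstein fluctuation cannot be made $O(\sqrt{\epsilon})\frobnorm{N}^2$ under \eqref{eq:condition_p}. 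The paper avoids this by splitting $N=\UniqueU\DeltaV^{\top}+\DeltaU\UniqueV^{\top}+\DeltaU\DeltaV^{\top}$: the first two terms lie in the fixed tangent subspace at $\MStar$, where the RIP of Lemma \ref{lem:norm_sampling} (Cand\`es--Recht) holds uniformly, and the quadratic term is handled by the \emph{one-sided, additive-error} bound \eqref{eq:uv_sampling}. Paying the triangle-inequality factor $c^2\geq\tfrac12 a^2-b^2$ yields $T_1\geq\tfrac14(1-\epsilon)\frobnorm{N}^2-2\delta^2-5\epsilon\sigma_1^*\delta$, which is precisely where the constant $\tfrac{3}{16}$ (from $\epsilon\leq\tfrac14$) and the $-2\delta^2$ term in \eqref{eq:local_descent_partial} come from; your sketch, which keeps coefficient $1-O(\sqrt{\epsilon})$ on $\frobnorm{N}^2$, does not account for either.

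A second, more minor slip: on $\mathrm{supp}(\Proj_{\ObserveSet}\SStar)\setminus\mathrm{supp}(S)$ you invoke the initialization-style bound $\abs{\SStar_{(i,j)}}\leq 2\infnorm{\cdot}$, but in the descent lemma the needed estimate is the quantile bound \eqref{eq:bij_partial}, $\abs{(\MStar+\SStar-M)_{(i,j)}}\leq\max\{\abs{u_i^{((\gamma-1.5)p\alpha d_2)}},\abs{v_j^{((\gamma-1.5)p\alpha d_1)}}\}$, whose squares sum to $\tfrac{3}{\gamma-1.5}\frobnorm{\Proj_{\ObserveSet}(M-\MStar)}^2$; an $\ell_\infty$ bound alone is too weak to produce the $\beta^{-1}\sigma_1^*\delta$ contribution to $\nu$ after the $\beta$-split.
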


\begin{lemma}[Smoothness of $\widetilde{\Loss}$] \label{lem:smoothness_partial}
	Suppose $\USet, \VSet$ satisfy \eqref{eq:uvcondition_partial}. Suppose we let $S = \UnifSparse{\Proj_{\ObserveSet}\left(Y - \PIU\PIV^{\top}\right)}{\gamma\alpha p}$ for $\gamma = 3$. There exist constants $\{c_i\}_{i=1}^3$ such that for any $\epsilon \in (0,\frac{1}{4})$, when $p$ satisfies condition \eqref{eq:condition_p}, with probability at least $1 - c_2d^{-1}$, we have that for all $(\PIU, \PIV) \in (\USet \times \VSet) \cap \mathbb{B}_2(\sqrt{\sigma_1^*})$,
	\begin{equation} \label{eq:smooth_partial}
	\frobnorm{\nabla_{\PIU}\widetilde{\Loss}(\PIU,\PIV;\PIS)}^2 + \frobnorm{\nabla_{\PIV}\widetilde{\Loss}(\PIU,\PIV;\PIS)}^2 \leq c_3\left[\mu r \sigma_1^*\frobnorm{\PIU \PIV^{\top} - \UniqueU \UniqueV^{\top}}^2 + \mu r\sigma_1^*\delta(\delta + \epsilon\sigma_1^*)\right].
	\end{equation}
\end{lemma}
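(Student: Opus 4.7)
The plan is to reduce the partially-observed smoothness bound to its fully-observed analog (Lemma~\ref{lem:smoothness}) via Bernoulli-sampling concentration, exploiting the $\mu r$-incoherence of $\PIU$ and $\PIV$ to prevent $1/p$ from appearing in the final constants. First I compute the gradients $\nabla_{\PIU}\widetilde{\Loss}=\tfrac{1}{p}\Proj_{\ObserveSet}(\PIU\PIV^{\top}+\PIS-\ObserveMat)\,\PIV$ and $\nabla_{\PIV}\widetilde{\Loss}=\tfrac{1}{p}\Proj_{\ObserveSet}(\PIU\PIV^{\top}+\PIS-\ObserveMat)^{\top}\PIU$, and split the residual into a low-rank and a sparse piece, using $\UniqueU\UniqueV^{\top}=\MStar$ and the fact that $\Proj_{\ObserveSet}(\PIS)=\PIS$ by the construction $\PIS=\UnifSparse{\Proj_{\ObserveSet}(\ObserveMat-\PIU\PIV^{\top})}{3p\alpha}$,
\begin{equation*}
\tfrac{1}{p}\Proj_{\ObserveSet}(\PIU\PIV^{\top}+\PIS-\ObserveMat)=\underbrace{\tfrac{1}{p}\Proj_{\ObserveSet}(\PIU\PIV^{\top}-\UniqueU\UniqueV^{\top})}_{R_{1}}+\underbrace{\tfrac{1}{p}(\PIS-\Proj_{\ObserveSet}(\SStar))}_{R_{2}},
\end{equation*}
so that the total gradient norm squared is at most $2(\fronorm{R_{1}\PIV}^{2}+\fronorm{R_{1}^{\top}\PIU}^{2}+\fronorm{R_{2}\PIV}^{2}+\fronorm{R_{2}^{\top}\PIU}^{2})$.

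For the $R_{1}$ contributions, set $A:=\PIU\PIV^{\top}-\UniqueU\UniqueV^{\top}$ and compute the expectation directly: $\Exs[\fronorm{R_{1}\PIV}^{2}]=\fronorm{A\PIV}^{2}+\tfrac{1-p}{p}\sum_{i,j}A_{ij}^{2}\twonorm{\row{\PIV}{j}}^{2}\le \opnorm{\PIV}^{2}\fronorm{A}^{2}+\tfrac{1}{p}\twoinfnorm{\PIV}^{2}\fronorm{A}^{2}$. Both pieces are $O(\mu r\sigma_{1}^{*}\fronorm{A}^{2})$: the first because $\opnorm{\PIV}^{2}\le d_{2}\twoinfnorm{\PIV}^{2}\lesssim \mu r\sigma_{1}^{*}$ via $(\PIU,\PIV)\in\USet\times\VSet\subseteq \widebar{\USet}\times\widebar{\VSet}$ and the initialization bound $\opnorm{\PIV_{0}}\le \sqrt{3\sigma_{1}^{*}/2}$, and the second because $\tfrac{1}{pd_{2}}\lesssim 1$ under \eqref{eq:condition_p}. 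A covering/chaining argument over $(\PIU,\PIV)\in(\USet\times\VSet)\cap\Neighbor{\sqrt{\sigma_{1}^{*}}}$, combined with a matrix Bernstein inequality (using also the incoherence bound $\infnorm{A}^{2}\lesssim \mu^{2}r^{2}\sigma_{1}^{*}\delta/(d_{1}d_{2})$), then upgrades this to a uniform high-probability bound with an additive fluctuation of order $\mu r\sigma_{1}^{*}\cdot\epsilon\sigma_{1}^{*}\delta$; this is the same concentration event that powers Lemma~\ref{lem:local_descent_L_partial}. For $R_{2}$, I would reuse the sparse-estimator arguments from the proof of Lemma~\ref{lem:local_descent_L_partial}: $\PIS-\Proj_{\ObserveSet}(\SStar)$ is $O(p\alpha)$-sparse per row and column because $\PIS\in\Sparset{3p\alpha}$ by construction and $\Proj_{\ObserveSet}(\SStar)\in\Sparset{\tfrac{3}{2}p\alpha}$ with high probability by Lemma~\ref{lem:size_concentration}, and its entries are pointwise dominated by those of $\Proj_{\ObserveSet}(\PIU\PIV^{\top}-\MStar)$ in the same row or column because $\gamma=3>2$. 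Together with the incoherence bound on $\twoinfnorm{\PIU}$ and $\twoinfnorm{\PIV}$, the $R_{2}$ terms are absorbed into a constant multiple of the $R_{1}$ bound, with the leftover pieces fitting into the $\mu r\sigma_{1}^{*}\delta^{2}$ part of the right-hand side of \eqref{eq:smooth_partial}.

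The main obstacle is obtaining uniform-in-$(\PIU,\PIV)$ concentration for $R_{1}\PIV$ and $R_{1}^{\top}\PIU$ with constants that stay bounded as $p\to 0$. The key observation is that the $1/p$ factor in the variance is neutralized by $\twoinfnorm{\PIV}^{2}=O(\mu r\sigma_{1}^{*}/d_{2})$ from incoherence, making $\tfrac{1}{pd_{2}}\mu r\sigma_{1}^{*}$ bounded under \eqref{eq:condition_p}. The covering step must be done carefully enough to land on the $\epsilon\sigma_{1}^{*}$ slack that appears in \eqref{eq:smooth_partial}, but because the same technical kernel already appears in Lemma~\ref{lem:local_descent_L_partial}, the concentration building blocks (incoherent-matrix Bernstein bounds in the spirit of \cite{DBLP:journals/tit/Chen15}) can be re-used directly.
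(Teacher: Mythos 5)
Your overall architecture is sound and matches the paper's in its essentials: compute the gradients, observe that the residual $\Proj_{\ObserveSet}(M+S-\MStar-\SStar)$ splits into a low-rank piece $\Proj_{\ObserveSet}(M-\MStar)$ and a sparse piece $S-\Proj_{\ObserveSet}(\SStar)$, control the sparse piece by the order-statistic property of the estimator together with the $\tfrac{3}{2}p\alpha$ per-row/column sparsity of $\Proj_{\ObserveSet}(\SStar)$ from Lemma~\ref{lem:size_concentration}, and use incoherence to keep $1/p$ out of the final constants. Your expectation computation for $\frobnorm{R_1\PIV}^2$ is correct, and your identification of $\twoinfnorm{\PIV}^2 = O(\mu r\sigma_1^*/d_2)$ as the mechanism that neutralizes the $1/p$ variance inflation is exactly the right observation.

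Where you diverge is in how the low-rank term is made uniform over $(\PIU,\PIV)$. You propose a fresh covering/chaining argument plus matrix Bernstein on $\frobnorm{R_1\PIV}^2$ directly; this is the one step you do not execute, and it is not literally ``the same concentration event that powers Lemma~\ref{lem:local_descent_L_partial}'' --- the events behind that lemma are Lemmas~\ref{lem:norm_sampling}, \ref{lem:size_concentration} and \ref{lem:frobnorm_sampling}, none of which is a chaining bound on $\frobnorm{\Proj_{\ObserveSet}(A)\PIV}^2$. The paper avoids any new concentration here: since the \emph{entire} residual is supported on $\ObserveSet$, it applies the deterministic (given the support-size event) inequalities \eqref{eq:Zv_sampling} and \eqref{eq:uZ_sampling} to pull $\PIV$ and $\PIU$ out at the cost of a factor $2\mu r p\,\twoinfnorm{\cdot}^2$-type bound, reducing everything to $\frobnorm{\Proj_{\ObserveSet}(M+S-\MStar-\SStar)}^2$. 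That quantity is then bounded by a constant times $\frobnorm{\Proj_{\ObserveSet}(M-\MStar)}^2$ via \eqref{eq:tmp6}, and finally $\frobnorm{\Proj_{\ObserveSet}(M-\MStar)}^2 \leq p\left[(4+13\epsilon)\sigma_1^*\delta' + \dots\right]$-type control comes from \eqref{eq:tmp7}, i.e., from Lemma~\ref{lem:norm_sampling} (uniform over the tangent space) and \eqref{eq:uv_sampling} applied to $\DeltaU\DeltaV^{\top}$; the leading $p$ cancels the $1/p$ from the gradient. So the paper's route buys uniformity ``for free'' from already-established lemmas, whereas yours would require writing out a nontrivial chaining argument. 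Since you already invoke the incoherence-based extraction of $\PIV$ for the $R_2$ term, the cleaner fix to your plan is to apply that same extraction to $R_1$ as well and then cite Lemma~\ref{lem:norm_sampling} and \eqref{eq:uv_sampling}, which is precisely what the paper does. One small inaccuracy: the relevant condition on $\gamma$ in the partial-observation setting is $\gamma>1.5$ (to leave room above the $\tfrac{3}{2}p\alpha$ observed corruptions per row/column), not $\gamma>2$.
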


In the remainder of this section, we condition on the events in Lemma \ref{lem:local_descent_L_partial} and \ref{lem:smoothness_partial}. Now we are ready to prove Theorem \ref{thm:convergence_partial}.
\begin{proof}[Proof of Theorem \ref{thm:convergence_partial}]
	We essentially follow the process for proving Theorem \ref{thm:convergence}. Let the following shorthands be defined in the same fashion: $\delta_t$, $(\UniqueU^{t}, \UniqueV^{t})$, $(\Delta_{\PIU}^t, \Delta_{\PIV}^t)$, $\widetilde{\Loss}_t$, $\widetilde{\mathcal{G}}_t$.
	
	Here we show error decays in one step of iteration. The induction process is the same as the proof of Theorem \ref{thm:convergence}, and is thus omitted. For any $t \geq 0$, similar to \eqref{eq:delta_relation} we have that
	\begin{align*}
	\delta_{t+1} 
	& \leq \delta_t - 2\eta\underbrace{\trinprod{\nabla_{\PIU}\widetilde{\Loss}_t + \nabla_{\PIU}\widetilde{\mathcal{G}}_t}{\PIU_t - \UniqueU^t}}_{W_1} - 2\eta\underbrace{\trinprod{\nabla_{\PIV}\widetilde{\Loss}_t + \nabla_{\PIV}\widetilde{\mathcal{G}}_t}{\PIV_t - \UniqueV^t}}_{W_2} \\
	& ~~~~~~~ + \eta^2 \underbrace{\frobnorm{\nabla_{\PIU}\widetilde{\Loss}_t + \nabla_{\PIU}\widetilde{\mathcal{G}}_t}^2}_{W_3} + \eta^2\underbrace{\frobnorm{\nabla_{\PIV}\widetilde{\Loss}_t + \nabla_{\PIV}\widetilde{\mathcal{G}}_t}^2}_{W_4}.
	\end{align*}
	We also have
	\[
	\trinprod{\nabla_{\PIU}\widetilde{\Loss}_t}{\PIU_t - \UniqueU^t} + \trinprod{\nabla_{\PIV}\widetilde{\Loss}_t}{\PIV_t - \UniqueV^t} = \trinprod{\nabla_{M} \widetilde{\Loss}_t}{\PIU_t \PIV_t^{\top} - \UniqueU^t \UniqueV^{t\top} + \DeltaU^t \DeltaV^{t\top}},
	\]
	which can be lower bounded by Lemma \ref{lem:local_descent_L_partial}. Note that $\widetilde{\mathcal{G}}$ differs from $\mathcal{G}$ by a constant, we can still leverage Lemma \ref{lem:local_descent_g}. Hence, we obtain that
	\[
	W_1 + W_2 \geq \frac{1}{8}\frobnorm{\PIU_t \PIV_t^{\top} - \MStar}^2 +  \frac{1}{64}\frobnorm{\PIU_t^{\top}\PIU_t - \PIV_t^{\top}\PIV_t}^2 + \frac{1}{64}\sigma_r^*\delta_t -  \nu\sigma_1^*\delta_t - 11\sqrt{\sigma_1^*\delta_t^3} - 2\delta_t^2.
	\] 
	
	On the other hand, we have
	\begin{align*}
	W_3 + W_4 & \leq 2\frobnorm{\nabla_{\PIU}\widetilde{\Loss}_t}^2 + 2\frobnorm{\nabla_{\PIU}\widetilde{\mathcal{G}}_t}^2 +  2\frobnorm{\nabla_{\PIV}\widetilde{\Loss}_t}^2 + 2\frobnorm{\nabla_{\PIV}\widetilde{\mathcal{G}}_t}^2 \\
	& \leq c\left[\mu r \sigma_1^*\frobnorm{\PIU_t \PIV_t^{\top} - \MStar}^2 + \mu r\sigma_1^*\delta_t(\delta_t + \epsilon\sigma_1^*) + \sigma_1^*\frobnorm{\PIU_t^{\top}\PIU_t - \PIV_t^{\top}\PIV_t}^2\right],
	\end{align*}
	where $c$ is a constant, and the last step is implied by Lemma \ref{lem:smoothness} and Lemma \ref{lem:smoothness_partial}.
	
	By the assumption $\eta = c'/[\mu r\sigma_1^*]$ for sufficiently small constant $c'$, we thus have
	\begin{align*}
	-2\eta(W_1 + W_2) + \eta^2(W_3 + W_4) \leq  - \frac{1}{32}\eta\sigma_r^*\delta_t + 2\eta  \nu\sigma_1^*\delta_t +  22\eta\sqrt{\sigma_1^*\delta_t^3} + 4\eta\delta_t^2.
	\end{align*}
	Recall that $\nu := (14\beta + 81)\alpha \mu r + 26\sqrt{\epsilon} + 18\beta^{-1}$. By letting $\beta = c_1\kappa$, $\epsilon = c_2/\kappa^2$ and assuming $\alpha \leq c_3/(\mu r \kappa^2)$ and $\delta_t \leq c_4\sigma_r^*/\kappa$ for some sufficiently small constants $\{c_i\}_{i=1}^4$, we can have $-2\eta(W_1 + W_2) + \eta^2(W_3 + W_4) \leq -\frac{1}{64}\eta\sigma_r^*\delta_t$, which implies that
	\begin{equation*} 
	\delta_{t+1} \leq \left(1 - \frac{\eta \sigma_r^*}{64}\right)\delta_t,
	\end{equation*}	
	and thus completes the proof.
\end{proof}

\subsection{Proof of Corollary \ref{cor:rpca_partial}} \label{proof:cor:rpca_partial}
We need $\alpha \lesssim \frac{1}{\mu \kappa^2 r}$ due to the condition of Theorem \ref{thm:convergence_partial}. Letting the initial error provided in Theorem \ref{thm:initial_partial} be less than the corresponding condition in Theorem \ref{thm:convergence_partial}, we have
\[
51\sqrt{\kappa} \alpha \mu r\sqrt{r} \sqrt{\sigma_1^*} + 7c_1\epsilon \sqrt{\kappa \sigma_1^*} \lesssim \sqrt{\sigma_r^*/\kappa},
\]
which leads to
\[
\alpha \lesssim \frac{1}{\mu \sqrt{r^3\kappa^3}}, ~~ \epsilon \lesssim \frac{1}{\sqrt{\kappa^3}}.
\]
Plugging the above two upper bounds into the second term in \eqref{eq:condition_thm_initial_partial}, it suffices to have
\[
p \gtrsim \frac{\kappa^3\mu r^2\log d}{d_1 \wedge d_2}.
\]
Comparing the above bound with the second term in \eqref{eq:condition_thm_convergence_partial} completes the proof.	

\subsection{Proof of Lemma \ref{lem:local_descent_L}} \label{proof:lem:local_descent_L}
Let $M := \PIU\PIV^{\top}$. We observe that
\[
\nabla_{M} \mathcal{L}(\PIU, \PIV; S) = M + S - \MStar - \SStar.
\]
Plugging it back into the left hand side of \eqref{eq:local_descent}, we obtain
\begin{align}
&\trinprod{\nabla_{M} \mathcal{L}(\PIU, \PIV; S)}{ \PIU \PIV^{\top} - \UniqueU\UniqueV^{\top} + \DeltaU \DeltaV^{\top} } = \trinprod{M + S - \MStar - \SStar}{ M - \MStar + \DeltaU \DeltaV^{\top} }\notag \\
& \geq \frobnorm{M - \MStar}^2 - \underbrace{\abs{   \trinprod{S - \SStar}{M - \MStar}}}_{T_1} - \underbrace{\abs{\trinprod{ M + S - \MStar - \SStar}{\DeltaU \DeltaV^{\top}}}}_{T_2}. \label{eq:tmp_local_descent}
\end{align}

Next we derive upper bounds of $T_1$ and $T_2$ respectively.

\paragraph{Upper bound of $T_1$.} We denote the support of $S$, $\SStar$ by $\Omega$ and $\Omega^*$ respectively. Since $S - \SStar$ is supported on $\Omega^*\cup \Omega$, we have
\[
T_1 \leq \underbrace{\abs{\trinprod{\Proj_{\Omega}(S - \SStar)}{M - \MStar}}}_{W_1} + \underbrace{\abs{\trinprod{\Proj_{\Omega^* \setminus \Omega}(S - \SStar)}{M - \MStar}}}_{W_2}.
\] 
Recall that for any $(i,j) \in \Omega$, we have $\entry{S}{i}{j} = \entry{(\MStar + \SStar - M)}{i}{j}$. Accordingly, we have
\begin{equation} \label{eq:T1}
W_1 = \frobnorm{\Proj_{\Omega}(M - \MStar)}^2.
\end{equation}

Now we turn to bound $W_2$. Since $S_{(i,j)} = 0$ for any $(i,j) \in \Omega^*\setminus \Omega$, we have
\[
W_2 = \abs{\trinprod{ \Proj_{\Omega^*\setminus \Omega}\SStar}{M - \MStar}}.
\]
Let $u_i$ be the $i$-th row of $M - \MStar$, and $v_j$ be the $j$-th column of $M - \MStar$. For any $k \in [d_2]$, we let $u_i^{(k)}$ denote the element of $u_i$ that has the $k$-th largest magnitude. Similarly, for any $k \in [d_1]$, we let $v_j^{(k)}$ denote the element of $v_j$ that has the $k$-th largest magnitude.

From the design of sparse estimator \eqref{eq:sparse_estimator}, we have that for any $(i,j) \in \Omega^*\setminus \Omega$, $\abs{\entry{(\MStar + \SStar - M)}{i}{j}}$ is either smaller than the $\gamma \alpha d_2$-th largest entry of the $i$-th row of $\MStar + \SStar - M$ or smaller than the $\gamma\alpha d_1$-th largest entry of the $j$-th column of $\MStar + \SStar - M$. Note that $\SStar$ only contains at most $\alpha$-fraction nonzero entries per row and column. As a result, $\abs{\entry{(\MStar + \SStar - M)}{i}{j}}$ has to be less than the magnitude of $u_i^{(\gamma \alpha d_2 - \alpha d_2)}$ or $v_j^{(\gamma \alpha d_1 - \alpha d_1)}$. Formally, we have for $(i,j) \in \Omega^*\setminus \Omega$,
\begin{equation} \label{eq:def_b_ij}
\abs{\entry{(\MStar + \SStar - M)}{i}{j}} \leq \underbrace{\max\left\{ \abs{u_i^{(\gamma \alpha d_2 - \alpha d_2)}},~ \abs{v_j^{(\gamma \alpha d_1 - \alpha d_1)}}\right\}}_{b_{ij}}.
\end{equation}
Furthermore, we obtain
\begin{equation} \label{eq:b_ij}
b_{ij}^2 \leq  \abs{u_i^{(\gamma \alpha d_2 - \alpha d_2)}}^2 + \abs{v_j^{(\gamma \alpha d_1 - \alpha d_1)}}^2 \leq \frac{\twonorm{u_i}^2}{(\gamma - 1)\alpha d_2} + \frac{\twonorm{v_j}^2}{(\gamma - 1)\alpha d_1}.
\end{equation}
Meanwhile, for any $(i,j) \in \Omega^*\setminus \Omega$, we have
\begin{align}
\abs{\entry{\SStar}{i}{j} \cdot \entry{(M - \MStar)}{i}{j}} & = \abs{  \entry{(\MStar  + \SStar - M - \MStar + M)}{i}{j} \cdot \entry{(M - \MStar)}{i}{j} } \notag \\
& \leq \abs{\entry{(M - \MStar)}{i}{j}}^2 + \abs{\entry{(\MStar + \SStar - M)}{i}{j}}\cdot\abs{\entry{(M - \MStar)}{i}{j}} \notag\\
& \leq  \abs{\entry{(M - \MStar)}{i}{j}}^2 + b_{ij}\cdot\abs{(M - \MStar)_{(i,j)}} \notag \\
& \leq \left(1 + \frac{\beta}{2}\right)\abs{\entry{(M - \MStar)}{i}{j}}^2 + \frac{b_{ij}^2}{2\beta}, \label{eq:inner_product}
\end{align}
where $\beta$ in the last step can be any positive number. Combining \eqref{eq:b_ij} and \eqref{eq:inner_product} leads to
\begin{align}
W_2 & \leq \sum_{(i,j) \in \Omega^*\setminus \Omega}  \abs{\entry{\SStar}{i}{j} \cdot \entry{(M - \MStar)}{i}{j}} \notag\\
& \leq \left( 1 + \frac{\beta}{2}\right)\frobnorm{\Proj_{\Omega^*\setminus \Omega}(M - \MStar)}^2 + \sum_{(i, j) \in \Omega^* \setminus \Omega} \frac{b_{ij}^2}{2\beta} \notag\\
& \leq  \left( 1 + \frac{\beta}{2}\right)\frobnorm{\Proj_{\Omega^*\setminus \Omega}(M - \MStar)}^2  + \frac{1}{2\beta} \sum_{(i, j) \in \Omega^* \setminus \Omega} \left( \frac{\twonorm{u_i}^2}{(\gamma - 1)\alpha d_2} + \frac{\twonorm{v_j}^2}{(\gamma - 1)\alpha d_1} \right) \notag\\
& \leq  \left( 1 + \frac{\beta}{2}\right)\frobnorm{\Proj_{\Omega^*\setminus \Omega}(M - \MStar)}^2 + \frac{1}{\beta(\gamma - 1)}\frobnorm{M - \MStar}^2. \label{eq:T2}
\end{align}
In the last step, we use
\begin{align} \label{eq:tmp3}
& \sum_{(i,j) \in \Omega^*\setminus \Omega} \left(\frac{1}{d_2}\twonorm{u_i}^2 + \frac{1}{d_1}\twonorm{v_j}^2 \right) \leq \sum_{(i,j) \in \Omega^*} \left(\frac{1}{d_2}\twonorm{u_i}^2 + \frac{1}{d_1}\twonorm{v_j}^2 \right) \notag \\
& \leq \sum_{i \in [d]}\sum_{j \in \row{\Omega^*}{i}} \frac{1}{d_2}\twonorm{u_i}^2 + \sum_{j \in [d]}\sum_{i \in \column{\Omega^*}{j}} \frac{1}{d_1}\twonorm{v_j}^2 \notag\\
& \leq \alpha \sum_{i \in [d]}\twonorm{u_i}^2 + \alpha \sum_{j \in [d]}\twonorm{v_j}^2 = 2\alpha\frobnorm{M - \MStar}^2.
\end{align}
We introduce shorthand $\delta := \frobnorm{\DeltaU}^2 + \frobnorm{\DeltaV}^2$. We prove the following inequality in the end of this section.
\begin{equation} \label{eq:M_frobnorm_bound}
\frobnorm{M - \MStar} \leq \sqrt{5\sigma_1^*\delta}.
\end{equation}

Combining \eqref{eq:T1}, \eqref{eq:T2} and \eqref{eq:M_frobnorm_bound} leads to
\begin{align}
T_1 & \leq \frobnorm{\Proj_{\Omega}(M - \MStar)}^2 + \left(1 + \frac{\beta}{2}\right)\frobnorm{\Proj_{\Omega^*\setminus \Omega}(M - \MStar)}^2 + \frac{5\sigma_1^*\delta}{\beta(\gamma  - 1)} \notag\\
& \leq 9(2\gamma + \beta + 2)\alpha \mu r \sigma_1^*\delta + \frac{5\sigma_1^*\delta}{\beta(\gamma  - 1)}, \label{eq:V1}
\end{align}
where the last step follows from Lemma \ref{lem:projection} by noticing that $\Proj_{\Omega}(M - \MStar)$ has at most $\gamma\alpha$-fraction nonzero entries per row and column.

\paragraph{Upper bound of $T_2$.} To ease notation, we let $C := M + S - \MStar - \SStar$. We observe that $C$ is supported on $\Omega^c$, we have
\[
T_2 \leq \underbrace{\abs{\trinprod{\Proj_{\Omega^{*c} \cap \Omega^c} (M - \MStar)}{ \DeltaU \DeltaV^{\top} }}}_{W_3} + \underbrace{\abs{\trinprod{ \Proj_{\Omega^{*} \cap \Omega^c}C}{\DeltaU \DeltaV^{\top}}}}_{W_4}.
\]
By Cauchy-Swartz inequality, we have 
\[
W_3 \leq \frobnorm{\Proj_{\Omega^{*c} \cap \Omega^c} (M - \MStar)}\frobnorm{\DeltaU \DeltaV^{\top} } \leq \frobnorm{M - \MStar}\frobnorm{\DeltaU}\frobnorm{\DeltaV} \leq \sqrt{5\sigma_1^*\delta^3}/2,
\]
where the last step follows from \eqref{eq:M_frobnorm_bound} and  $\frobnorm{\DeltaU}\frobnorm{\DeltaV} \leq \delta/2$.

It remains to bound $W_4$. By Cauchy-Swartz inequality, we have
\begin{align*}
W_4 & \leq \frobnorm{\Proj_{\Omega^*\cap \Omega^c}C}\frobnorm{\DeltaU \DeltaV^{\top}} \leq \frobnorm{\Proj_{\Omega^*\cap \Omega^c}(\MStar + \SStar - M)}\frobnorm{\DeltaU \DeltaV^{\top}} \\
& \overset{(a)}{\leq} \sqrt{\sum_{(i,j)\in \Omega^*\setminus \Omega} b_{ij}^2} \frobnorm{\DeltaU}\frobnorm{\DeltaV} \overset{(b)}{\leq} \left[\sum_{(i,j) \in \Omega^*\setminus \Omega} \frac{\twonorm{u_i}^2}{(\gamma - 1)\alpha d_2} + \frac{\twonorm{v_j}^2}{(\gamma - 1)\alpha d_1} \right]^{1/2}\frobnorm{\DeltaU}\frobnorm{\DeltaV}. \\
& \overset{(c)}{\leq} \sqrt{\frac{2}{\gamma - 1}}\frobnorm{M - \MStar}\frobnorm{\DeltaU}\frobnorm{\DeltaV} \leq  \sqrt{\frac{5\sigma_1^*\delta^3}{2(\gamma - 1)}},
\end{align*}
where step $(a)$ is from \eqref{eq:def_b_ij}, step $(b)$ follows from \eqref{eq:b_ij}, and step $(c)$ follows from \eqref{eq:tmp3}. Combining the upper bounds of $W_3$ and $W_4$, we obtain
\begin{equation} \label{eq:V2}
T_2 \leq  \sqrt{5\sigma_1^*\delta^3}/2 + \sqrt{\frac{5\sigma_1^*\delta^3}{2(\gamma - 1)}}.
\end{equation}

\paragraph{Combining pieces.} Now we choose $\gamma = 2$. Then inequality \eqref{eq:V1} implies that
\[
T_1 \leq [9(\beta+6)\alpha \mu r + 5\beta^{-1}]\sigma_1^*\delta.
\]
Inequality \eqref{eq:V2} then implies that
\[
T_2 \leq 3\sqrt{\sigma_1^*\delta^3}.
\]

Plugging the above two inequalities into \eqref{eq:tmp_local_descent} completes the proof.

\vskip .1in
\noindent{\em Proof of inequality \eqref{eq:M_frobnorm_bound}.}
We find that
\begin{align*}
\frobnorm{M - \MStar}^2 & \leq \left[\sqrt{\sigma_1^*}(\frobnorm{\DeltaV} + \frobnorm{\DeltaU}) + \frobnorm{\DeltaU}\frobnorm{\DeltaV}\right]^2 \\
& \leq \left[\sqrt{\sigma_1^*}(\frobnorm{\DeltaV} + \frobnorm{\DeltaU}) + \frac{1}{2}\sqrt{\sigma_1^*}\frobnorm{\DeltaU} + \frac{1}{2}\sqrt{\sigma_1^*}\frobnorm{\DeltaV}\right]^2 \\
& \leq 5\sigma_1^*(\frobnorm{\DeltaU}^2 + \frobnorm{\DeltaV}^2),
\end{align*}
where the first step follows from the upper bound of $\frobnorm{M - \MStar}$ shown in Lemma \ref{lem:delta_norm}, and the second step follows from the assumption $\frobnorm{\DeltaU}, \frobnorm{\DeltaV} \leq \sqrt{\sigma_1^*}$.

\subsection{Proof of Lemma \ref{lem:local_descent_g}} \label{proof:lem:local_descent_g}
We first observe that
\[
\nabla_{\PIU} \mathcal{G}(\PIU, \PIV) = \frac{1}{2}\PIU(\PIU^{\top}\PIU - \PIV^{\top}\PIV),  ~\nabla_{\PIV} \mathcal{G}(\PIU, \PIV) = \frac{1}{2}\PIV(\PIV^{\top}\PIV - \PIU^{\top}\PIU),
\]
Therefore, we obtain
\begin{align}
& \trinprod{\nabla_{\PIU} \mathcal{G}(\PIU, \PIV)}{U -  \UniqueU} + \trinprod{\nabla_{\PIV} \mathcal{G}(\PIU, \PIV)}{V -  \UniqueV}  \notag\\
& = \frac{1}{2}\trinprod{\PIU^{\top}\PIU - \PIV^{\top}\PIV}{\PIU^{\top}\PIU - \PIV^{\top}\PIV - \PIU^{\top}\UniqueU + \PIV^{\top}\UniqueV} \notag\\
& = \frac{1}{4}\frobnorm{\PIU^{\top}\PIU - \PIV^{\top}\PIV}^2 +  \frac{1}{4}\trinprod{\PIU^{\top}\PIU - \PIV^{\top}\PIV}{\PIU^{\top}\PIU - \PIV^{\top}\PIV - 2\PIU^{\top}\UniqueU + 2\PIV^{\top}\UniqueV} \notag\\
& =  \frac{1}{4}\frobnorm{\PIU^{\top}\PIU - \PIV^{\top}\PIV}^2 +  \frac{1}{4}\trinprod{\PIU^{\top}\PIU - \PIV^{\top}\PIV}{\PIU^{\top}\PIU - \PIV^{\top}\PIV - 2\DeltaU^{\top}\UniqueU + 2\DeltaV^{\top}\UniqueV}, \label{eq:lower_G}
\end{align}
where the last step follows from $\DeltaU^{\top}\UniqueU - \DeltaV^{\top}\UniqueV = \PIU^{\top}\UniqueU - \PIV^{\top}\UniqueV$ since $\UniqueU^{\top}\UniqueU = \UniqueV^{\top}\UniqueV$.
Note that 
\begin{align*}
\PIU^{\top}\PIU - \PIV^{\top}\PIV & = (\UniqueU + \DeltaU)^{\top}(\UniqueU + \DeltaU) -  (\UniqueV + \DeltaV)^{\top}(\UniqueV + \DeltaV)\\
& = \UniqueU^{\top}\DeltaU + \DeltaU^{\top}\UniqueU + \DeltaU^{\top}\DeltaU - \UniqueV^{\top}\DeltaV - \DeltaV^{\top}\UniqueV - \DeltaV^{\top}\DeltaV,
\end{align*} 
where we use $\UniqueU^{\top}\UniqueU = \UniqueV^{\top}\UniqueV$ again in the last step. Furthermore, since $\PIU^{\top}\PIU - \PIV^{\top}\PIV$ is symmetric, we have 
\begin{align*}
& \trinprod{\PIU^{\top}\PIU - \PIV^{\top}\PIV}{\UniqueU^{\top}\DeltaU + \DeltaU^{\top}\UniqueU - \UniqueV^{\top}\DeltaV - \DeltaV^{\top}\UniqueV} \\
& = \trinprod{\PIU^{\top}\PIU - \PIV^{\top}\PIV}{2\DeltaU^{\top}\UniqueU - 2\DeltaV^{\top}\UniqueV}.
\end{align*}
Using these arguments, for the second term in \eqref{eq:lower_G}, denoted by $T_2$, we have
\[
T_2 = \frac{1}{4}\trinprod{\PIU^{\top}\PIU - \PIV^{\top}\PIV}{\DeltaU^{\top}\DeltaU - \DeltaV^{\top}\DeltaV}.
\]
Furthermore, we have
\begin{align}
4T_2 & \leq \abs{\trinprod{\PIU^{\top}\PIU - \PIV^{\top}\PIV}{\DeltaU^{\top}\DeltaU - \DeltaV^{\top}\DeltaV}} \leq \frobnorm{\PIU^{\top}\PIU - \PIV^{\top}\PIV}\left(\frobnorm{\DeltaU}^2 + \frobnorm{\DeltaV}^2\right) \notag\\
& \leq \left( \frobnorm{\PIU^{\top}\PIU  - \UniqueU^{\top}\UniqueU} + \frobnorm{\PIV^{\top}\PIV  - \UniqueV^{\top}\UniqueV}\right) \delta \notag\\
& \leq 2\left(\opnorm{\UniqueU}\frobnorm{\DeltaU} + \opnorm{\UniqueV}\frobnorm{\DeltaV}\right)\delta \leq 2\sqrt{2\sigma_1^*\delta^3}. \label{eq:GT2}
\end{align}

It remains to find a lower bound of $\frobnorm{\PIU^{\top}\PIU - \PIV^{\top}\PIV}$. The following inequality, which we turn to prove later, is true:
\begin{equation} \label{eq:it}
\frobnorm{\PIU^{\top}\PIU - \PIV^{\top}\PIV}^2 \geq \frobnorm{\PIU \PIU^{\top} - \UniqueU \UniqueU^{\top}}^2 + \frobnorm{\PIV \PIV^{\top} - \UniqueV \UniqueV^{\top}}^2 - 2\frobnorm{\PIU \PIV^{\top} - \UniqueU \UniqueV^{\top}}^2.
\end{equation}
Proceeding with the first term in \eqref{eq:lower_G} by using \eqref{eq:it}, we get
\begin{align}
& \frac{1}{4}\frobnorm{\PIU^{\top}\PIU - \PIV^{\top}\PIV}^2 = \frac{1}{8}\frobnorm{\PIU^{\top}\PIU - \PIV^{\top}\PIV}^2 + \frac{1}{8}\frobnorm{\PIU^{\top}\PIU - \PIV^{\top}\PIV}^2 \notag\\
& \geq \frac{1}{8}\frobnorm{\PIU^{\top}\PIU - \PIV^{\top}\PIV}^2 + \frac{1}{8}\frobnorm{\PIU \PIU^{\top} - \UniqueU \UniqueU^{\top}}^2 + \frac{1}{8}\frobnorm{\PIV \PIV^{\top} - \UniqueV \UniqueV^{\top}}^2 - \frac{1}{4}\frobnorm{\PIU \PIV^{\top} - \UniqueU \UniqueV^{\top}}^2 \notag\\
& = \frac{1}{8}\frobnorm{\PIU^{\top}\PIU - \PIV^{\top}\PIV}^2 + \frac{1}{8}\frobnorm{FF^{\top} - F_{\pi^*}F^{\top}_{\pi^*}}^2 - \frac{1}{2}\frobnorm{\PIU \PIV^{\top} - \UniqueU\UniqueV^{\top}}^2, \label{eq:lower_bound_G}
\end{align}
where we let
\[
F := \begin{bmatrix}
\PIU \\ \PIV\end{bmatrix}, ~~  F_{\pi^*} := \begin{bmatrix}
\UniqueU \\ \UniqueV\end{bmatrix}.
\]

Introduce $\Delta_F := F - F_{\pi^*}$. Recall that $\delta := \frobnorm{\DeltaU}^2 + \frobnorm{\DeltaV}^2$. Equivalently $\delta = \frobnorm{\Delta_F}^2$. We have
\begin{align*}
\frobnorm{FF^{\top} - F_{\pi^*}F^{\top}_{\pi^*}} & = \frobnorm{\Delta_F F^{\top}_{\pi*} + F_{\pi^*}\Delta_F^{\top} + \Delta_F\Delta_F^{\top}} \\
& \geq \frobnorm{\Delta_F F^{\top}_{\pi*} + F_{\pi^*}\Delta_F^{\top}} - \frobnorm{\Delta_F}^2 = \frobnorm{\Delta_F F^{\top}_{\pi*} + F_{\pi^*}\Delta_F^{\top}} - \delta.
\end{align*}
For the first term, we have
\begin{align*}
& \frobnorm{\Delta_F F^{\top}_{\pi*} + F_{\pi^*}\Delta_F^{\top}}^2  = 2\frobnorm{\Delta_F F^{\top}_{\pi*}}^2 + \trinprod{\Delta_F F^{\top}_{\pi*}}{F_{\pi^*}\Delta_F^{\top}} \\
& \geq  2\sigma_r(F_{\pi^*})^2 \frobnorm{\Delta_F}^2 + \trinprod{\Delta_F F^{\top}_{\pi*}}{F_{\pi^*}\Delta_F^{\top}} = 4\sigma_r^*\frobnorm{\Delta_F}^2 + 	\trinprod{\Delta_F F^{\top}_{\pi*}}{F_{\pi^*}\Delta_F^{\top}}.
\end{align*}
For the cross term, by the following result, proved in \cite{chen2015fast} (we also provide a proof in Section \ref{proof:lem:symm} for the sake of completeness), we have $\trinprod{\Delta_F F^{\top}_{\pi*}}{F_{\pi^*}\Delta_F^{\top}} \geq 0$.
\begin{lemma} \label{lem:symm}
	When $\opnorm{F - F_{\pi^*}} < \sqrt{2\sigma_r^*}$, we have that $\Delta_F^{\top}F_{\pi^*}$ is symmetric.
\end{lemma}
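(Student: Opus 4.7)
The plan is to deduce the symmetry of $\Delta_F^\top F_{\pi^*}$ directly from the first-order optimality condition for the Procrustes-type problem that defines $(\UniqueU,\UniqueV)$. Write $F^* := \begin{pmatrix}\OptU\\ \OptV\end{pmatrix}$. Since $\OptSet(\MStar) = \{F^* Q : Q \in \RotateSet{r}\}$ (after stacking), the definition of $(\UniqueU,\UniqueV)$ as the minimizer of $\frobnorm{\PIU-A}^2 + \frobnorm{\PIV-B}^2$ over $(A,B)\in\OptSet(\MStar)$ yields $F_{\pi^*} = F^* Q^*$ for some
\[
Q^* \in \argmin_{Q \in \RotateSet{r}} \frobnorm{F - F^* Q}^2 = \argmax_{Q \in \RotateSet{r}} \trace(Q^\top F^{*\top} F).
\]

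Next I would invoke first-order optimality on the smooth manifold $\RotateSet{r}$. The tangent space at $Q^*$ is $\{Q^* A : A^\top = -A\}$, so substituting $Q = Q^* e^{tA}$ for any skew-symmetric $A$ and differentiating $\trace(Q^\top F^{*\top} F)$ at $t=0$ gives $\trace(A\, Q^{*\top} F^{*\top} F) = 0$ for every skew $A$. Because the Frobenius orthogonal complement of the skew-symmetric matrices inside $\mathbb{R}^{r\times r}$ is the symmetric matrices, this annihilation condition forces the matrix $Q^{*\top} F^{*\top} F$ to be symmetric. Equivalently, $F_{\pi^*}^\top F = Q^{*\top} F^{*\top} F$ is symmetric.

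Finally, I would expand
\[
\Delta_F^\top F_{\pi^*} = (F-F_{\pi^*})^\top F_{\pi^*} = F^\top F_{\pi^*} - F_{\pi^*}^\top F_{\pi^*}.
\]
The second term is obviously symmetric, and $F^\top F_{\pi^*} = (F_{\pi^*}^\top F)^\top$ is symmetric by the previous step. Hence $\Delta_F^\top F_{\pi^*}$ is a difference of symmetric matrices, and therefore symmetric itself. Since $\Delta_F^\top F_{\pi^*} = \DeltaU^\top \UniqueU + \DeltaV^\top \UniqueV$, this is exactly the claim.

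The operator-norm hypothesis $\opnorm{F - F_{\pi^*}} < \sqrt{2\sigma_r^*}$ enters only to guarantee the setting under which the first-order analysis is unambiguous: combined with $\sigma_r(F_{\pi^*})^2 = \sigma_r(F_{\pi^*}^\top F_{\pi^*}) = \sigma_r(2\SVMat^*) = 2\sigma_r^*$ and Weyl's inequality, it gives $\sigma_r(F) > 0$, which via the polar/SVD representation of the Procrustes solution pins down a well-defined $Q^*$ (the argmin is a single orbit representative rather than a degenerate set). The only real care needed is in Step 2, justifying that stationarity of a smooth function on $\RotateSet{r}$ precisely translates into the trace-annihilation against skew matrices; the remainder is linear-algebra bookkeeping.
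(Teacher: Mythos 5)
Your proof is correct, and it takes a genuinely different route from the paper's. The paper proves the lemma by explicitly identifying the Procrustes minimizer: it writes the SVD $F^{\top}F^{*} = Q_1\Lambda Q_2^{\top}$, uses the hypothesis $\opnorm{F - F_{\pi^*}} < \sqrt{2\sigma_r^*}$ to show (via $\sigma_r(F_{\pi^*}) = \sqrt{2\sigma_r^*}$) that $F^{\top}F_{\pi^*}$, hence $F^{\top}F^{*}$, has full rank with strictly positive singular values, concludes that the optimal rotation is uniquely $Q = Q_2Q_1^{\top}$, and then computes $F^{\top}F_{\pi^*} = Q_1\Lambda Q_1^{\top}$, which is symmetric (indeed positive semidefinite). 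You instead bypass the explicit solution entirely: first-order stationarity of $Q \mapsto \trace(Q^{\top}F^{*\top}F)$ on the orthogonal group, tested against tangent directions $Q^*A$ with $A$ skew, forces $F_{\pi^*}^{\top}F = Q^{*\top}F^{*\top}F$ to be orthogonal to all skew matrices and hence symmetric, after which $\Delta_F^{\top}F_{\pi^*} = (F^{\top}F_{\pi^*})^{\top\top} - F_{\pi^*}^{\top}F_{\pi^*}$ is a difference of symmetric matrices. Your derivative computation and the identification of $\mathrm{Skew}^{\perp} = \mathrm{Sym}$ are both sound. What each approach buys: yours is more economical and, as you correctly observe, shows the operator-norm hypothesis is not actually needed for symmetry (stationarity holds at any minimizer, unique or not), whereas the paper's computation delivers the slightly stronger conclusion that $F^{\top}F_{\pi^*}$ is positive semidefinite — though the downstream application in Lemma \ref{lem:local_descent_g} only needs $\trace\bigl((\Delta_F^{\top}F_{\pi^*})^2\bigr) \geq 0$, for which symmetry alone suffices, so nothing is lost.
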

Accordingly, we have $\frobnorm{FF^{\top} - F_{\pi^*}F^{\top}_{\pi^*}} \geq 2\sqrt{\sigma_r^*\delta}- \delta \geq \sqrt{\sigma_r^*\delta}$ under condition $\delta \leq \sigma_r^*$. Plugging this lower bound into \eqref{eq:lower_bound_G}, we obtain
\[
\frac{1}{4}\frobnorm{\PIU^{\top}\PIU - \PIV^{\top}\PIV}^2 \geq \frac{1}{8}\frobnorm{\PIU^{\top}\PIU - \PIV^{\top}\PIV}^2 + \frac{1}{8}\sigma_r^*\delta - \frac{1}{2}\frobnorm{\PIU \PIV^{\top} - \UniqueU\UniqueV^{\top}}^2.
\]
Putting \eqref{eq:lower_G}, \eqref{eq:GT2} and the above inequality together completes the proof.

\vskip 0.1in
\noindent{\em Proof of inequality \eqref{eq:it}.} For the term on the left hand side of \eqref{eq:it}, it is easy to check that
\begin{equation} \label{eq:tmp_local_G}
\frobnorm{\PIU^{\top}\PIU - \PIV^{\top}\PIV}^2 = \frobnorm{\PIU\PIU^{\top}}^2 + \frobnorm{\PIV\PIV^{\top}}^2 - 2\frobnorm{\PIU\PIV^{\top}}^2.
\end{equation}
The property $\UniqueU^{\top} \UniqueU = \UniqueV^{\top} \UniqueV$ implies that $\frobnorm{\UniqueU\UniqueU^{\top}} = \frobnorm{\UniqueV\UniqueV^{\top}} = \frobnorm{\UniqueU\UniqueV^{\top}}$. Therefore, expanding those quadratic terms on the right hand side of \eqref{eq:it}, one can show that it is equal to 
\begin{equation} \label{eq:tmp_local_G_1}
\frobnorm{\PIU\PIU^{\top}}^2 + \frobnorm{\PIV\PIV^{\top}}^2 - 2\frobnorm{\UniqueU^{\top}\PIU}^2 - 2\frobnorm{\UniqueV^{\top}\PIV}^2 + 4\trinprod{\UniqueU^{\top}\PIU}{\UniqueV^{\top}\PIV} - 2\frobnorm{\PIU\PIV^{\top}}^2.
\end{equation}
Comparing inequalities \eqref{eq:tmp_local_G} and \eqref{eq:tmp_local_G_1}, it thus remains to show that 
\[
-2\frobnorm{\UniqueU^{\top}\PIU}^2 - 2\frobnorm{\UniqueV^{\top}\PIV}^2 + 4\trinprod{\UniqueU^{\top}\PIU}{\UniqueV^{\top}\PIV} \leq 0.
\]
Equivalently, we always have $\frobnorm{\UniqueU^{\top}\PIU - \UniqueV^{\top}\PIV}^2 \geq 0$, and thus prove \eqref{eq:it}.

\subsection{Proof of Lemma \ref{lem:smoothness}} \label{proof:lem:smoothness}
First, we turn to prove \eqref{eq:smooth_G}. As
\[
\nabla_{\PIU} \mathcal{G}(\PIU, \PIV) = \frac{1}{2}\PIU(\PIU^{\top}\PIU - \PIV^{\top}\PIV),  ~\nabla_{\PIV} \mathcal{G}(\PIU, \PIV) = \frac{1}{2}\PIV(\PIV^{\top}\PIV - \PIU^{\top}\PIU),
\]
we have
\[
\frobnorm{\nabla_{\PIU} \mathcal{G}(\PIU, \PIV)}^2 + \frobnorm{\nabla_{\PIV} \mathcal{G}(\PIU, \PIV)}^2 \leq \frac{1}{4}\left(\opnorm{U}^2 + \opnorm{V}^2\right)\frobnorm{\PIU^{\top}\PIU - \PIV^{\top}\PIV}^2.
\]
As $(\PIU, \PIV) \in \mathbb{B}_2(\sqrt{\sigma_1^*})$, we thus have $\opnorm{\PIU} \leq \opnorm{\UniqueU} + \opnorm{\UniqueU - \PIU} \leq 2\sqrt{\sigma_1^*}$, and similarly $\opnorm{\PIV} \leq 2\sqrt{\sigma_1^*}$. We obtain
\[
\frobnorm{\nabla_{\PIU} \mathcal{G}(\PIU, \PIV)}^2 + \frobnorm{\nabla_{\PIV} \mathcal{G}(\PIU, \PIV)}^2 \leq 2\sigma_1^*\frobnorm{\PIU^{\top}\PIU - \PIV^{\top}\PIV}^2.
\] 
\vskip .1in
Now we turn to prove \eqref{eq:smooth_L}. We observe that
\[
\nabla_{M} \Loss(\PIU, \PIV; \PIS) = M + \PIS - \MStar - \SStar,
\]
where we let $M := \PIU\PIV^{\top}$.
We denote the support of $S, \SStar$ by $\Omega$ and $\Omega^*$ respectively. Based on the sparse estimator \eqref{eq:sparse_estimator} for computing $S$, $\nabla_{M} \Loss(\PIU, \PIV; \PIS)$ is only supported on $\Omega^c$. We thus have
\begin{align*}
\frobnorm{\nabla_{M} \Loss(\PIU, \PIV; \PIS)} & \leq \frobnorm{\Proj_{\Omega^c\setminus \Omega^*}(M - \MStar)} + \frobnorm{\Proj_{\Omega^c \cap \Omega^*}(M - \MStar - \SStar)} \\
& \leq \frobnorm{M - \MStar} + \frobnorm{\Proj_{\Omega^c \cap \Omega^*}(M - \MStar - \SStar)}.
\end{align*}
It remains to upper bound the second term on the right hand side. Following \eqref{eq:def_b_ij} and \eqref{eq:b_ij}, we have
\begin{align*}
& \frobnorm{\Proj_{\Omega^c \cap \Omega^*}(M - \MStar - \SStar)}^2  \leq \sum_{(i,j)\in \Omega^c \cap \Omega^*} \frac{\twonorm{u_i}^2}{(\gamma - 1)\alpha d_2} + \frac{\twonorm{v_j}^2}{(\gamma - 1)\alpha d_1} \leq \frac{2}{\gamma - 1}\frobnorm{M - \MStar}^2,
\end{align*}
where the last step is proved in \eqref{eq:tmp3}. By choosing $\gamma = 2$,
we thus conclude that
\[
\frobnorm{\nabla_{M} \Loss(\PIU, \PIV; \PIS)} \leq (1+\sqrt{2})\frobnorm{M - \MStar}.
\]

\subsection{Proof of Lemma \ref{lem:local_descent_L_partial}} \label{proof:lem:local_descent_L_partial}
We denote the support of $\Proj_{\ObserveSet}(\SStar)$, $S$ by $\ObserveSparse$ and $\Omega$. We always have $\ObserveSparse \subseteq \ObserveSet$ and $\Omega \subseteq \ObserveSet$.

In the sequel, we establish several results that characterize the properties of $\ObserveSet$. The first result, proved in Section \ref{proof:lem:norm_sampling}, shows that the Frobenius norm of any incoherent matrix whose row (or column) space are equal to $\PIL^*$ (or $\PIR^*$) is well preserved under partial observations supported on $\ObserveSet$.
\begin{lemma} \label{lem:norm_sampling}
	Suppose $\MStar \in \real^{d_1 \times d_2}$ is a rank $r$ and $\mu$-incoherent matrix that has SVD $\MStar = \PIL^*\SVMat^* \PIR^{*\top}$. Then there exists an absolute constant $c$ such that for any $\epsilon \in (0,1)$, if $p \geq c\frac{\mu r\log d}{\epsilon^2 (d_1 \wedge d_2)}$, then with probability at least $1 - 2d^{-3}$, we have that for all $A \in \real^{d_2 \times r}, B \in \real^{d_1\times r}$,
	\[ 
	(1-\epsilon)\frobnorm{\PIL^*A^{\top} + B\PIR^{*\top}}^2 \leq p^{-1}\frobnorm{\Proj_{\ObserveSet}\left(\PIL^*A^{\top} + B\PIR^{*\top}\right)}^2 \leq (1+\epsilon)\frobnorm{\PIL^*A^{\top} + B\PIR^{*\top}}^2.
	\]
\end{lemma}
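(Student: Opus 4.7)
The set $\{\PIL^*A^\top + B\PIR^{*\top} : A \in \real^{d_2\times r},\, B \in \real^{d_1\times r}\}$ is exactly the tangent space $T$ to the rank-$r$ matrix manifold at $\MStar$. Let $\mathcal{P}_T$ denote the orthogonal projection onto $T$ with respect to the Frobenius inner product, given in closed form by $\mathcal{P}_T(X) = \PIL^*\PIL^{*\top}X + X\PIR^*\PIR^{*\top} - \PIL^*\PIL^{*\top}X\PIR^*\PIR^{*\top}$. Let $\mathcal{R}_{\ObserveSet}(X) := \sum_{(i,j)}\delta_{ij}\inprod{X}{e_ie_j^\top}\,e_ie_j^\top$, where $\{\delta_{ij}\}$ are the i.i.d.\ Bernoulli$(p)$ indicators of $\ObserveSet$. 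For any $Z\in T$ we have $\frobnorm{\Proj_{\ObserveSet}(Z)}^2 = \inprod{Z}{\mathcal{R}_{\ObserveSet}(Z)} = \inprod{Z}{\mathcal{P}_T\mathcal{R}_{\ObserveSet}\mathcal{P}_T(Z)}$, so the two-sided inequality in the statement is equivalent to the single operator-norm bound
\[
\opnorm{p^{-1}\mathcal{P}_T\mathcal{R}_{\ObserveSet}\mathcal{P}_T - \mathcal{P}_T} \leq \epsilon,
\]
interpreted as an operator on $\real^{d_1\times d_2}$ endowed with the Frobenius norm.

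The plan is to prove this operator-norm bound via the non-commutative (matrix) Bernstein inequality. Since $\Exs[p^{-1}\mathcal{R}_{\ObserveSet}] = \mathcal{I}$, we decompose the operator as a sum of independent, zero-mean, self-adjoint rank-one operators
\[
p^{-1}\mathcal{P}_T\mathcal{R}_{\ObserveSet}\mathcal{P}_T - \mathcal{P}_T \;=\; \sum_{(i,j)\in[d_1]\times[d_2]} \Bigl(\tfrac{\delta_{ij}}{p}-1\Bigr)\, Z_{ij}\otimes Z_{ij}, \qquad Z_{ij} := \mathcal{P}_T(e_ie_j^\top).
\]
The key deterministic input is the incoherence estimate
\[
\frobnorm{Z_{ij}}^2 \;=\; \twonorm{\PIL^{*\top}e_i}^2 + \twonorm{\PIR^{*\top}e_j}^2 - \twonorm{\PIL^{*\top}e_i}^2\twonorm{\PIR^{*\top}e_j}^2 \;\leq\; \frac{2\mu r}{d_1\wedge d_2},
\]
which follows by direct expansion of $\mathcal{P}_T(e_ie_j^\top)$ together with the $\mu$-incoherence of $\PIL^*$ and $\PIR^*$. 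This simultaneously bounds the per-summand operator norm by $p^{-1}\cdot 2\mu r/(d_1\wedge d_2)$ and the Bernstein variance proxy $\opnorm{\sum_{(i,j)}\Exs[(\delta_{ij}/p-1)^2]\,\frobnorm{Z_{ij}}^2\, Z_{ij}\otimes Z_{ij}}$ by essentially the same quantity. Matrix Bernstein applied in the $d_1d_2$-dimensional ambient space then yields a deviation of order $\sqrt{\mu r\log d/[p(d_1\wedge d_2)]} + \mu r \log d/[p(d_1\wedge d_2)]$ with probability at least $1-2d^{-3}$, which is bounded by $\epsilon$ once $p \geq c\mu r\log d/[\epsilon^2(d_1\wedge d_2)]$ for an absolute constant $c$ taken large enough.

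The main obstacle is purely bookkeeping in the Bernstein step: one must choose the constant $c$ large enough to (i) absorb the $d^{-3}$ failure target by setting the concentration parameter proportional to a multiple of $\log d$, and (ii) ensure the variance term dominates the boundedness (second-order) term uniformly over $\epsilon \in (0,1)$. The reduction itself---showing that tangent-space restricted isometry is equivalent to the stated two-sided norm inequality---is the standard argument used in matrix completion, so the only paper-specific work once the incoherence estimate on $\frobnorm{Z_{ij}}^2$ is in hand is tracking these absolute constants.
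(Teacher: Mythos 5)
Your proposal is correct and follows essentially the same route as the paper: the paper also identifies the set with the tangent space $\mathcal{K}$, reduces the two-sided bound to the operator inequality $p^{-1}\frobnorm{(\Proj_{\mathcal{K}}\Proj_{\ObserveSet}\Proj_{\mathcal{K}} - p\Proj_{\mathcal{K}})X}\leq \epsilon\frobnorm{X}$, and then derives the claim by the same inner-product manipulation. The only difference is that the paper simply cites Theorem 4.1 of Cand\`es--Recht for that concentration step, whereas you re-derive it via matrix Bernstein with the incoherence bound $\frobnorm{\mathcal{P}_T(e_ie_j^\top)}^2\leq 2\mu r/(d_1\wedge d_2)$ --- which is exactly the argument underlying the cited theorem.
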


We need the next result, proved in Section \ref{proof:lem:size_concentration}, to control the number of nonzero entries per row and column in $\ObserveSparse$ and $\ObserveSet$.
\begin{lemma} \label{lem:size_concentration}
	If $p \geq \frac{56}{3}\frac{\log d}{\alpha (d_1 \wedge d_2)}$, then with probability at least $1 - 6d^{-1}$, we have
	\begin{equation*}
	\big| \abs{\row{\ObserveSet}{i}} - pd_2\big| \leq \frac{1}{2}pd_2,~~ \big|\abs{\column{\ObserveSet}{j}} - pd_1\big| \leq \frac{1}{2}pd_1, ~~ \abs{\Omega^*_{o(i,\cdot)}} \leq \frac{3}{2}\alpha p d_2, ~~ \abs{\Omega^*_{o(\cdot, j)}} \leq \frac{3}{2}\alpha p d_1,
	\end{equation*}
	for all $i \in [d_1]$ and $j \in [d_2]$.
\end{lemma}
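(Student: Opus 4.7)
The plan is to view each of the four quantities as a sum of independent Bernoulli$(p)$ indicators and then apply the multiplicative Chernoff bound, taking a union bound over the $d_1 + d_2 \le 2d$ rows and columns. Concretely, $\abs{\row{\ObserveSet}{i}} = \sum_{j\in [d_2]} \mathbb{1}[(i,j)\in\ObserveSet]$ is $\mathrm{Binom}(d_2,p)$ with mean $pd_2$, and similarly $\abs{\column{\ObserveSet}{j}} \sim \mathrm{Binom}(d_1,p)$. For the sparse part, if $k_i := \zeronorm{\row{\SStar}{i}} \leq \alpha d_2$, then $\abs{\Omega^*_{o(i,\cdot)}} = \sum_{j:\SStar_{ij}\neq 0}\mathbb{1}[(i,j)\in\ObserveSet] \sim \mathrm{Binom}(k_i,p)$, stochastically dominated by $\mathrm{Binom}(\alpha d_2, p)$ with mean at most $\alpha p d_2$; the column case is analogous.

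Next I would apply the standard Chernoff estimates: $\Prob\left[\abs{X - \mu} > \tfrac{1}{2}\mu\right] \le 2\exp(-\mu/12)$ for the two-sided bound on $\abs{\row{\ObserveSet}{i}}$ and $\abs{\column{\ObserveSet}{j}}$ (with $\mu = pd_2$ or $pd_1$), and $\Prob\left[X > \tfrac{3}{2}\mu\right] \le \exp(-\mu/10)$ applied to the dominating binomial with $\mu \le \alpha p d_2$ (or $\alpha p d_1$) for the sparse parts. Under the hypothesis $p \ge \frac{56}{3}\frac{\log d}{\alpha(d_1\wedge d_2)}$, we have $pd_2, pd_1 \ge \frac{56}{3}\log d/\alpha \ge \frac{56}{3}\log d$ and $\alpha p d_1, \alpha p d_2 \ge \frac{56}{3}\log d$, which I will choose large enough so that each of the four per-row/column failure probabilities is at most $d^{-2}$.

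Finally a union bound over $i\in[d_1]$ and $j\in[d_2]$ yields total failure probability at most $4\cdot d\cdot d^{-2} + 2\cdot d\cdot d^{-2} \le 6d^{-1}$ (the factor $2$ versus $1$ distinguishes the two-sided bound for $|\ObserveSet|$ from the one-sided bound for $|\Omega_o^*|$), giving the claimed probability $1-6d^{-1}$. The main bookkeeping step is verifying that the constant $\frac{56}{3}$ in the lower bound on $p$ is exactly what is needed to push every per-row/column tail probability below $d^{-2}$ simultaneously; apart from that, the argument is routine concentration and presents no real obstacle.
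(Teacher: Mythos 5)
Your decomposition and overall strategy are exactly the paper's: each count is a sum of independent Bernoulli$(p)$ indicators (at most $\alpha d_2$ or $\alpha d_1$ of them for the sparse supports), one applies a per-row/per-column tail bound, and a union bound over the at most $2d$ rows and columns gives $6d^{-1}$ with the same $4+2$ accounting you describe. The one place your write-up does not close is precisely the "bookkeeping step" you deferred: with the simplified multiplicative Chernoff constants you quote, the exponents are too weak for the constant $\tfrac{56}{3}$. For the sparse counts you have mean $\mu \le \alpha p d_2$ with only the guarantee $\alpha p d_2 \ge \tfrac{56}{3}\log d$, and your one-sided bound $\exp(-\mu/10)$ then gives a per-row failure probability of $d^{-56/30} = d^{-1.8\overline{6}}$, not $d^{-2}$; after the union bound this is $d^{-0.8\overline{6}}$, which does not stay below $d^{-1}$. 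The two-sided bound $2\exp(-\mu/12)$ has the same problem for the counts $\abs{\row{\ObserveSet}{i}}$ whenever $\alpha > 7/9$ (so the lemma as stated, with no further restriction on $\alpha$, is not covered). The paper avoids this by using Bernstein's inequality instead: for a Binomial with mean $\mu$ the deviation $\mu/2$ has probability at most $2\exp\bigl(-\tfrac{(\mu/2)^2/2}{\mu(1-p) + \mu/6}\bigr) \le 2\exp(-\tfrac{3}{28}\mu)$, and $\tfrac{3}{28}\cdot\tfrac{56}{3} = 2$ exactly, which is what makes each per-row probability $\le 2d^{-2}$ (resp.\ $d^{-2}$ for the one-sided version). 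So your argument is correct in structure but needs either Bernstein or the unsimplified Chernoff bound $\bigl(e^{1/2}/(3/2)^{3/2}\bigr)^{\mu}$ (whose exponent coefficient $\approx 0.108$ just clears $3/28$) in place of the $\exp(-\mu/10)$ and $\exp(-\mu/12)$ forms.
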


The next lemma, proved in Section \ref{proof:lem:frobnorm_sampling}, can be used to control the projection of small matrices to $\ObserveSet$.
\begin{lemma} \label{lem:frobnorm_sampling}
	There exists constant $c$ such that for any $\epsilon \in (0,1)$, if $p \geq c\frac{\mu^2 r^2\log d}{\epsilon^2(d_1 \wedge d_2)}$, then with probability at least $1 - \order(d^{-1})$, for all matrices $Z \in \real^{d_1 \times d_2}$, $\PIU \in \real^{d_1 \times r}$ and $\PIV \in \real^{d_2 \times r}$ that satisfy $\twoinfnorm{\PIU} \leq \sqrt{\mu r/d_1}$,$\twoinfnorm{\PIV} \leq \sqrt{\mu r/d_2}$, we have
	\begin{equation} \label{eq:uv_sampling}
	p^{-1}\frobnorm{\Proj_{\ObserveSet}(\PIU \PIV^{\top})}^2 \leq \frobnorm{\PIU}^2\frobnorm{\PIV}^2 + \epsilon \frobnorm{\PIU}\frobnorm{\PIV};
	\end{equation}
	\begin{equation} \label{eq:Zv_sampling}
	p^{-1}\frobnorm{\Proj_{\ObserveSet}(Z)V}^2 \leq 2\mu r \frobnorm{\Proj_{\ObserveSet}(Z)}^2;
	\end{equation}
	\begin{equation} \label{eq:uZ_sampling}
	p^{-1}\frobnorm{\PIU^{\top}\Proj_{\ObserveSet}(Z)}^2 \leq 2\mu r \frobnorm{\Proj_{\ObserveSet}(Z)}^2.
	\end{equation}
\end{lemma}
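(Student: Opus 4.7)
The three inequalities split cleanly into two kinds. Bounds \eqref{eq:Zv_sampling} and \eqref{eq:uZ_sampling} are essentially deterministic once one conditions on a high-probability row- and column-size event for $\ObserveSet$; only \eqref{eq:uv_sampling} needs genuine concentration, which I would obtain from a Bernstein estimate combined with an $\varepsilon$-net argument.

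I would begin with \eqref{eq:Zv_sampling}, since \eqref{eq:uZ_sampling} is symmetric. Writing $[\Proj_{\ObserveSet}(Z)\PIV]_{i,\cdot}=\sum_{j:(i,j)\in\ObserveSet}Z_{ij}\row{\PIV}{j}$ and applying Cauchy--Schwarz coordinate-wise,
\begin{equation*}
\twonorm{[\Proj_{\ObserveSet}(Z)\PIV]_{i,\cdot}}^{2}\le\twonorm{\row{\Proj_{\ObserveSet}(Z)}{i}}^{2}\sum_{j\in\row{\ObserveSet}{i}}\twonorm{\row{\PIV}{j}}^{2}\le |\row{\ObserveSet}{i}|\,\frac{\mu r}{d_{2}}\,\twonorm{\row{\Proj_{\ObserveSet}(Z)}{i}}^{2},
\end{equation*}
where the last inequality uses $\twoinfnorm{\PIV}\le\sqrt{\mu r/d_{2}}$. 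A Chernoff bound (or Lemma~\ref{lem:size_concentration} in the sparsity-dependent regime) gives $|\row{\ObserveSet}{i}|\le\tfrac{3}{2}pd_{2}$ uniformly in $i$ with probability $1-\order(d^{-1})$ under the stated hypothesis on $p$. Summing over $i$ and dividing by $p$ yields $p^{-1}\frobnorm{\Proj_{\ObserveSet}(Z)\PIV}^{2}\le 2\mu r\frobnorm{\Proj_{\ObserveSet}(Z)}^{2}$; the bound is automatically uniform in $Z$ and $\PIV$ because the argument uses only deterministic incoherence and the row-size event.

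For \eqref{eq:uv_sampling} I would first peel off the inner-product structure via Cauchy--Schwarz: setting $a_{i}:=\twonorm{\row{\PIU}{i}}^{2}$, $b_{j}:=\twonorm{\row{\PIV}{j}}^{2}$ and $\delta_{ij}:=\mathbf{1}\{(i,j)\in\ObserveSet\}$,
\begin{equation*}
\frobnorm{\Proj_{\ObserveSet}(\PIU\PIV^{\top})}^{2}=\sum_{i,j}\delta_{ij}\inprod{\row{\PIU}{i}}{\row{\PIV}{j}}^{2}\le\sum_{i,j}\delta_{ij}\,a_{i}b_{j},
\end{equation*}
whose expectation is $p\,\frobnorm{\PIU}^{2}\frobnorm{\PIV}^{2}$. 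It thus suffices to deviate-bound $p^{-1}\sum_{i,j}\delta_{ij}a_{i}b_{j}$ from its mean by at most $\varepsilon\,\frobnorm{\PIU}\frobnorm{\PIV}$. For fixed $(a,b)$ with $a_{i}\le\mu r/d_{1}$ and $b_{j}\le\mu r/d_{2}$, the per-term range is $\mu^{2}r^{2}/(pd_{1}d_{2})$ and the total variance is at most $(\mu^{2}r^{2}/(pd_{1}d_{2}))\frobnorm{\PIU}^{2}\frobnorm{\PIV}^{2}$, so Bernstein's inequality delivers the desired tail with probability at least $1-2\exp(-c\,p\varepsilon^{2}(d_{1}\wedge d_{2})/(\mu^{2}r^{2}))$.

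To upgrade this pointwise estimate to one holding uniformly over admissible $(\PIU,\PIV)$, observe that the Cauchy--Schwarz surrogate depends on $(\PIU,\PIV)$ only through $a\in\real^{d_{1}}_{+}$ and $b\in\real^{d_{2}}_{+}$. After rescaling to $\frobnorm{\PIU}=\frobnorm{\PIV}=1$, these parameters lie in bounded subsets of dimension $d_{1}$ and $d_{2}$ respectively, which admit $\varepsilon$-nets of cardinality $(C/\varepsilon)^{d_{1}+d_{2}}$; since $\sum\delta_{ij}a_{i}b_{j}$ is bilinear in $(a,b)$, Lipschitz continuity transfers the net bound to the ambient set after a benign adjustment of $\varepsilon$. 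The Bernstein exponent dominates $(d_{1}+d_{2})\log(C/\varepsilon)+\log d$ exactly when $p\gtrsim\mu^{2}r^{2}\log d/(\varepsilon^{2}(d_{1}\wedge d_{2}))$, matching the hypothesis. The one genuinely delicate step is this calibration: balancing the $(d_{1}+d_{2})$-dimensional net penalty against the Bernstein exponent is what forces the $(d_{1}\wedge d_{2})$ in the denominator of the sample complexity, and it is the main obstacle in the proof.
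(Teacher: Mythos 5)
Your treatment of \eqref{eq:Zv_sampling} and \eqref{eq:uZ_sampling} is exactly the paper's: entrywise Cauchy--Schwarz, the incoherence bound on $\twoinfnorm{\PIV}$, and the row/column-size event $\abs{\row{\ObserveSet}{i}} \leq \tfrac{3}{2}pd_2$ from (the proof of) Lemma~\ref{lem:size_concentration}. For \eqref{eq:uv_sampling} you also start identically, reducing to the bilinear form $\sum_{(i,j)\in\ObserveSet} a_i b_j$ with $a_i = \twonorm{\row{\PIU}{i}}^2$, $b_j = \twonorm{\row{\PIV}{j}}^2$, but then you diverge: the paper invokes Lemma~3.2 of \cite{candes2011robust} applied to the single all-ones matrix $J$, obtaining $\opnorm{\Proj_{\ObserveSet}(J) - pJ} \lesssim \sqrt{pd\log d}$, which uniformizes over all $(a,b)$ for free since $\abs{\sum_{(i,j)}(\delta_{ij}-p)a_ib_j} \leq \opnorm{\Proj_{\ObserveSet}(J)-pJ}\twonorm{a}\twonorm{b}$, and then $\twonorm{a} \leq \twoinfnorm{\PIU}\frobnorm{\PIU}$ closes the bound. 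Your scalar-Bernstein-plus-net argument is a valid and more elementary substitute (it avoids the imported matrix concentration result), but it is genuinely more delicate than the paper's one-line uniformization: the net must be taken at scale roughly $\epsilon/(\mu r)$ because the Lipschitz constant of $a \mapsto p^{-1}\sum_{ij}\delta_{ij}a_ib_j$ in $\ell_1$ is of order $\mu r$ on the row-size event, and the resulting entropy term $(d_1+d_2)\log(C\mu r/\epsilon)$ is dominated by the Bernstein exponent $p\epsilon^2 d_1d_2/(\mu^2r^2) \gtrsim (d_1\vee d_2)\log d$ only because $\mu^2r^2/\epsilon^2 \lesssim (d_1\wedge d_2)/\log d$ is forced by $p \leq 1$, so $\log(\mu r/\epsilon) \lesssim \log d$. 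These details are absorbable, so I see no gap, but be aware that the operator-norm route sidesteps all of this bookkeeping.
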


In the remainder of this section, we condition on the events in Lemmas \ref{lem:norm_sampling}, \ref{lem:size_concentration} and \ref{lem:frobnorm_sampling}. Now we are ready to prove Lemma \ref{lem:local_descent_L_partial}.

\begin{proof}[Proof of Lemma \ref{lem:local_descent_L_partial}]	
	Using shorthand $M := \PIU\PIV^{\top}$, we have
	\[
	\nabla_{M} \mathcal{\widetilde{L}}(\PIU, \PIV; S) = p^{-1}\Proj_{\ObserveSet}\left(M + S - \MStar - \SStar\right).
	\]
	Plugging it back into the left hand side of \eqref{eq:local_descent}, we obtain
	\begin{align}
	&\trinprod{\nabla_{M} \mathcal{\widetilde{L}}(\PIU, \PIV; S)}{ \PIU \PIV^{\top} - \UniqueU\UniqueV^{\top} + \DeltaU \DeltaV^{\top} } \notag\\
	& = \frac{1}{p}\trinprod{\Proj_{\ObserveSet}\left(M + S - \MStar - \SStar\right)}{ M - \MStar + \DeltaU \DeltaV^{\top} }\notag \\
	& \geq \underbrace{\frac{1}{p}\frobnorm{\Proj_{\ObserveSet}\left(M - \MStar\right)}^2}_{T_1} - \underbrace{\frac{1}{p}\abs{   \trinprod{\Proj_{\ObserveSet}\left(S - \SStar\right)}{M - \MStar}}}_{T_2} - \underbrace{\frac{1}{p}\abs{\trinprod{ \Proj_{\ObserveSet}\left(M + S - \MStar - \SStar\right)}{\DeltaU \DeltaV^{\top}}}}_{T_3}. \label{eq:tmp_local_descent_partial}
	\end{align}
	
	Next we derive lower bounds of $T_1$, upper bounds of $T_2$ and $T_3$ respectively.

	\paragraph{Lower bound of $T_1$.}  We observe that $M - \MStar = \UniqueU^*\DeltaV^{\top} + \DeltaU \UniqueV^{\top} + \DeltaU\DeltaV^{\top}$. By triangle inequality, we have
	\[ 
	\frobnorm{\Proj_{\ObserveSet}(M - \MStar)} \geq \frobnorm{\Proj_{\ObserveSet}(\UniqueU\DeltaV^{\top} + \DeltaU \UniqueV^{\top})} - \frobnorm{\Proj_{\ObserveSet}(\DeltaU\DeltaV^{\top})}.
	\]
	Note that when $c \geq a - b$ for $a, b \geq 0$, we always have $c^2 \geq \frac{1}{2}a^2 - b^2$. We thus have
	\begin{align*}
	T_1 & \geq \frac{1}{2p}\frobnorm{\Proj_{\ObserveSet}(\UniqueU\DeltaV^{\top} + \DeltaU \UniqueV^{\top})}^2 - \frac{1}{p}\frobnorm{\Proj_{\ObserveSet}(\DeltaU\DeltaV^{\top})}^2 \\
	& \geq \frac{1}{2}(1-\epsilon)\frobnorm{\UniqueU\DeltaV^{\top} + \DeltaU \UniqueV^{\top}}^2 - \frac{1}{p}\frobnorm{\Proj_{\ObserveSet}(\DeltaU\DeltaV^{\top})}^2 \\
	& \geq \frac{1}{2}(1-\epsilon)\frobnorm{M - \MStar - \DeltaU\DeltaV^{\top}}^2 - \frobnorm{\DeltaU}^2\frobnorm{\DeltaV}^2 - 9\epsilon\sigma_1^*\frobnorm{\DeltaU}\frobnorm{\DeltaV} \\
	& \geq \frac{1}{4}(1-\epsilon)\frobnorm{M - \MStar}^2 - \frac{1}{2}(1-\epsilon)\frobnorm{\DeltaU\DeltaV}^2 - \frobnorm{\DeltaU}^2\frobnorm{\DeltaV}^2 - 9\epsilon\sigma_1^*\frobnorm{\DeltaU}\frobnorm{\DeltaV} \\
	& \geq \frac{1}{4}(1-\epsilon)\frobnorm{M - \MStar}^2 - 2\delta^2 - 5\epsilon\sigma_1^*\delta.
	\end{align*}
	where the second step is implied by Lemma \ref{lem:norm_sampling}, the third step follows from \eqref{eq:uv_sampling} in Lemma \ref{lem:frobnorm_sampling} by noticing that $\twoinfnorm{\DeltaU} \leq 3\sqrt{\mu r \sigma_1^*/d_1}$ and  $\twoinfnorm{\DeltaV} \leq 3\sqrt{\mu r \sigma_1^*/d_1}$, which is further implied by \eqref{eq:uvcondition_partial}.

	\paragraph{\bf Upper bound of $T_2$.} 
	Since $S - \SStar$ is supported on $\Omega^*_{0}\cup \Omega$, we have
	\begin{equation} \label{eq:V2_partial}
	pT_2 \leq \abs{\trinprod{ \Proj_{\ObserveSparse\setminus \Omega}(\SStar)}{\Proj_{\ObserveSparse\setminus \Omega}(M - \MStar)}} + \abs{\trinprod{\Proj_{\Omega}(S - \SStar)}{\Proj_{\Omega}(M - \MStar)}}.
	\end{equation}
	For any $(i,j) \in \Omega$, we have $(S - \SStar)_{(i,j)} = (\MStar - M)_{(i,j)}$. Therefore, for the second term on the right hand side, we have
	\begin{equation} \label{eq:tmp4}
	\abs{\trinprod{\Proj_{\Omega}(S - \SStar)}{\Proj_{\Omega}(M - \MStar)}} \leq \frobnorm{\Proj_{\Omega}(M - \MStar)}^2 \leq 18\gamma p \alpha \mu r \sigma_1^* \delta,
	\end{equation}
	where the last inequality follows from Lemma \ref{lem:projection} and the fact that $\abs{\Omega_{(i,\cdot)}} \leq \gamma p \alpha d_2$, $\abs{\Omega_{(\cdot, j)}} \leq \gamma p \alpha d_1$ for all $i \in [d_1]$, $j \in [d_2]$.
	
	We denote the $i$-th row of $\Proj_{\ObserveSet}(M - \MStar)$ by $u_i$, and we denote the $j$-th column of $\Proj_{\ObserveSet}(M - \MStar)$ by $v_j$. We let $u_i^{(k)}$ denote the element of $u_i$ that has the $k$-th largest magnitude. We let $v_j^{(k)}$ denote the element of $v_j$ that has the $k$-th largest magnitude. 
	
	For the first term on the right hand side of \eqref{eq:V2_partial}, we first observe that for $(i,j) \in \ObserveSparse\setminus \Omega$, $\abs{(\MStar + \SStar - M)_{(i,j)}}$ is either less than the $\gamma p\alpha d_2$-th largest element in the $i$-th row of $\Proj_{\ObserveSet}(\MStar + \SStar - M)$, or less than $\gamma p \alpha d_1$-th largest element in the $j$-th row of $\Proj_{\ObserveSet}(\MStar + \SStar - M)$. Based on Lemma \ref{lem:size_concentration}, $\Proj_{\ObserveSet}(\SStar)$ has at most $3p\alpha d_2/2$ nonzero entries per row and at most $3p \alpha d_1/2$ nonzero entries per column. Therefore, we have
	\begin{equation} \label{eq:bij_partial}
	\abs{\entry{(\MStar + \SStar - M)}{i}{j}} \leq \max\left\{ \abs{u_i^{((\gamma - 1.5)p\alpha d_2)}},~ \abs{v_j^{((\gamma - 1.5)p\alpha d_1)}}\right\}.
	\end{equation}
	In addition, we observe that
	\begin{align}
	& \abs{\trinprod{ \Proj_{\ObserveSparse\setminus \Omega}(\SStar)}{ \Proj_{\ObserveSparse\setminus \Omega}(M - \MStar)}} \notag \\
	&  \leq \sum_{(i,j) \in \ObserveSparse\setminus \Omega} \abs{(\MStar + \SStar - M)_{(i,j)}}\abs{(\MStar - M)_{(i,j)}} + \abs{(\MStar - M)_{(i,j)}}^2 \notag\\
	& \leq \left(1 + \frac{\beta}{2}\right)\frobnorm{\Proj_{\ObserveSparse}(\MStar - M)}^2 + \frac{1}{2\beta}\sum_{(i,j) \in \ObserveSparse\setminus \Omega} \abs{(\MStar + \SStar - M)_{(i,j)}}^2, \notag \\
	& \leq (27 + 14\beta)p\alpha\mu r\sigma_1^*\delta + \frac{1}{2\beta}\sum_{(i,j) \in \ObserveSparse\setminus \Omega} \abs{(\MStar + \SStar - M)_{(i,j)}}^2, \label{eq:tmp5}
	\end{align}
	where the second step holds for any $\beta > 0$ and the last step follows from Lemma \ref{lem:projection} under the size constraints of $\ObserveSparse$ shown in Lemma \ref{lem:size_concentration}. For the second term in \eqref{eq:tmp5}, using \eqref{eq:bij_partial}, we have
	\begin{align} \label{eq:tmp6}
	& \sum_{(i,j) \in \ObserveSparse\setminus \Omega} \abs{(\MStar + \SStar - M)_{(i,j)}}^2 \leq \sum_{(i,j) \in \ObserveSparse} \abs{u_i^{((\gamma - 1.5)p\alpha d_2)}}^2 + \abs{v_j^{((\gamma - 1.5)p\alpha d_1)}}^2 \notag \\
	& = \sum_{i \in [d_1]} \sum_{j \in \Omega^*_{o(i,\cdot)} }  \abs{u_i^{((\gamma - 1.5)p\alpha d_2)}}^2 + 
	\sum_{j \in [d_2]}\sum_{i \in \Omega^*_{o(\cdot, j)} } \abs{v_j^{((\gamma - 1.5)p\alpha d_1)}}^2 \notag\\
	& \leq \sum_{i \in [d_1]} \frac{1.5}{\gamma - 1.5}\twonorm{u_i}^2 + \sum_{j \in [d_2]} \frac{1.5}{\gamma - 1.5}\twonorm{v_j}^2 \leq  \frac{3}{\gamma - 1.5} \frobnorm{\Proj_{\ObserveSet}(M - \MStar)}^2.
	\end{align}
	Moreover, we have
	\begin{align}
	& \frobnorm{\Proj_{\ObserveSet}(M - \MStar)}^2 \leq 2\frobnorm{\Proj_{\ObserveSet}(\UniqueU\DeltaV^{\top} + \DeltaU \UniqueV^{\top})}^2 + 2\frobnorm{\Proj_{\ObserveSet}(\DeltaU\DeltaV^{\top})}^2 \notag\\
	& \leq 2(1+\epsilon)p\frobnorm{\UniqueU\DeltaV^{\top} + \DeltaU \UniqueV^{\top}}^2 + 2p\frobnorm{\DeltaU}^2\frobnorm{\DeltaV}^2 + 18p\epsilon\sigma_1^* \frobnorm{\DeltaU}\frobnorm{\DeltaV} \notag\\
	& \leq 4(1+\epsilon)p\left(\opnorm{\UniqueU}^2 \frobnorm{\DeltaV}^2 + \opnorm{\UniqueV}^2 \frobnorm{\DeltaU}^2\right) + 2p\frobnorm{\DeltaU}^2\frobnorm{\DeltaV}^2 + 18p\epsilon\sigma_1^* \frobnorm{\DeltaU}\frobnorm{\DeltaV} \notag\\
	& \leq (4 + 13\epsilon)p\sigma_1^*\delta + 2p\delta^2, \label{eq:tmp7}
	\end{align}
	where the second step follows from Lemma \ref{lem:norm_sampling} and inequality \eqref{eq:uv_sampling} in Lemma \ref{lem:frobnorm_sampling}.
	Putting \eqref{eq:V2_partial}-\eqref{eq:tmp7} together, we obtain
	\begin{align*}
	T_2 & \leq (18\gamma + 14\beta + 27)\alpha \mu r \sigma_1^* \delta + \frac{3[(2 + 7\epsilon)\sigma_1^*\delta + \delta^2]}{\beta(\gamma - 1.5)}.
	\end{align*}
	
	\paragraph{Upper bound of $T_3$.} 
	By Cauchy-Schwarz inequality, we have
	\begin{align*}
	pT_3 & \leq \frobnorm{ \Proj_{\ObserveSet}(M - \MStar +  S - \SStar)} \frobnorm{\Proj_{\ObserveSet}( \DeltaU\DeltaV^{\top})} \\
	& \leq \frobnorm{ \Proj_{\ObserveSet}(M - \MStar +  S - \SStar)}  \sqrt{p\frobnorm{\DeltaU}^2\frobnorm{\DeltaV}^2 + 9p\epsilon\sigma_1^* \frobnorm{\DeltaU}\frobnorm{\DeltaV}} \\
	& \leq \frobnorm{ \Proj_{\ObserveSet}(M - \MStar +  S - \SStar)}\sqrt{p\delta^2 + 5p\epsilon\sigma_1^*\delta}.
	\end{align*}
	where we use \eqref{eq:uv_sampling} in Lemma \ref{lem:frobnorm_sampling} in the second step.
	
	We observe that $\Proj_{\ObserveSet}(M - \MStar + S - \SStar)$ is supported on $\ObserveSet\setminus\Omega$. Therefore, we have
	\begin{align*}
	\frobnorm{\Proj_{\ObserveSet}(M - \MStar +  S - \SStar)} & \leq \frobnorm{\Proj_{\ObserveSet \cap \Omega^c \cap \ObserveSet^{*c}}(M - \MStar)} + \frobnorm{ \Proj_{\ObserveSet \cap \Omega^c \cap \ObserveSet^{*}}(M - \MStar - \SStar)} \\
	& \leq \frobnorm{\Proj_{\ObserveSet}(M - \MStar)}+\frobnorm{ \Proj_{\Omega^c \cap \ObserveSet^{*}}(M - \MStar - \SStar)} \\
	& \leq \frobnorm{\Proj_{\ObserveSet}(M - \MStar)} + \sqrt{\frac{3}{\gamma -1.5}}\frobnorm{\Proj_{\ObserveSet}(M - \MStar)} \\
	& \leq \left(1 + \sqrt{\frac{3}{\gamma -1.5}}\right)\sqrt{(4 + 13\epsilon)p\sigma_1^*\delta + 2p\delta^2},
	\end{align*}
	where the third step follows from \eqref{eq:tmp6}, and the last step is from \eqref{eq:tmp7}. Under assumptions $\gamma = 3$, $\epsilon \leq 1/4$ and $\delta \leq \sigma_1^*$, we have
	\[
	T_3 \leq 3\sqrt{9\sigma_1^*\delta + 2\delta^2}\sqrt{\delta^2 + 5\epsilon\sigma_1^*\delta} \leq 10\sqrt{\sigma_1^*\delta^3} + 23\sqrt{\epsilon}\sigma_1^*\delta.
	\]
	
	\paragraph{Combining pieces.} Under the aforementioned assumptions, putting all pieces together leads to
	\begin{align*}
	& \trinprod{\nabla_{M} \mathcal{\widetilde{L}}(\PIU, \PIV; S)}{ \PIU \PIV^{\top} - \UniqueU\UniqueV^{\top} + \DeltaU \DeltaV^{\top} } \\
	& \geq \frac{3}{16}\frobnorm{M - \MStar}^2 - (14\beta + 81)\alpha \mu r \sigma_1^*\delta -\left(26\sqrt{\epsilon} + \frac{18}{\beta}\right)\sigma_1^*\delta - 10\sqrt{\sigma_1^*\delta^3} - 2\delta^2.
	\end{align*}	
\end{proof}

\subsection{Proof of Lemma \ref{lem:smoothness_partial}} \label{proof:lem:smoothness_partial}
Let $M := \PIU\PIV^{\top}$. We find that 
\begin{align*}
& \nabla_{\PIU}\widetilde{\Loss}(\PIU,\PIV;\PIS) = p^{-1}\Proj_{\ObserveSet}\left(M + S -  \MStar - \SStar\right)\PIV, \\
& \nabla_{\PIV}\widetilde{\Loss}(\PIU,\PIV;\PIS) = p^{-1}\Proj_{\ObserveSet}\left(M + S -  \MStar - \SStar\right)^{\top}\PIU.
\end{align*}
Conditioning on the event in Lemma \ref{lem:frobnorm_sampling}, since $(\PIU, \PIV) \in \widebar{\USet} \times \widebar{\VSet}$, inequalities \eqref{eq:Zv_sampling} and \eqref{eq:uZ_sampling} imply that
\[
\frobnorm{\nabla_{\PIU}\widetilde{\Loss}(\PIU,\PIV;\PIS)}^2 + \frobnorm{\nabla_{\PIV}\widetilde{\Loss}(\PIU,\PIV;\PIS)}^2 \leq \frac{12}{p}\mu r\sigma_1^*\frobnorm{\Proj_{\ObserveSet}\left(M + S -  \MStar - \SStar\right)}^2.
\]
It remains to bound the term  $\frobnorm{\Proj_{\ObserveSet}\left(M + S -  \MStar - \SStar\right)}^2$. Let $\ObserveSparse$ and $\Omega$ be the support of $\Proj_{\ObserveSet}(\SStar)$ and $S$ respectively. We observe that
\begin{align*}
\frobnorm{\Proj_{\ObserveSet}\left(M + S -  \MStar - \SStar\right)}^2 & = \frobnorm{\Proj_{\ObserveSparse \setminus \Omega}\left(M -  \MStar - \SStar\right)}^2 + \frobnorm{\Proj_{\ObserveSet^{*c} \cap \Omega^c \cap \ObserveSet}\left(M -  \MStar\right)}^2 \\
& \leq \frobnorm{\Proj_{\ObserveSparse \setminus \Omega}\left(M -  \MStar - \SStar\right)}^2 + \frobnorm{\Proj_{\ObserveSet}\left(M -  \MStar\right)}^2.
\end{align*}
In the proof of Lemma \ref{lem:local_descent_L_partial}, it is shown in \eqref{eq:tmp6} that
\[
\frobnorm{\Proj_{\ObserveSparse \setminus \Omega}\left(M -  \MStar - \SStar\right)}^2 \leq \frac{3}{\gamma - 1.5}\frobnorm{\Proj_{\ObserveSet}(M - \MStar)}^2.
\]
Moreover, following \eqref{eq:tmp7}, we have that
\begin{align*}
\frobnorm{\Proj_{\ObserveSet}(M - \MStar)}^2 & \leq 2(1+\epsilon)p\frobnorm{\UniqueU\DeltaV^{\top} + \DeltaU \UniqueV^{\top}}^2 + 2p\frobnorm{\DeltaU}^2\frobnorm{\DeltaV}^2 + 18p\epsilon\sigma_1^* \frobnorm{\DeltaU}\frobnorm{\DeltaV} \\
& \leq 4(1+\epsilon)p\frobnorm{M - \MStar}^2 + (6+4\epsilon)p\frobnorm{\DeltaU}^2\frobnorm{\DeltaV}^2 + 18p\epsilon\sigma_1^* \frobnorm{\DeltaU}\frobnorm{\DeltaV} \\
& \leq 4(1+\epsilon)p\frobnorm{M - \MStar}^2 + (6+4\epsilon)p\delta^2 + 9p\epsilon\sigma_1^*\delta.
\end{align*}
We thus finish proving our conclusion by combining all pieces and noticing that $\gamma = 3$ and $\epsilon \leq 1/4$.

\section{Proofs for Technical Lemmas} \label{sec:proof_lemmas}

In this section, we prove several technical lemmas that are used in the proofs of our main theorems.

\subsection{Proof of Lemma \ref{lem:op_to_inf}} \label{proof:lem:op_to_inf}
We observe that 
\[
\opnorm{A} = \sup_{x \in \mathbb{S}^{d_1 - 1}} \sup_{y \in \mathbb{S}^{d_2 - 1}} x^{\top}Ay.
\]
We denote the support of $A$ by $\Omega$. For any $x \in \real^{d_1}$, $y \in \real^{d_2}$ and $\beta > 0$, we have
\begin{align*}
x^{\top}Ay & = \sum_{(i,j) \in \Omega} x_i A_{(i,j)}y_j \leq \sum_{(i,j) \in \Omega}  \frac{1}{2}\infnorm{A}(\beta^{-1}x_i^2 + \beta y_j^2) \\
& = \frac{1}{2}\infnorm{A}\left(\sum_{i} \sum_{j \in \row{\Omega}{i}} \beta^{-1}x_i^2 + \sum_{j} \sum_{i \in \column{\Omega}{j}} \beta y_j^2 \right) \\
& \leq \frac{1}{2}\infnorm{A} \left(\alpha d_2 \beta^{-1}\twonorm{x}^2 + \alpha  d_1 \beta\twonorm{y}^2\right).
\end{align*}
It is thus implied that $\opnorm{A} \leq \frac{1}{2}\alpha(\beta^{-1}d_2 + \beta d_1)\infnorm{A}$. Choosing $\beta = \sqrt{d_2/d_1}$ completes the proof.

\subsection{Proof of Lemma \ref{lem:norm_sampling}} \label{proof:lem:norm_sampling}
We define a subspace $\mathcal{K} \subseteq \real^{d_1 \times d_2}$ as
\[
\mathcal{K} := \left\{X ~\big|~ X = \PIL^*A^{\top} + B\PIR^{*\top} ~\text{for some}~ A \in \real^{d_2 \times r}, B \in \real^{d_1 \times r}~\right\}. 
\]
Let $\Proj_{\mathcal{K}}$ be Euclidean projection onto $\mathcal{K}$. Then according to Theorem 4.1 in \cite{candes2009exact}, under our assumptions, for all matrices $X \in \real^{d_1 \times d_2}$, inequality
\begin{equation} \label{eq:exact_mc}
p^{-1}\frobnorm{\left(\Proj_{\mathcal{K}}\Proj_{\ObserveSet}\Proj_{\mathcal{K}} - p\Proj_{\mathcal{K}}\right)X} \leq \epsilon\frobnorm{X}
\end{equation}
holds with probability at least $1 - 2d^{-3}$. 

In our setting, by restricting $X = \PIL^*A^{\top} + B\PIR^{*\top}$, we have $\Proj_{\mathcal{K}}X = X$. Therefore, \eqref{eq:exact_mc} implies that
\[
\frobnorm{\Proj_{\mathcal{K}}\Proj_{\ObserveSet}X - pX} \leq p\epsilon\frobnorm{X}.
\]
For $\frobnorm{\Proj_{\ObserveSet}X}^2$, we have
\begin{align*}
\frobnorm{\Proj_{\ObserveSet}X}^2 & = \trinprod{\Proj_{\ObserveSet}X}{\Proj_{\ObserveSet}X} = \trinprod{\Proj_{\ObserveSet}X}{X} \\
& = \trinprod{\Proj_{\mathcal{K}}\Proj_{\ObserveSet}X}{X} \leq \frobnorm{\Proj_{\mathcal{K}}\Proj_{\ObserveSet}X}\frobnorm{X} \leq p(1+\epsilon)\frobnorm{X}^2.
\end{align*}
On the other hand, we have
\begin{align*}
\frobnorm{\Proj_{\ObserveSet}X}^2 & = \trinprod{\Proj_{\mathcal{K}}\Proj_{\ObserveSet}X}{X} = \trinprod{\Proj_{\mathcal{K}}\Proj_{\ObserveSet}X - pX + pX}{X} \\
& = p\frobnorm{X}^2 - \trinprod{X}{-\Proj_{\mathcal{K}}\Proj_{\ObserveSet}X + pX} \\
& \geq p\frobnorm{X}^2 - \frobnorm{X}\frobnorm{\Proj_{\mathcal{K}}\Proj_{\ObserveSet}X - pX} \geq p(1-\epsilon)\frobnorm{X}^2.
\end{align*}
Combining the above two inequalities, we complete the proof.

\subsection{Proof of Lemma \ref{lem:size_concentration}} \label{proof:lem:size_concentration}
We observe that $\abs{\row{\ObserveSet}{i}}$ is a summation of $d_2$ i.i.d. binary random variables with mean $p$ and variance $p(1-p)$. By Bernstein's inequality, for any $i \in [d_1]$,
\[
\Pr\left[ \big| \abs{\row{\ObserveSet}{i}} - pd_2\big| \geq \frac{1}{2}pd_2 \right] \leq 2\exp\left(-\frac{-\frac{1}{2}(pd_2/2)^2}{d_2p(1-p) + \frac{1}{3}(pd_2/2)}\right) \leq 2\exp\left(-\frac{3}{28}pd_2\right).
\]
By probabilistic union bound, we have
\[
\Pr\left[ \sup_{i \in [d_1]} \big|\abs{\row{\ObserveSet}{i}} - pd_2\big| \geq \frac{1}{2}pd_2\right] \leq 2d_1\exp\left(-\frac{3}{28}pd_2\right) \leq 2d^{-1},
\]
where the last inequality holds by assuming $p \geq \frac{56}{3}\frac{\log d}{d_2}$.

The term $\abs{\Omega^*_{o(i,\cdot)}}$ is a summation of at most $\alpha d_2$ i.i.d. binary random variables with mean $p$ and variance $p(1-p)$. Again, applying Bernstein's inequality leads to
\[
\Pr\left[ \abs{\Omega^*_{o(i,\cdot)}} - \mathbb{E}\left[\abs{\Omega^*_{o(i,\cdot)}}\right]  \geq \frac{1}{2}p\alpha d_2\right] \leq \exp\left(-\frac{3}{28}p\alpha d_2\right).
\]
Accordingly, by the assumption $p \geq \frac{56}{3}\frac{\log d}{\alpha d_2}$, we obtain
\[
\Pr\left[ \sup_{i \in [d_1]}\abs{\Omega^*_{o(i,\cdot)}} - pk  \geq \frac{1}{2}pk\right]  \leq d_1\exp\left(-\frac{3}{28}p\alpha d_2\right) \leq d^{-1}.
\]
The proofs for $\abs{\column{\ObserveSet}{j}}$ and $\abs{\Omega^*_{o(\cdot,j)}}$ follow the same idea.
\subsection{Proof of Lemma \ref{lem:frobnorm_sampling}} \label{proof:lem:frobnorm_sampling}
According to Lemma 3.2 in \cite{candes2011robust}, under condition $p \geq c_1\frac{\mu \log d}{d_1 \wedge d_2}$, for any fixed matrix $A \in \real^{d_1 \times d_2}$, we have
\[
\opnorm{A - p^{-1}\Proj_{\ObserveSet}A} \leq c_2\sqrt{\frac{d\log d}{p}}\infnorm{A},
\]
holds with probability at least $1 - \order(d^{-3})$. Letting $A$ be all-ones matrix, then we have that for all $u \in \real^{d_1}, v \in \real^{d_2}$,
\[
\sum_{(i,j)\in \ObserveSet} u_i v_j \leq p\onenorm{u}\onenorm{v} + c_2\sqrt{pd\log d}\twonorm{u}\twonorm{v}.
\]
We find that
\begin{align*}
\frobnorm{\Proj_{\ObserveSet}(\PIU \PIV^{\top})}^2 & \leq \sum_{(i,j)\in \ObserveSet}\twonorm{\row{\PIU}{i}}^2\twonorm{\row{\PIV}{j}}^2 \\
& \leq p \frobnorm{\PIU}^2\frobnorm{\PIV}^2 + c_2\sqrt{pd\log d} \sqrt{\sum_{i \in [d_1]} \twonorm{\row{\PIU}{i}}^4}\sqrt{\sum_{j \in [d_2]} \twonorm{\row{\PIV}{j}}^4} \\
& \leq p \frobnorm{\PIU}^2\frobnorm{\PIV}^2 + c_2\sqrt{pd \log d}\frobnorm{\PIU}\frobnorm{\PIV}\twoinfnorm{\PIU}\twoinfnorm{\PIV} \\
& \leq p \frobnorm{\PIU}^2\frobnorm{\PIV}^2 + c_2\sqrt{\frac{p \mu^2r^2d\log d}{d_1 d_2}}\frobnorm{\PIU}\frobnorm{\PIV}.
\end{align*}
By the assumption $p \gtrsim \frac{\mu^2r^2\log d}{\epsilon^2(d_1 \wedge d_2)}$, we finish proving \eqref{eq:uv_sampling}.

According to the proof of Lemma \ref{lem:size_concentration}, if $p \geq c\frac{\log d}{d_1\wedge d_2}$, with probability at least $1 - \order(d^{-1})$, we have $\abs{\row{\ObserveSet}{i}} \leq \frac{3}{2}pd_2$ and $\abs{\column{\ObserveSet}{j}} \leq \frac{3}{2}pd_1$ for all $i \in [d_1]$ and $j \in [d_2]$. Conditioning on this event, we have
\begin{align*}
\frobnorm{\Proj_{\ObserveSet}(Z)V}^2 & = \sum_{i \in [d_1]} \sum_{k \in [r]} \inprod{\row{(\Proj_{\ObserveSet}(Z))}{i}}{\column{H}{k}}^2 \\
& \leq \sum_{i \in [d_1]} \sum_{k \in [r]} \twonorm{\row{(\Proj_{\ObserveSet}(Z))}{i}}^2 \sum_{j \in \row{\Omega}{i}} \entry{\PIV}{j}{k}^2 \\
& = \frobnorm{\Proj_{\ObserveSet}Z}^2 \sum_{j \in \row{\Omega}{i}} \twonorm{\row{\PIV}{i}}^2 \\
& \leq  \frobnorm{\Proj_{\ObserveSet}Z}^2 \frac{3}{2}pd_2 \cdot \twoinfnorm{\PIV}^2 \leq 2\mu r p \frobnorm{\Proj_{\ObserveSet}Z}^2.
\end{align*}
We thus finish proving \eqref{eq:Zv_sampling}. Inequality \eqref{eq:uZ_sampling} can be proved in the same way.

\subsection{Proof of Lemma \ref{lem:symm}} \label{proof:lem:symm}
Recall that we let $F := [\PIU; \PIV]$ and $F_{\pi^*} := [\OptU; \OptV]\RotateM$ for some matrix $\RotateM \in \RotateSet{r}$, which minimizes the following function
\begin{equation} \label{eq:opt_Q}
\frobnorm{F - [\OptU; \OptV]Q}^2.
\end{equation}
Let $F^* := [\OptU; \OptV]$. Expanding the above term, we find that $\RotateM$ is the maximizer of $\trinprod{F}{F^*Q} = \Trace(F^{\top}F^{*}Q)$. Suppose $F^{\top}F^{*}$ has SVD with form $Q_1\Lambda Q_2^{\top}$ for $Q_1, Q_2 \in \RotateSet{r}$. When the minimum diagonal term of $\Lambda$ is positive, we conclude that the minimizer of \eqref{eq:opt_Q} is unique and $Q = Q_2Q_1^{\top}$. To prove this argument, we note that 
\[
\Trace(F^{\top}F^{*}Q) = \sum_{i \in [r]} \entry{\Lambda}{i}{i}\inprod{p_i}{q_i},
\]
where $p_i$ is the $i$-th column of $Q_1$ and $q_i$ is the $i$-th column of $Q^{\top}Q_2$. Hence, $\Trace(F^{\top}F^{*}Q) \leq \sum_{i \in [r]} \entry{\Lambda}{i}{i}$ and the equality holds if and only if $p_i = q_i$ for all $i \in [r]$ since every $\entry{\Lambda}{i}{i} > 0$. We have $Q_1 = Q^{\top}Q_2$ and thus finish proving the argument.

Under our assumption $\opnorm{F - F_{\pi^*}} < \sqrt{2\sigma_r^*}$, for any nonzero vector $u \in \real^r$, we have
\[
\twonorm{F^{\top}F_{\pi^*}u} \geq \twonorm{F_{\pi^*}^{\top}F_{\pi^*}u} - \twonorm{(F_{\pi^*} - F)^{\top}F_{\pi^*}u} \geq (\sqrt{2\sigma_r^*} - \opnorm{F_{\pi^*} - F})\frobnorm{F_{\pi^*}u} > 0.
\]
In the second step, we use the fact that the singular values of $F_{\pi^*}$ are equal to the diagonal terms of $\sqrt{2}\SVMat^{*1/2}$. Hence, $F^{\top}F_{\pi^*}$ has full rank. Furthermore, it implies that $F^{\top}F^*$ has full rank and only contains positive singular values.

Proceeding with the proved argument, we have
\[
F^{\top}F_{\pi^*} = Q_1\Lambda Q_2^{\top}Q_2Q_1^{\top} = Q_1\Lambda Q_1^{\top},
\]
which implies that $F^{\top}F_{\pi^*}$ is symmetric. Accordingly, we have $(F - F_{\pi^*})^{\top}F_{\pi^*}$ is also symmetric.

\section*{Acknowledgment}

Y. Chen acknowledges support from the School of Operations Research and Information Engineering, Cornell University.



\printbibliography


\newpage

\appendixpage

\appendix
\section{Supporting Lemmas} \label{sec:supporting_lemma}
In this section, we provide several technical lemmas used for proving our main results.
\begin{lemma} \label{lem:delta_norm}
	For any $(\OptU, \OptV) \in \OptSet(\MStar)$, $\PIU \in \real^{d_1 \times r}$ and $\PIV \in \real^{d_2 \times r}$, we have
	\[
	\frobnorm{\PIU\PIV^{\top} - \OptU \OptVT} \leq \sqrt{\sigma_1^*}(\frobnorm{\DeltaV} + \frobnorm{\DeltaU}) + \frobnorm{\DeltaU}\frobnorm{\DeltaV},
	\]
	where $\DeltaU := \PIU - \OptU$, $\DeltaV := \PIV - \OptV$.
\end{lemma}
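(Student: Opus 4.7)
The plan is to expand $\PIU\PIV^\top$ around $(\OptU, \OptV)$ and apply the triangle inequality together with standard operator-norm bounds on Frobenius norms of products.

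First I would write $\PIU = \OptU + \DeltaU$ and $\PIV = \OptV + \DeltaV$ and multiply out:
\[
\PIU\PIV^\top - \OptU\OptV^\top = \OptU\DeltaV^\top + \DeltaU\OptV^\top + \DeltaU\DeltaV^\top.
\]
By the triangle inequality in the Frobenius norm,
\[
\frobnorm{\PIU\PIV^\top - \OptU\OptV^\top} \leq \frobnorm{\OptU\DeltaV^\top} + \frobnorm{\DeltaU\OptV^\top} + \frobnorm{\DeltaU\DeltaV^\top}.
\]
For each product I would use the submultiplicative bound $\frobnorm{AB} \leq \opnorm{A}\frobnorm{B}$ (or its transpose version), giving $\frobnorm{\OptU\DeltaV^\top} \leq \opnorm{\OptU}\frobnorm{\DeltaV}$, $\frobnorm{\DeltaU\OptV^\top} \leq \opnorm{\OptV}\frobnorm{\DeltaU}$, and $\frobnorm{\DeltaU\DeltaV^\top} \leq \opnorm{\DeltaU}\frobnorm{\DeltaV} \leq \frobnorm{\DeltaU}\frobnorm{\DeltaV}$.

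The only remaining piece is to identify $\opnorm{\OptU}$ and $\opnorm{\OptV}$. By the definition of $\OptSet(\MStar)$ in \eqref{eq:Eq_set}, we have $\OptU = \PIL^* \SVMat^{*1/2} \RotateM$ and $\OptV = \PIR^* \SVMat^{*1/2} \RotateM$ for some $\RotateM \in \RotateSet{r}$. Since $\PIL^*, \PIR^*$ have orthonormal columns and $\RotateM$ is orthonormal, the operator norm is governed entirely by $\SVMat^{*1/2}$, so $\opnorm{\OptU} = \opnorm{\OptV} = \sqrt{\sigma_1^*}$. Plugging this into the three bounds above yields
\[
\frobnorm{\PIU\PIV^\top - \OptU\OptV^\top} \leq \sqrt{\sigma_1^*}\frobnorm{\DeltaV} + \sqrt{\sigma_1^*}\frobnorm{\DeltaU} + \frobnorm{\DeltaU}\frobnorm{\DeltaV},
\]
which is exactly the claim.

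There is no real obstacle here: the lemma is a routine expansion plus triangle inequality, and the only substantive observation is that elements of the optimal equivalence class $\OptSet(\MStar)$ have operator norm $\sqrt{\sigma_1^*}$, which is immediate from the rotation-invariant parametrization $\OptU = \PIL^*\SVMat^{*1/2}\RotateM$. The result holds with no assumption on the size of $\DeltaU$, $\DeltaV$, so no case analysis is needed.
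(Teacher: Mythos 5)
Your proof is correct and follows exactly the same route as the paper's: expand $\PIU\PIV^{\top}-\OptU\OptV^{\top}$ into the three cross terms, apply the triangle inequality, and bound each product via $\frobnorm{AB}\leq\opnorm{A}\frobnorm{B}$ with $\opnorm{\OptU}=\opnorm{\OptV}=\sqrt{\sigma_1^*}$. You are slightly more explicit than the paper in justifying the last identity from the parametrization of $\OptSet(\MStar)$, but the argument is identical.
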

\begin{proof}
	We observe that $\PIU\PIV^{\top} - \OptU\OptVT = \OptU \DeltaV^{\top} + \DeltaU \OptVT + \DeltaU\DeltaV^{\top}$. Hence,
	\begin{align*}
	& \frobnorm{\PIU\PIV^{\top} - \OptU\OptVT} \leq \frobnorm{\OptU \DeltaV^{\top}}  + \frobnorm{\DeltaU \OptVT} + \frobnorm{\DeltaU\DeltaV^{\top}} \\
	& \leq \opnorm{\OptU}\frobnorm{\DeltaV} + \opnorm{\OptV}\frobnorm{\DeltaU} + \frobnorm{\DeltaU}\frobnorm{\DeltaV}.
	\end{align*}
\end{proof}
Furthermore, assuming $(\PIU, \PIV) \in \USet \times \VSet$, where $\USet$ and $\VSet$ satisfy the conditions in \eqref{eq:uvcondition}, we have the next result.
\begin{lemma} \label{lem:infity_norm_bound}
	For any $(i, j) \in [d_1]\times [d_2]$, we have
	\begin{equation} \label{eq:infity_bound}
	\abs{\entry{(\PIU \PIV^{\top} - \OptU \OptVT)}{i}{j}}\leq 3\sqrt{\frac{\mu r\sigma_1^*}{d_1}} \twonorm{\Delta_{V(j,\cdot)}} + 3\sqrt{\frac{\mu r\sigma_1^*}{d_2}} \twonorm{\Delta_{U(i,\cdot)}}
	\end{equation}
\end{lemma}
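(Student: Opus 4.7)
\textbf{Proof plan for Lemma \ref{lem:infity_norm_bound}.}
The idea is to avoid the bilinear cross term by writing the difference of outer products as a sum of two asymmetric pieces, each involving only one of $\DeltaU,\DeltaV$. Specifically, I would start from the identity
\[
\PIU\PIV^\top - \OptU\OptV^{\top} \;=\; \PIU\,\DeltaV^\top \;+\; \DeltaU\,\OptV^{\top},
\]
(obtained by writing $\PIV^\top = \OptV^{\top}+\DeltaV^{\top}$ in the first term and $\PIU = \OptU + \DeltaU$ in the second, then canceling $\OptU\OptV^{\top}$). Looking at the $(i,j)$ entry and applying Cauchy--Schwarz row by row gives
\[
\bigl|(\PIU\PIV^\top - \OptU\OptV^{\top})_{(i,j)}\bigr|
\;\le\; \twonorm{\row{\PIU}{i}}\,\twonorm{\Delta_{V(j,\cdot)}}
\;+\; \twonorm{\row{\OptV}{j}}\,\twonorm{\Delta_{U(i,\cdot)}}.
\]

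It then remains to bound the two scalar factors $\twonorm{\row{\PIU}{i}}$ and $\twonorm{\row{\OptV}{j}}$ by $3\sqrt{\mu r\sigma_1^*/d_1}$ and $3\sqrt{\mu r\sigma_1^*/d_2}$ respectively. For the first, I would use $\PIU \in \USet$, which gives $\twoinfnorm{\PIU} \le \sqrt{2\mu r/d_1}\,\opnorm{\PIU_0}$ by the definition \eqref{eq:set}; combined with the bound $\opnorm{\PIU_0} \le \sqrt{3\sigma_1^*/2}$ established at the start of the proof of Theorem~\ref{thm:convergence} (which follows from $(\PIU_0,\PIV_0)\in\Neighbor{c_2\sqrt{\sigma_r^*/\kappa}}$ and Weyl's inequality), this yields $\twonorm{\row{\PIU}{i}} \le \sqrt{3\mu r\sigma_1^*/d_1}$. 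For the second, I would use $\OptV = \PIR^*\SVMat^{*1/2}$ together with the $\mu$-incoherence assumption $\twoinfnorm{\PIR^*}\le\sqrt{\mu r/d_2}$, which gives $\twonorm{\row{\OptV}{j}} \le \twonorm{\row{\PIR^*}{j}}\cdot\opnorm{\SVMat^{*1/2}} \le \sqrt{\mu r\sigma_1^*/d_2}$.

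Since $\sqrt{3}\le 3$ and $1\le 3$, plugging these two pointwise bounds into the Cauchy--Schwarz inequality above immediately yields \eqref{eq:infity_bound}. There is no real obstacle here: the content of the lemma is just the combination of $\mu$-incoherence on the $\OptV$ side with the explicit row-norm constraint built into $\USet$ on the $\PIU$ side, and the only subtlety is remembering to use the asymmetric decomposition so that no $\DeltaU\DeltaV^{\top}$ term appears and no separate bound on that second-order piece is needed.
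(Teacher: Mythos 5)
Your proof is correct, but it takes a slightly different route from the paper's. The paper expands the difference symmetrically as $\OptU\DeltaV^{\top}+\DeltaU\OptV^{\top}+\DeltaU\DeltaV^{\top}$, bounds the first two entries using the incoherence of $\OptU$ and $\OptV$ (coefficient $1$ each), and then must handle the quadratic cross term $\inprod{\Delta_{U(i,\cdot)}}{\Delta_{V(j,\cdot)}}$ separately, splitting it as $\tfrac12\twoinfnorm{\DeltaU}\twonorm{\Delta_{V(j,\cdot)}}+\tfrac12\twoinfnorm{\DeltaV}\twonorm{\Delta_{U(i,\cdot)}}$ and invoking $\twoinfnorm{\DeltaU}\le\twoinfnorm{\OptU}+\twoinfnorm{\PIU}\le 3\sqrt{\mu r\sigma_1^*/d_1}$ (and similarly for $\DeltaV$); this yields the constant $1+\tfrac32<3$. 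You instead use the asymmetric identity $\PIU\PIV^{\top}-\OptU\OptV^{\top}=\PIU\DeltaV^{\top}+\DeltaU\OptV^{\top}$, which eliminates the second-order piece entirely, at the price of having to bound $\twonorm{\row{\PIU}{i}}$ via the constraint set $\USet$ and the spectral bound $\opnorm{\PIU_0}\le\sqrt{3\sigma_1^*/2}$ rather than by incoherence of $\OptU$ alone. Both arguments rest on exactly the same hypotheses (the standing assumption \eqref{eq:uvcondition} together with incoherence of $(\OptU,\OptV)$), and your version is marginally cleaner and gives slightly sharper, though asymmetric, constants ($\sqrt3$ and $1$ in place of $\tfrac52$ and $\tfrac52$), both dominated by $3$. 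One cosmetic point: in the paper the lemma is applied to $(\UniqueU,\UniqueV)$, which differs from $(\OptU,\OptV)$ by a right rotation; since rotations preserve row norms and operator norms, neither argument is affected, but it is worth noting when you invoke the incoherence of $\OptV$.
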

\begin{proof}
	We observe that
	\begin{align*}
	& \abs{\entry{(\PIU \PIV^{\top} - \MStar)}{i}{j}} \leq \abs{\inprod{\row{\OptU}{i}}{\Delta_{V(j,\cdot)}}} + \abs{\inprod{\row{\OptV}{j}}{\Delta_{U(i,\cdot)}}} +\abs{\inprod{\Delta_{U(i,\cdot)}}{\Delta_{V(j,\cdot)}}} \\
	& \leq \sqrt{\frac{\mu r\sigma_1^*}{d_1}} \twonorm{\Delta_{V(j,\cdot)}} + \sqrt{\frac{\mu r\sigma_1^*}{d_2}} \twonorm{\Delta_{U(i,\cdot)}} + \frac{1}{2}\twoinfnorm{\DeltaU}\twonorm{\Delta_{V(j,\cdot)}} + \frac{1}{2}\twoinfnorm{\DeltaV}\twonorm{\Delta_{U(i,\cdot)}}.
	\end{align*}
	By noticing that 
	\[
	\twoinfnorm{\DeltaU} \leq \twoinfnorm{\OptU} + \twoinfnorm{\PIU} \leq 3\sqrt{\frac{\mu r \sigma_1^*}{d_1}}, ~~ \twoinfnorm{\DeltaV} \leq \twoinfnorm{\OptV} + \twoinfnorm{\PIV} \leq 3\sqrt{\frac{\mu r \sigma_1^*}{d_2}},
	\]
	we complete the proof.
\end{proof}

Lemma \ref{lem:infity_norm_bound} can be used to prove the following result.
\begin{lemma} \label{lem:projection}
	For any $\alpha  \in [0,1]$, suppose $\Omega \subseteq [d_1] \times [d_2]$ satisfies $\abs{\row{\Omega}{i}} \leq \alpha d_2$ for all $i \in [d_1]$ and $\abs{\column{\Omega}{j}} \leq \alpha d_1$ for all $j \in [d_2]$. Then we have
	\[
	\frobnorm{\Proj_{\Omega}(\PIU\PIV^{\top} - \OptU\OptVT)}^2 \leq 18\alpha \mu r  \sigma_1^*(\frobnorm{\DeltaV}^2 + \frobnorm{\DeltaU}^2).
	\]
\end{lemma}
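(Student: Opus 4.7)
The plan is to prove this by a direct pointwise-to-Frobenius argument, leveraging the entrywise bound in Lemma~\ref{lem:infity_norm_bound} together with the row/column sparsity assumed of $\Omega$. There is no real obstacle here; the key is a careful bookkeeping of the double sum.

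First, I would apply Lemma~\ref{lem:infity_norm_bound} to obtain, for every $(i,j)\in\Omega$,
\[
\abs{\entry{(\PIU \PIV^{\top} - \OptU \OptVT)}{i}{j}}^2 \le \left(3\sqrt{\tfrac{\mu r\sigma_1^*}{d_1}}\twonorm{\Delta_{V(j,\cdot)}} + 3\sqrt{\tfrac{\mu r\sigma_1^*}{d_2}}\twonorm{\Delta_{U(i,\cdot)}}\right)^{\!2} \le \frac{18\mu r \sigma_1^*}{d_1}\twonorm{\Delta_{V(j,\cdot)}}^2 + \frac{18\mu r \sigma_1^*}{d_2}\twonorm{\Delta_{U(i,\cdot)}}^2,
\]
where the last step uses $(a+b)^2 \le 2a^2+2b^2$.

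Next, I would sum over all $(i,j)\in\Omega$ and exploit the row/column constraints on $\Omega$. Grouping the first term by columns gives
\[
\sum_{(i,j)\in\Omega}\twonorm{\Delta_{V(j,\cdot)}}^2 = \sum_{j\in[d_2]}\abs{\column{\Omega}{j}}\cdot\twonorm{\Delta_{V(j,\cdot)}}^2 \le \alpha d_1 \sum_{j\in[d_2]}\twonorm{\Delta_{V(j,\cdot)}}^2 = \alpha d_1\,\frobnorm{\DeltaV}^2,
\]
and similarly, grouping the second term by rows,
\[
\sum_{(i,j)\in\Omega}\twonorm{\Delta_{U(i,\cdot)}}^2 = \sum_{i\in[d_1]}\abs{\row{\Omega}{i}}\cdot\twonorm{\Delta_{U(i,\cdot)}}^2 \le \alpha d_2\,\frobnorm{\DeltaU}^2.
\]

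Combining these two bounds, the $d_1$ and $d_2$ factors cancel cleanly and I obtain
\[
\frobnorm{\Proj_{\Omega}(\PIU\PIV^{\top} - \OptU\OptVT)}^2 \le 18\alpha\mu r\sigma_1^*\left(\frobnorm{\DeltaV}^2 + \frobnorm{\DeltaU}^2\right),
\]
which is the claim. The only conceptual ingredient beyond Lemma~\ref{lem:infity_norm_bound} is the standard ``swap the order of summation'' trick that turns a sum over the sparse index set $\Omega$ into a weighted row/column sum, where the weights are bounded by $\alpha d_1$ or $\alpha d_2$ via the hypothesis on $\Omega$.
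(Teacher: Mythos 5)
Your proof is correct and follows essentially the same route as the paper: apply Lemma~\ref{lem:infity_norm_bound} entrywise, square via $(a+b)^2 \le 2a^2+2b^2$ to get the factor $18$, and then swap the order of summation so the row/column cardinality bounds on $\Omega$ cancel the $d_1$ and $d_2$ denominators. No gaps.
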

\begin{proof}
	Using Lemma \ref{lem:infity_norm_bound} for bounding each entry of $\PIU\PIV^{\top} - \OptU\OptVT$, we have that
	\begin{align*}
	& \frobnorm{\Proj_{\Omega}(\PIU\PIV^{\top} - \OptU\OptVT)}^2 \leq \sum_{(i,j) \in \Omega} \abs{\entry{(\PIU\PIV^{\top} - \OptU\OptVT)}{i}{j}}^2 \\
	& \leq \sum_{(i,j) \in \Omega} \frac{18\mu r \sigma_1^*}{d_1}\twonorm{\Delta_{V(j,\cdot)}}^2 + \frac{18\mu r \sigma_1^*}{d_2}\twonorm{\Delta_{U(i,\cdot)}}^2\\
	& \leq \sum_{j} \sum_{i \in \column{\Omega}{j}} \frac{18\mu r \sigma_1^*}{d_1}\twonorm{\Delta_{V(j,\cdot)}}^2 + \sum_{i} \sum_{j \in \row{\Omega}{i}} \frac{18\mu r \sigma_1^*}{d_2}\twonorm{\Delta_{U(i,\cdot)}}^2 \\
	& \leq 18\alpha \mu r  \sigma_1^*(\frobnorm{\DeltaV}^2 + \frobnorm{\DeltaU}^2).
	\end{align*}
\end{proof}

Denote the $i$-th largest singular value of matrix $M$ by $\sigma_i(M)$.
\begin{lemma}[Lemma 5.14 in \cite{tu2015low}] \label{lem:tu_lemma}
	Let $M_1, M_2 \in \real^{d_1 \times d_2}$ be two rank $r$ matrices. Suppose they have SVDs $M_1 = \PIL_1\Sigma_1\PIR_1^{\top}$ and $M_2 = \PIL_2\Sigma_2\PIR_2^{\top}$. Suppose $\opnorm{M_1 - M_2} \leq \frac{1}{2}\sigma_r(M_1)$. Then we have
	\[
	d^2(\PIL_2\Sigma_2^{1/2}, \PIR_2\Sigma_2^{1/2}; \PIL_1\Sigma_1^{1/2}, \PIR_1\Sigma_1^{1/2}) \leq \frac{2}{\sqrt{2} - 1}\frac{\frobnorm{M_2 - M_1}^2}{\sigma_r(M_1)}.
	\]
\end{lemma}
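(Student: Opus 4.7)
The plan is to lift the asymmetric factorization to a symmetric positive-semidefinite setting and then invoke a Procrustes-type perturbation bound. Define the stacked factors
\[
F_i := \begin{pmatrix} \PIL_i \Sigma_i^{1/2} \\ \PIR_i \Sigma_i^{1/2} \end{pmatrix} \in \real^{(d_1+d_2)\times r}, \qquad i = 1,2.
\]
A direct calculation gives $F_i^{\top}F_i = 2\Sigma_i$, so each $F_i$ has rank exactly $r$ with $\sigma_r(F_i)^2 = 2\sigma_r(M_i)$. By the definition of the error metric in \eqref{eq:error_metric}, the squared distance on the left-hand side of the lemma equals $\min_{\RotateM \in \RotateSet{r}} \frobnorm{F_2 - F_1 \RotateM}^2$, so the problem reduces to bounding the best orthogonal alignment between the two rank-$r$ PSD factorizations $F_1, F_2$ in terms of $\frobnorm{M_2 - M_1}$.

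The key ingredient is a general Procrustes inequality: for any $X, Y \in \real^{n \times r}$ with $\sigma_r(X) > 0$,
\[
\min_{\RotateM \in \RotateSet{r}} \frobnorm{Y - X \RotateM}^2 \;\leq\; \frac{1}{2(\sqrt{2}-1)\,\sigma_r(X)^2}\,\frobnorm{YY^{\top} - XX^{\top}}^2.
\]
I would prove this by choosing $\RotateM$ as the polar factor of $X^{\top}Y$, so that $\frobnorm{Y - X\RotateM}^2 = \frobnorm{X}^2 + \frobnorm{Y}^2 - 2\sum_{j=1}^{r}\sigma_j(X^{\top}Y)$, and then compare the singular values of $X^{\top}Y$ to the eigenvalues of $XX^{\top}$ and $YY^{\top}$; the constant $1/[2(\sqrt{2}-1)]$ arises from optimizing a scalar quadratic inequality across the paired singular values. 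The spectral gap hypothesis $\opnorm{M_2 - M_1} \leq \sigma_r(M_1)/2$ enters via Weyl's inequality to guarantee $\sigma_r(F_2)^2 \geq \sigma_r(M_1) > 0$, so both factors have full column rank and the polar alignment is well defined.

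It remains to control $\frobnorm{F_2 F_2^{\top} - F_1 F_1^{\top}}^2$ by a constant multiple of $\frobnorm{M_2 - M_1}^2$. The $2\times 2$ block structure of $F_i F_i^{\top}$ gives
\[
\frobnorm{F_2 F_2^{\top} - F_1 F_1^{\top}}^2 = \frobnorm{\PIL_2 \Sigma_2 \PIL_2^{\top} - \PIL_1 \Sigma_1 \PIL_1^{\top}}^2 + 2\frobnorm{M_2 - M_1}^2 + \frobnorm{\PIR_2 \Sigma_2 \PIR_2^{\top} - \PIR_1 \Sigma_1 \PIR_1^{\top}}^2.
\]
Setting $A := \PIL_2^{\top}\PIL_1$ and $B := \PIR_2^{\top}\PIR_1$ and expanding each Frobenius norm as a trace using $\frobnorm{\PIL_i \Sigma_i \PIL_i^{\top}}^2 = \frobnorm{\Sigma_i}^2$, a short computation produces the algebraic identity
\[
\frobnorm{\PIL_2 \Sigma_2 \PIL_2^{\top} - \PIL_1 \Sigma_1 \PIL_1^{\top}}^2 + \frobnorm{\PIR_2 \Sigma_2 \PIR_2^{\top} - \PIR_1 \Sigma_1 \PIR_1^{\top}}^2 - 2\frobnorm{M_2 - M_1}^2 = -2\,\Trace\!\bigl((A-B)\Sigma_1(A-B)^{\top}\Sigma_2\bigr),
\]
whose right-hand side equals $-2\frobnorm{\Sigma_2^{1/2}(A-B)\Sigma_1^{1/2}}^2 \leq 0$ because $\Sigma_1, \Sigma_2 \succeq 0$. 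Hence the diagonal-block discrepancies are at most $2\frobnorm{M_2 - M_1}^2$, giving $\frobnorm{F_2 F_2^{\top} - F_1 F_1^{\top}}^2 \leq 4\,\frobnorm{M_2 - M_1}^2$. Combining this with the Procrustes inequality and $\sigma_r(F_1)^2 = 2\sigma_r(M_1)$ yields the bound claimed in the lemma, up to the absolute constant $2/(\sqrt{2}-1)$.

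The main obstacle I expect is establishing the Procrustes inequality in the second step: while the polar-factor construction and the reduction to a singular-value sum are standard, extracting the sharp constant $1/[2(\sqrt{2}-1)]$ without extraneous spectral-gap terms requires a careful case split depending on whether the paired singular values of $X$ and $Y$ are comparable, and this is precisely where the close-enough hypothesis $\opnorm{M_2 - M_1} \leq \sigma_r(M_1)/2$ plays its role. By contrast, the block expansion and the trace identity in the final step are short algebraic manipulations, and the reduction in the first step is essentially bookkeeping.
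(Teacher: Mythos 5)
The paper never proves this lemma---it is imported verbatim from \cite{tu2015low}---so the relevant comparison is with the proof in that reference, and your architecture is essentially the same as theirs: lift to the stacked factors $F_i$, apply a symmetric Procrustes bound of the form $\min_{\RotateM \in \RotateSet{r}}\frobnorm{F_2 - F_1\RotateM}^2 \leq \frobnorm{F_2F_2^{\top}-F_1F_1^{\top}}^2 / \bigl(2(\sqrt{2}-1)\sigma_r^2(F_1)\bigr)$ (Lemma 5.4 of the same reference), and control the block difference. The parts you actually carry out check out: the reduction of the left-hand side to $\min_{\RotateM}\frobnorm{F_2-F_1\RotateM}^2$ is correct, $F_1^{\top}F_1 = 2\Sigma_1$ gives $\sigma_r^2(F_1)=2\sigma_r(M_1)$, and I verified your trace identity for the diagonal blocks, which yields $\frobnorm{F_2F_2^{\top}-F_1F_1^{\top}}^2 \leq 4\frobnorm{M_2-M_1}^2$ by pure algebra, with no closeness assumption. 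Combined, this even gives the constant $1/(\sqrt{2}-1)$, stronger than the stated $2/(\sqrt{2}-1)$; your block argument is arguably cleaner than a square-root perturbation bound.

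The genuine gap is the Procrustes inequality itself, which carries all of the technical weight and which you assert rather than prove; moreover, the sketch you offer for it points in the wrong direction. That inequality holds for arbitrary $X,Y$ with $\sigma_r(X)>0$ and requires no closeness hypothesis, so the condition $\opnorm{M_2-M_1}\leq\frac{1}{2}\sigma_r(M_1)$ is not ``precisely where'' the constant comes from, and a case split on whether paired singular values of $X$ and $Y$ are comparable is not how the known proofs proceed. The standard argument takes the optimal $\RotateM$, sets $\Delta := Y - X\RotateM$, uses the fact that $\Delta^{\top}X\RotateM$ is symmetric at the optimum (exactly the content of Lemma \ref{lem:symm} in this paper, and the same device used in the proof of Lemma \ref{lem:local_descent_g}), expands $YY^{\top}-XX^{\top}=\Delta(X\RotateM)^{\top}+X\RotateM\Delta^{\top}+\Delta\Delta^{\top}$, and lower-bounds the Frobenius norm of this sum; the constant $2(\sqrt{2}-1)$ is sharp, as seen already for $r=1$ with $Y\perp X$ and $\twonorm{Y}^2=(\sqrt{2}-1)\twonorm{X}^2$. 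Until that step is supplied, the proposal is an outline of the right proof rather than a proof.
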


\section{Parameter Settings for FB Separation Experiments} \label{sec:fb_more}

We approximate the FB separation problem by the RPCA framework with $r = 10$, $\alpha = 0.2$, $\mu = 10$. Our algorithmic parameters are set as $\gamma = 1$, $\eta = 1/(2 \hat{\sigma}_1^*)$, where $\hat{\sigma}_1^*$ is an estimate of $\sigma_1^*$ obtained from the initial SVD. The parameters of AltProj are kept as provided in the default setting. For IALM, we use the tradeoff paramter $\lambda = 1/\sqrt{d_1}$, where $d_1$ is the number of pixels in each frame (the number of rows in $Y$).

Note that both IALM and AltProj use the stopping criterion 
$$\frobnorm{Y-M_{t}-S_{t}} / \frobnorm{Y} \le 10^{-3}.$$
Our algorithm for the partial observation setting never explicitly forms the $ d_1 $-by-$d_2$ matrix $ M_{t} =\PIU_{t}\PIV_{t} ^\top$, which is favored in large scale problems, but also renders the above criterion inapplicable. Instead, we use the following stopping criterion
\[
\frac{\frobnorm{\PIU_{t+1} - \PIU_t}^2 + \frobnorm{\PIV_{t+1} - \PIV_t}^2}{\frobnorm{\PIU_t}^2 + \frobnorm{\PIV_t}^2} \le 4 \times 10^{-4}.
\]
This rule checks whether the iterates corresponding to low-rank factors become stable.  In fact, our stopping criterion seems more natural and practical because in most real applications, matrix $Y$ cannot be strictly decomposed into low-rank $M$ and sparse $S$ that satisfy $Y = M+S$. Instead of forcing $M + S$ to be close to $Y$, our rule relies on seeking a robust subspace that captures the most variance of $Y$.

\end{document}